\documentclass[acmsmall]{acmart}

\usepackage{makecell}

\usepackage[T1]{fontenc}
\usepackage[utf8]{inputenc}
\usepackage{terms} 
\usepackage{graphicx}
\usepackage{import} 
\usepackage{lipsum} 
\usepackage{hyperref}

\usepackage{amssymb}
\usepackage{listings} 
\usepackage{csquotes} 
\usepackage[title]{appendix}
\usepackage{parcolumns} 
\usepackage{subcaption}
\usepackage{pmboxdraw} 
\usepackage{multicol} 
\usepackage{wrapfig} 
\usepackage[noend]{algpseudocode} 
\usepackage{algorithm} 
\usepackage{tikz} 
\usepackage[]{todonotes} 
\usepackage{framed} 
\usepackage{proof} 
\usepackage{color}
\usepackage[nounderscore]{syntax} 
\usepackage{mathpartir} 
\usepackage[super]{nth}
\usepackage{tabularx}
\usepackage{float}
\usepackage{textcomp}
\usepackage{mathtools}
\usepackage{wrapfig}
\usepackage{pifont}
\usepackage[frozencache, cachedir=minted-cache]{minted2}
\usepackage{pgf}
\usepackage{newunicodechar}
\usepackage[inline]{enumitem}
\usepackage{threeparttable}

\keywords{Neural Networks, Interactive Theorem Provers, Rocq, Agda, Isabelle/HOL}

\newunicodechar{ℙ}{\ensuremath{\mathbb{P}}}
\newunicodechar{ℝ}{\ensuremath{\mathbb{R}}}
\newunicodechar{∀}{\ensuremath{\forall}}
\newunicodechar{⋀}{\ensuremath{\wedge}}
\newunicodechar{≤}{\ensuremath{\leq}}
\newunicodechar{ᵇ}{\ensuremath{^b}}
\newunicodechar{…}{\ensuremath{\cdots}}

\newunicodechar{₁}{\ensuremath{_1}}
\newunicodechar{₂}{\ensuremath{_2}}
\newunicodechar{₃}{\ensuremath{_3}}
\newunicodechar{₄}{\ensuremath{_4}}
\newunicodechar{₅}{\ensuremath{_5}}
\newunicodechar{ι}{\ensuremath{\iota}}
\newunicodechar{ₗ}{\ensuremath{_l}}
\newunicodechar{ₐ}{\ensuremath{_a}}
\newunicodechar{ₖ}{\ensuremath{_k}}
\newunicodechar{ᵢ}{\ensuremath{_i}}
\newunicodechar{ⱼ}{\ensuremath{_j}}
\newunicodechar{ᵥ}{\ensuremath{_v}}
\newunicodechar{ₕ}{\ensuremath{_h}}

\newunicodechar{ᵣ}{\ensuremath{_r}}
\newunicodechar{ʳ}{\ensuremath{^r}}
\newunicodechar{ˡ}{\ensuremath{^l}}
\newunicodechar{ℕ}{\ensuremath{\mathbb{N}}}
\newunicodechar{∀}{\ensuremath{\forall}}
\newunicodechar{≡}{\ensuremath{\equiv}}
\newunicodechar{≈}{\ensuremath{\approx}}
\newunicodechar{∎}{\ensuremath{\blacksquare}}
\newunicodechar{⊛}{\ensuremath{\circledast}}
\newunicodechar{⊗}{\ensuremath{\otimes}}
\newunicodechar{⊕}{\ensuremath{\oplus}}
\newunicodechar{φ}{\ensuremath{\phi}}
\newunicodechar{ψ}{\ensuremath{\psi}}
\newunicodechar{ε}{\ensuremath{\epsilon}}
\newunicodechar{λ}{\ensuremath{\lambda}}
\newunicodechar{σ}{\ensuremath{\sigma}}
\newunicodechar{′}{{'}}
\newunicodechar{∷}{\ensuremath{\dblcolon}}
\newunicodechar{↔}{\ensuremath{\leftrightarrow}}
\newunicodechar{↦}{\ensuremath{\mapsto}}
\newunicodechar{∘}{\ensuremath{\circ}}
\newunicodechar{⁻}{\ensuremath{^{-}}}
\newunicodechar{∙}{\ensuremath{\boldsymbol{\cdot}}}
\newunicodechar{▹}{\ensuremath{\triangleright}}
\newunicodechar{Σ}{\ensuremath{\Sigma}}
\newunicodechar{∈}{\ensuremath{\in}}
\newunicodechar{⟦}{\ensuremath{\llbracket}}
\newunicodechar{⟧}{\ensuremath{\rrbracket}}
\newunicodechar{⇒}{\ensuremath{\Rightarrow}}
\newunicodechar{⟪}{\ensuremath{\xlbrace}}
\newunicodechar{⟫}{\ensuremath{\xrbrace}}
\newunicodechar{⊠}{\ensuremath{\boxtimes}}
\newunicodechar{⊞}{\ensuremath{\boxplus}}
\newunicodechar{∋}{\ensuremath{\ni}}
\newunicodechar{∶}{\ensuremath{\bm{:}}}
\newunicodechar{↦}{\ensuremath{\mapsto}}
\newunicodechar{ℂ}{\ensuremath{\mathbb{C}}} 

\newunicodechar{ʹ}{\ensuremath{\prime}}
\newunicodechar{≢}{\ensuremath{\not\equiv}}
\newunicodechar{ω}{\ensuremath{\omega}}
\newunicodechar{ₙ}{\ensuremath{_n}}
\newunicodechar{₀}{\ensuremath{_0}}
\newunicodechar{⋆}{\ensuremath{\cdot}}
\newunicodechar{♯}{\ensuremath{\sharp}}
\newunicodechar{♭}{\ensuremath{\flat}}
\newunicodechar{ₛ}{\ensuremath{_s}}
\newunicodechar{≅}{\ensuremath{\cong}}
\newunicodechar{ᵗ}{\ensuremath{^t}}
\newunicodechar{⊡}{\ensuremath{\boxdot}}
\newunicodechar{∔}{\ensuremath{^+}}
\newunicodechar{ℝ}{\ensuremath{\mathbb{R}}}
\newunicodechar{⊤}{\ensuremath{\top}}
\newunicodechar{ℤ}{\ensuremath{\mathbb{Z}}}
\newunicodechar{𝕋}{\ensuremath{\mathbb{T}}}
\newunicodechar{⊖}{\ensuremath{\ominus}}

\usetikzlibrary{positioning, fit, calc, shapes.geometric, arrows.meta, backgrounds}

\pgfdeclarelayer{background}
\pgfsetlayers{background,main}

\newcommand{\sgrammar}[1]{\todo[color=orange!40]{Sentence OK?}}
\newcommand{\listingsize}{\small}
\newcommand{\correction}[1]{#1}

\renewcommand{\vehicle}{\textsc{Vehicle}\xspace} 
\newcommand{\keymaera}{KeYmaera~X}

\newcommand{\network}{\mathcal{N}}
\newcommand{\spec}{\mathcal{S}}
\newcommand{\world}{\mathcal{W}}

\newcommand{\cmark}{\textcolor{green}{\ding{51}}}
\newcommand{\xmark}{\textcolor{red}{\ding{55}}}

\DeclareUnicodeCharacter{211A}{$\mathbb{Q}$}
\DeclareUnicodeCharacter{2200}{$\forall$}

\lstdefinelanguage{agda}{
  morekeywords={data,where,module,import,open,record,field,constructor,infixl,infixr,infix,Set,forall},
  sensitive=true,
  morecomment=[l]{--},
  morecomment=[s]{\{-}{-\}},
  morestring=[b]"
}

\lstdefinelanguage{coq}{
  morekeywords={Definition,Lemma,Theorem,Proof,Qed,Require,Import,Notation,Context,Variant,Record},
  sensitive=true,
  morecomment=[s]{(*}{*)},
  morestring=[b]"
}

\lstdefinelanguage{example}{
  basicstyle=\small\ttfamily\color{VioletRed},
  moredelim=[l][\color{blue}]{: },
  moredelim=[l][\color{black}]{:=},
  moredelim=[s][\color{blue}]{new}{\ },
  morekeywords=[1]{State, Observation, controller, },
  keywordstyle=[1]\bfseries\color{blue},
  morekeywords=[2]{o, s},
  keywordstyle=[2]\itshape\color{black},
  morekeywords=[3]{where},
  keywordstyle=[3]\color{YellowOrange},
  literate={:=}{{\textcolor{OliveGreen}{:=}}}2 {,}{{\textcolor{OliveGreen}{,}}}1
  {\{}{{\textcolor{OliveGreen}{\{}}}1 {\}}{{\textcolor{OliveGreen}{\}}}}1
  {->}{{\textcolor{black}{\(\rightarrow\)}}}1 {=}{{\textcolor{OliveGreen}{=}}}1
  {+}{{\textcolor{OliveGreen}{+}}}1
}

\colorlet{isaTypeVarColor}{blue}

\newcommand{\aref}[1]{\hyperref[#1]{Appendix~\ref*{#1}}}

\setcopyright{cc}
\setcctype{by}
\acmDOI{10.1145/3828680}
\acmYear{2026}
\acmJournal{PACMPL}
\acmVolume{10}
\acmNumber{ICFP}
\acmArticle{282}
\acmMonth{8}
\acmSubmissionID{icfp26main-p34-p}
\received{2026-02-25}
\received[accepted]{2026-05-13}

\begin{document}

\title[Compositional Neural-Cyber-Physical System Verification in the ITP of Your Choice]{Compositional Neural-Cyber-Physical System Verification in the Interactive Theorem Prover of Your Choice}

\author{Matthew L. Daggitt}
\orcid{0000-0002-2552-3671}
\affiliation{
    \institution{University of Western Australia}
    \city{Perth}
    \country{Australia}
}
\email{matthew.daggitt@uwa.edu.au}

\author{Ekaterina Komendantskaya}
\orcid{0000-0002-3240-0987}
\affiliation{
    \institution{Heriot-Watt University}
    \city{Edinburgh}
    \country{UK}
}
\affiliation{
\institution{University of Southampton}
\city{Southampton}
\country{UK}
}

\author{Alistair Sirman}
\orcid{0009-0009-3342-089X}
\affiliation{
\institution{University of Southampton}
\city{Southampton}
\country{UK}
}

\author{Alessandro Bruni}
\orcid{0000-0003-2946-9462}
\affiliation{%
   \institution{IT University of Copenhagen}
   \city{Copenhagen}
   \country{Denmark}
}

\author{Samuel Teuber}
\orcid{0000-0001-7945-9110}
\affiliation{
    \institution{Karlsruhe Institute of Technology}
    \city{Karlsruhe}
    \country{Germany}
}

\author{Josh Smart}
\orcid{0009-0004-9452-1911}
\affiliation{
    \institution{University of Southampton}
    \city{Southampton}
    \country{UK}
}

\author{Grant Passmore}
\orcid{0009-0000-5335-1056}
\affiliation{
\institution{Imandra Inc.}
\city{Austin}
\state{Texas}
\country{USA}
}
\begin{CCSXML}
<ccs2012>
   <concept>
       <concept_id>10011007.10011006.10011039.10011040</concept_id>
       <concept_desc>Software and its engineering~Syntax</concept_desc>
       <concept_significance>500</concept_significance>
       </concept>
   <concept>
       <concept_id>10011007.10011006.10011039.10011311</concept_id>
       <concept_desc>Software and its engineering~Semantics</concept_desc>
       <concept_significance>500</concept_significance>
       </concept>
   <concept>
       <concept_id>10011007.10011006.10011050.10011017</concept_id>
       <concept_desc>Software and its engineering~Domain specific languages</concept_desc>
       <concept_significance>500</concept_significance>
       </concept>
   <concept>
       <concept_id>10003752.10003790.10002990</concept_id>
       <concept_desc>Theory of computation~Logic and verification</concept_desc>
       <concept_significance>500</concept_significance>
       </concept>
   <concept>
       <concept_id>10003752.10003790.10011740</concept_id>
       <concept_desc>Theory of computation~Type theory</concept_desc>
       <concept_significance>500</concept_significance>
       </concept>
   <concept>
       <concept_id>10003752.10003790.10003794</concept_id>
       <concept_desc>Theory of computation~Automated reasoning</concept_desc>
       <concept_significance>500</concept_significance>
       </concept>
   <concept>
       <concept_id>10003752.10010124.10010138.10010142</concept_id>
       <concept_desc>Theory of computation~Program verification</concept_desc>
       <concept_significance>500</concept_significance>
       </concept>
 </ccs2012>
\end{CCSXML}

\ccsdesc[500]{Software and its engineering~Syntax}
\ccsdesc[500]{Software and its engineering~Semantics}
\ccsdesc[500]{Software and its engineering~Domain specific languages}
\ccsdesc[500]{Theory of computation~Logic and verification}
\ccsdesc[500]{Theory of computation~Type theory}
\ccsdesc[500]{Theory of computation~Automated reasoning}
\ccsdesc[500]{Theory of computation~Program verification}

\begin{abstract}
Formal verification of neuro-symbolic cyber-physical systems, such as drones,
medical devices and robots, is complicated.
Neural components must be trained to be optimal with respect to the available data as well as the safety specifications, and then verified using specialised solvers.
Symbolic models of the ``cyber'' and ``physical'' behaviour of the system must
be constructed and verified in interactive theorem provers (ITPs), often
requiring mature mathematical libraries to reason about the interplay of
discrete and continuous dynamics, preferably obtaining infinite time-horizon
guarantees. Finally, the results of the two already challenging verification
tasks need to be integrated into a single proof. 
In this paper we present a compositional methodology for constructing such proofs.  
The \vehicle framework provides a functional, domain-specific language for specifying, training, and verifying neural components. We extend \vehicle{} to allow integration with any ITP with minimal effort, thereby bridging the gap between the neural and symbolic proofs.
First, we describe how \vehicle{}’s standard bidirectional type checker can be reused to transpile neural specifications into an intermediate representation targeting multiple theorem provers.
Second, we integrate \vehicle with Rocq, Isabelle/HOL,
Agda and the industrial prover Imandra; and showcase a
generic infinite time-horizon safety proof of a discrete cyber-physical system with a neural network controller in each ITP. Finally, to put the idea of \emph{compositional neural-cyber-physical system verification} to the test, we use the Mathematical Components libraries in Rocq to verify infinite time-horizon safety of a medical device, modelled as a \emph{continuous} cyber-physical system with a neural controller. To our knowledge, this is the first result of this kind in a general purpose ITP; and a result that was only feasible thanks to the compositionality provided by \vehicle{}'s functional interface.  
\end{abstract}

\maketitle
\renewcommand{\shortauthors}{M. Daggitt et al.}

\section{Introduction}
\label{section:introduction}

\looseness=-1
Verification of cyber-physical systems (CPS) is a well-established research area that combines automated verification, reachability analysis, and differential equation solving~\cite{ARCH25}. 
There are several mature tools such as CORA~\cite{Althoff2015ARCH}, JuliaReach~\cite{10.1145/3302504.3311804} and \keymaera~\cite{FultonMQVP15}, and their progress is evaluated annually in the ARCHCOMP competition~\cite{ARCH25}. 
In parallel, the Interactive Theorem Prover (ITP) community has also shown
significant interest in verifying such systems, e.g. in
Isabelle/HOL~\cite{MuniveFGSLH24,MuniveS22,FosterMGS21,BohrerRVVP17}, Rocq~\cite{GeuversKSW10,OuchaniKH20,Ricketts17,BohrerRVVP17}, PVS~\cite{WhiteTSM24} or ACL2~\cite{boyer1990use}.
Generally, a CPS \correction{may be} formalised and verified in a general-purpose ITP directly (examples in this paper follow this route), or reasoned about via domain-specific languages embedded in the ITP: e.g. \emph{differentiable dynamic logic} (underlying  \keymaera) has been implemented in PVS~\cite{WhiteTSM24}, Isabelle/HOL~\cite{BohrerRVVP17,FosterMGS21} and Rocq~\cite{BohrerRVVP17}.  

\looseness=-1
A new generation of cyber-physical systems deploys neural network
components~\cite{JulianACASXuDNN,DBLP:conf/eann/Lopez-MiguelAGV23}
, leveraging the ability of neural networks to learn complex control strategies while remaining lightweight enough for real-time deployment. 
This has proven effective in a wide range of applications~\cite{ARCH25}, from CERN cooling towers~\cite{DBLP:conf/eann/Lopez-MiguelAGV23} to pilot advisory systems~\cite{JulianACASXuDNN}. 
In this paper we refer to such systems as \emph{neural-cyber-physical} systems (NCPS). Verifying the safety of these systems has been recognised as an important challenge in cyber-physical system verification~\cite{ARCH_COMP19_Category_Report,CordeiroDGIJKKLMSW25}.
As illustrative example, consider the classic cyber-physical verification problem of \citet{boyer1990use} where a car travels along a straight road while subject to cross-wind.
The car's sensors provide noisy measurements of its environment, and in the
original paper a symbolic controller is used to update its steering and velocity.
The original safety property proven by \citet{boyer1990use} states:

\begin{wrapfigure}[9]{r}{0.45\textwidth} 
    \vspace{-2em}
    \centering
    \includegraphics[width=.4\textwidth]{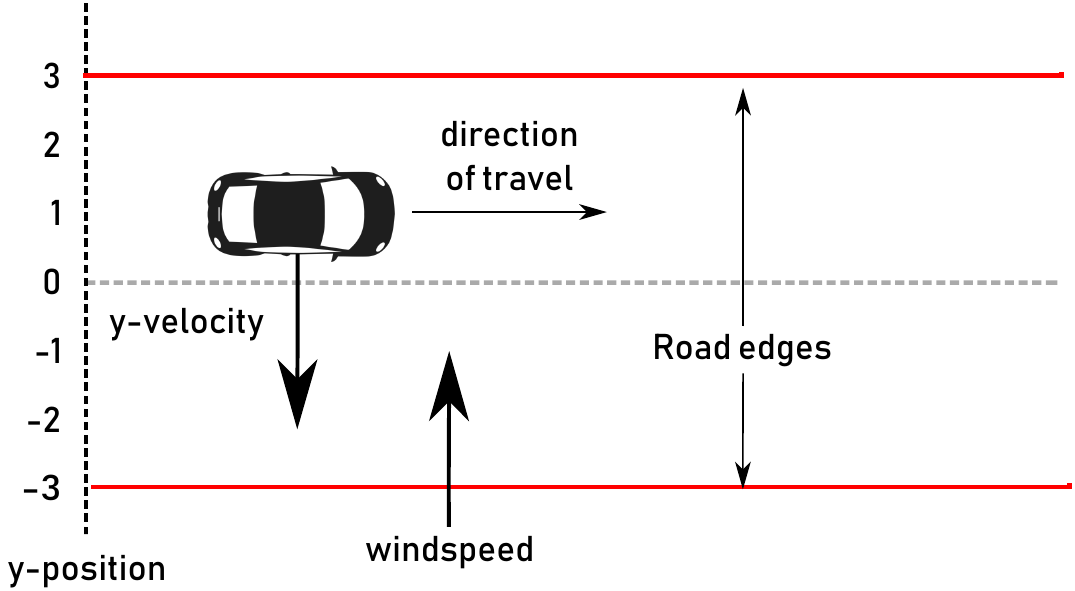}
    \vspace{-1em}
    \caption{Our running example of NCPS: A model of an autonomous car compensating for crosswinds.}
    \vspace{1em}
    \label{fig:wind-controller}
\end{wrapfigure}
\begin{theorem}\label{quote:theorem-1}
    Given that the wind can shift by no more than 1 per unit time, and the
    sensor is never off by more than 0.25 units, the car will never leave the
    road.
\end{theorem}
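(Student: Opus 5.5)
The statement is informal, so the first step is to fix a discrete-time model in the style of \citet{boyer1990use}. The state at step $n$ consists of the wind $w_n$, the car's lateral position $y_n$ relative to the road centre, and the velocity $v_n$ the controller has set. The assumptions are:
\begin{itemize}
\item the wind satisfies $w_{n+1} = w_n + \delta_n$ with $|\delta_n| \le 1$;
\item the car moves according to $y_{n+1} = y_n + v_n + w_{n+1}$;
\item the sensor reports $s_n = y_n + \varepsilon_n$ with $|\varepsilon_n| \le 0.25$;
\item the controller computes $v_{n+1}$ from the current and previous sensor readings.
\end{itemize}
In the original work this controller is $v_{n+1} = v_n - 3 s_{n+1} + 2 s_n$. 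For a neural controller $f$, I would assume only that $f$ stays within a small tolerance of that symbolic law on the relevant bounded input region, so the network enters the argument solely through a hypothesis of that form. "Never leaves the road" then reads $\forall n.\ |y_n| < 3$, taking the road's half-width to be 3.

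The proof is by an inductive invariant, not by reasoning about whole trajectories. I would pick $I$ as a conjunction of two bounds. The first is on the position, roughly $|y_n| \le 1.25$. The second is on the position predicted from the current velocity and wind, $p_n = y_n + v_n + w_n$, roughly $|p_n| \le 1.75$. The second conjunct is needed because a bound on $y$ alone is not inductive: the car could sit near the centre but with a large velocity. The plan then has three steps:
\begin{enumerate}
\item Show that the initial state, say $y_0 = v_0 = w_0 = 0$, satisfies $I$.
\item Show that $I$ is preserved by one step, for every admissible $\delta_n$ and $\varepsilon_n, \varepsilon_{n+1}$.
\item Conclude $\forall n.\ I(n)$ by induction on $n$, and then use that $I$ implies $|y_n| < 3$.
\end{enumerate}

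For preservation, I would substitute the update equations into $p_{n+1}$ and $y_{n+1}$ and express both in terms of $y_n$, $p_n$, $\delta_n$ and the noise terms. The dependence on $v_n$ and $w_n$ then cancels through the controller's gains $-3$ and $2$. Each conjunct becomes a linear combination of quantities with known bounds, so the triangle inequality finishes the step. If a controller-tolerance term is present it is added to the error budget. In an ITP this is linear real arithmetic after case splitting on absolute values, and can be discharged by \texttt{lra}/\texttt{lia}-style tactics or \texttt{linarith}.

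I expect the main obstacle to be finding the invariant constants. The bounds have to close under one step with the worst-case wind shift and twice the sensor error. For a neural controller they must also absorb the approximation tolerance, which makes the region where the network needs checking depend on the invariant itself, so the verified property of $f$ and the invariant have to be chosen together. I would first derive the exact one-step recurrence symbolically and compute the smallest box that maps into itself, then enlarge it slightly to leave slack for the controller error. The resulting bound must stay well inside the road.
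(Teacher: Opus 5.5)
The structure of your plan is right and matches what the paper relies on. You use an inductive invariant on the position $y_n$ and the predicted position $p_n = y_n + v_n + w_n$, and the network enters only through a hypothesis on a bounded input box (the paper's property \texttt{safe}). The gap is the reference controller law. Your model is exactly the paper's \texttt{nextState}, and in it the gains $-3$ and $2$ do \emph{not} remove the dependence on $v_n, w_n$. Substituting gives $y_{n+1} = p_n + \delta_n$ and $p_{n+1} = y_n - p_n - \delta_n - 3\varepsilon_{n+1} + 2\varepsilon_n$. This is the linear map $\left(\begin{smallmatrix}0&1\\1&-1\end{smallmatrix}\right)$ on $(y,p)$, whose eigenvalue $-(1+\sqrt{5})/2$ has modulus greater than $1$. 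So no bounded box is invariant, and the statement is actually false for this controller. From the zero state, one unit wind shift followed by none, with exact sensors, gives positions $0,1,-1,2,-3,5,\dots$. All sensor readings stay inside $[-3.25,3.25]$ until the car has left the road. The symbolic law itself satisfies any tolerance hypothesis around it, so no such hypothesis can imply safety. Your preservation step therefore cannot be closed for any choice of constants.

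The missing idea is that the controller must be (approximately) dead-beat. Since $2s_{n+1} - s_n$ estimates $y_{n+1} + v_n + w_{n+1}$, the correct reference is $v_{n+1} = v_n - 2s_{n+1} + s_n$. This is exactly what the paper's neural specification encodes: $|f(s_{n+1},s_n) + 2s_{n+1} - s_n| < 1.25$ whenever $|s_{n+1}|,|s_n| \le 3.25$. Write $e_n$ for this deviation. Substitution then gives $p_{n+1} = e_n - 2\varepsilon_{n+1} + \varepsilon_n$, in which $y_n$, $v_n$, $w_n$ and $\delta_n$ all cancel. Hence $|p_{n+1}| < 1.25 + 0.5 + 0.25 = 2$ and $|y_{n+2}| = |p_{n+1} + \delta_{n+1}| < 3$. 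The invariant that closes is $|p_n| < 2 \wedge |y_n| \le 3$, plus $|s_n - y_n| \le 0.25$ when the sensor is part of the state. The position conjunct is what keeps both network inputs inside the box $[-3.25,3.25]$ where the network was verified.

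Two smaller points. First, your constants $1.25$ and $1.75$ cannot be mutually consistent, because $y_{n+1} = p_n + \delta_n$ forces the position bound to exceed the $p$-bound by the full wind shift. Second, the paper's tolerance is not small. It is chosen so that the error budget $1.25 + 2\cdot 0.25 + 0.25 + 1$ lands exactly on the road half-width $3$, leaving no slack.
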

\noindent 
In the CPS literature, this type of result is called an \emph{infinite time-horizon} guarantee, as it ensures system safety for the entire execution of the system, rather than for a bounded time period.

Replacing the symbolic controller with a neural network transforms this example into a neural-cyber-physical-system where the car's software invokes the neural network at fixed time intervals to generate driving actions.
While the safety property remains the same,
we cannot prove it without also reasoning about the neural network's behavior.

\looseness=-1
Let us decompose the problem. An NCPS  can be represented as $\world(\spec(\network))$ where $\world$ is a model of the world the system operates in (i.e. the \emph{physical}), $\spec$ is a model of the conventional software components (i.e. the \emph{cyber}), and $\network$ is the neural network controller (i.e. the \emph{neural}).
The verification goal is then to establish a safety property $\Phi(\world(\spec(\network)))$, for example that the car always remains within the road boundaries.
However, while the cyber-physical components of the system, $\world$ and $\spec$, can be represented and reasoned about \emph{symbolically}, the neural component, $\network$, is a \emph{sub-symbolic} model\correction{, i.e.\ it represents knowledge as numerical values rather than as explicit, human-readable symbols and hence admits no semantically meaningful decomposition.}
For example, the world model $\world$ and symbolic software $\spec$ may contain the usual definitions of distance and velocity as functions of acceleration and time. 
In contrast, when $\network$ processes data that monitors distance, velocity and acceleration, its internal
representation of this data is opaque and devoid of interpretable symbolic meaning, preventing us from decomposing $\network$ to reason about it symbolically and interactively.
\begin{center}
\emph{The challenge is to develop methods that can combine proofs of safety of the symbolic cyber-physical system with proofs about the properties of its neural components.}
\end{center}

\subsection{Existing Automated Approaches}
\label{sec:automated-solvers}

\looseness=-1
Motivated by this problem, a variety of techniques have been developed to prove the correctness of the neural component, $\network$, in isolation~\cite{albarghouthi2021introductionneuralnetworkverification}.
Linear programming-based neural network verification tools~\cite{katz_reluplex_2017,wu2024marabou,bak2020improved,nnv2_cav2023} encode the neural network and desired property as mostly linear constraints, and apply linear constraint solvers that leverage sophisticated branch-and-bound techniques to handle the non-linearities.
Abstract interpretation-based tools~\cite{singh_abstract_2019,zhang_efficient_2018} trade completeness for scalability to larger neural networks.
Recent variants combine abstract interpretation with gradient descent~\cite{xufast} and Lagrangian optimisation~\cite{DBLP:conf/nips/WangZXLJHK21} to recover completeness via branch-and-bound.
While most tools limit themselves to verifying reachable (interval) bounds of a neural network or specifications given as linear constraints,
some recent tools support nonlinear/polynomial constraints~\cite{TeuberVerSAILLE2024,DBLP:conf/tacas/ShiJKJHZ25}. We will refer to such tools as \emph{neural solvers} in this paper.

\looseness=-1
These automated techniques have been extended to analyse a neural network $\network$ and its physical environment $\world(S(\cdot))$ in combination using a coupled reachability analysis, thereby obtaining safety guarantees about the entire system automatically~\cite{huang2022polar,wang2023polar,Althoff2015ARCH,10.1145/3302504.3311804,ivanov2021verisig}. They bypass the symbolic/sub-symbolic gap by implementing a combined reachability analysis in the same language: e.g. MATLAB in case of CORA~\cite{Althoff2015ARCH}, or C++ in case of PolarExpress~\cite{huang2022polar,wang2023polar}. 
However, this convenience comes at a cost:
rather than establishing invariants and infinite-time horizon safety properties,
their guarantees are usually confined to \emph{bounded} time horizons; moreover, performance degrades rapidly as system complexity or horizons grow~\cite{KesslerKCVFOMM25}. 
This restricts the range of neural-cyber-physical systems that can be verified automatically. For example, these approaches would not help to prove Theorem~\ref{quote:theorem-1}. 

\subsection{The Compositional Approach}

An alternative approach that overcomes these limitations takes inspiration from the decomposition $\world(\spec(\network))$. Starting from the desired system-level safety property~$\Phi$, one first derives a specification $\psi$ of the neural network such that the proof of $\Phi(\world(\spec(\network)))$ can be decomposed as follows:
\begin{equation}\label{eq:prop-impl}
  \psi(\network)
  \qquad
  \wedge
  \qquad
  \forall f. \: \psi(f) \Rightarrow \Phi(\world(\spec(f)))
\end{equation}
As shown in Figure~\ref{fig:system-decomposition}, this separates reasoning about the \emph{symbolic} cyber-physical system $\world(\spec(\cdot))$ from reasoning about the \emph{sub-symbolic} neural component $\network$.
This enables the use of existing ITPs to model and reason about complex symbolic dynamics (e.g. cyber-physical, probabilistic, perturbed, or partially observable systems) and obtain stronger guarantees about the full system (e.g.\ infinite-time horizon safety).
\correction{The exact fragment of logic that $\psi$ belongs to is unimportant as long as there is a specialised neural solver that supports it.}
%

\begin{figure}[t]
\begin{tikzpicture}[
    >=Stealth, 
    node distance=0.5cm and 1.5cm, 
    align=center, 
    red_underline/.style={text=black, path picture={ 
        \draw[red, thick] (path picture bounding box.south west) -- (path picture bounding box.south east);
    }},
    item/.style={anchor=center}
]
    \draw[white, fill=red!10] (-4.2,-4) rectangle ++(13.6,1.75) {};
    
    \node [red_underline] (head_art) {Artifact};
    \node [item, below=0.5cm of head_art] (net_n) {Network\\$\network$};
    \node [item, below=1.0cm of net_n] (type_tau) {Network type\\$\tau$};
    \node [item, below=0.9cm of type_tau] (sym_prog) {Symbolic\\program\\$\spec(\cdot)$};
    \node [item, below=of sym_prog] (world_mod) {World model\\$\world(\cdot)$};
    \node [item, below=0.6cm of world_mod] (res_art) {Neural-cyber-physical \\system\\$\world(\spec(\network))$};

    \draw[->] (net_n) -- (type_tau);
    \draw[->] (type_tau) -- (sym_prog);
    \draw[->] (sym_prog) -- (world_mod);
    \draw[->] (world_mod) -- (res_art);

    \node [red_underline] (head_prf) [right=3cm of head_art] {Proof};
    \node [item, at={(head_prf |- net_n.center)}] (phi_n) {Neural proof\\$\psi(\network)$};
    \node [item, at={(head_prf |- type_tau.center)}] (spec_psi) {Network specification\\$\psi$};
    \node [item, below=1.7cm of spec_psi] (impl_form) {Symbolic proof \\ $\forall f . \psi(f) \Rightarrow \Phi(\world(\spec(f)))$};
    \node [item, at={(head_prf |- res_art.center)}] (res_prf) {Neural-cyber-physical \\proof \\$\Phi(\world(\spec(\network)))$};

    \draw[->] (phi_n) -- (spec_psi);
    \draw[->] (spec_psi) -- (impl_form);
    \draw[->] (impl_form) -- (res_prf);
    \draw[->] (spec_psi) -- node[midway, fill=white,pos=0.6] {Training} (net_n);

    \node [red_underline] (head_tool) [right=3cm of head_prf] {Tools};
    \node [item, at={(head_tool |- net_n.east)}] (tools_top) {\textbf{Neural solver}\\Marabou\\etc.};
    \node [item, at={(head_tool |- type_tau.east)}] (vehicle) {\textbf{Functional} \\\textbf{Specification} \\\textbf{Language} \\ \vehicle};
    \node [item, at={(head_tool |- impl_form.east)}] (itps) {\textbf{ITPs}\\Rocq\\Agda\\Isabelle\\Imandra\\Lean\ etc.};

    
    \node (lbl_neur) [left=1cm of net_n, red_underline] {Neural};
    \node (lbl_cyb) [red_underline] at (lbl_neur.south|-sym_prog.west) {Cyber};
    \node (lbl_phys) [red_underline] at (lbl_neur.south|-world_mod.west) {Physical};

    \node (lbl_sub) [anchor=east, left=0.5cm of lbl_neur] {\rotatebox{90}{Sub-symbolic}};
    \node (lbl_sym) [anchor=east] at (lbl_sub.east|-impl_form.west) {\rotatebox{90}{Symbolic}};
    \node[text=black,anchor=center,xshift=0.5cm] (interface) at (lbl_sub.center|-spec_psi.west) {Interface};


\end{tikzpicture}
\caption{The architecture of a compositional proof for a neural-cyber-physical system, with specifications in the \vehicle DSL, in red, acting as an interface between existing neural solvers and existing ITPs. }  
\label{fig:system-decomposition}
\end{figure}

\looseness=-1
Although the compositional approach is appealing due to its potential for stronger guarantees, the only attempt to verify neural-cyber-physical systems in this way that we are aware of is Teuber \emph{et al.}~\cite{TeuberAngelsAndDemons2025,TeuberVerSAILLE2024}.
However, their work did not integrate the proof of $\psi(\network)$, obtained from a specialised neural solver (NCubeV~\cite{TeuberVerSAILLE2024}), and the proof of $\forall f. \: \psi(f) \Rightarrow \Phi(\world(\spec(f)))$, formalised in \keymaera~\cite{FultonMQVP15}.
Therefore, despite achieving infinite-time horizon guarantees,
their results rely on the manual pen-and-paper composition of the proofs and lacks mechanised guarantees about alignment across the symbolic/sub-symbolic gap. We thus refine our thesis as follows:

\begin{center}
\emph{The challenge is to develop \textbf{compositional} methods for proving the safety of neural-cyber-physical systems.}  
\end{center}

\looseness=-1
\correction{This is where functional programming offers a compelling solution. Functional languages provide a common framework in which both neural network interfaces and mathematical specifications ($\psi$) can be expressed with precise semantics. At the same time, they align naturally with the functional foundations of interactive theorem provers and can be translated readily into the declarative formats expected by neural network solvers. As a result, functional programming provides an effective foundation for integrating symbolic and sub-symbolic proofs.}
Based on these observations, \citet{daggitt_et_al:LIPIcs.FSCD.2025.2} proposed \vehicle, a functional domain specific language (DSL), for writing down the types and high-level specifications for neural networks. 
As shown in Figure~\ref{fig:system-decomposition}, \vehicle serves as the interface between the tools for verifying neural networks that operate in the sub-symbolic world and ITPs for reasoning about the larger symbolic system. 
Concretely, \citet{daggitt_et_al:LIPIcs.FSCD.2025.2} provided a high-level, non-technical description of how \vehicle{} specifications are compiled to (i) loss functions for training the neural network, (ii) queries for the specialised neural solver Marabou~\cite{wu2024marabou} to verifying the resulting network, and (iii) Agda code for reasoning about the larger cyber-physical system.

\subsection{Contributions}

In this paper we make the following contributions:
\begin{enumerate}
  \item \textbf{The first description of the internal implementation of the \vehicle core language.} 
    The three ecosystems targeted by \vehicle{} -- machine learning frameworks, neural network verifiers and ITPs -- all have very different expressiveness and capabilities. Generating code for each of them requires unique analyses of the semantics of the specification $\psi$. In Section~\ref{sec:vehicle-core} we provide a concrete description of the architecture of the \vehicle{} type-checker that allows us to reuse it to perform the backend-specific analyses in a modular fashion.\footnote{\correction{Although implemented, we do not provide a detailed description of the implementation of the learning and verification backends in this paper.}}
  \item \textbf{A novel ITP intermediate language.} 
  In the original proposal~\cite{daggitt_et_al:LIPIcs.FSCD.2025.2}, \vehicle{} only supported exporting neural network specifications to Agda. 
  To address this limitation, in Section~\ref{section:intermediate-language} we describe how the type-checker described in Section~\ref{sec:vehicle-core} can be used to compile the core \vehicle{} language to a novel intermediate language used by the ITP backend that facilitates exporting the specification to any ITP with minimal additional code. We believe this technique is applicable to other domains which require translating Boolean specifications into ITPs.
  \item \textbf{Integration with 4 mainstream ITPs.} 
  \looseness=-1
  In Section~\ref{section:itp-implementations}, we describe
  implementing support for exporting \vehicle{} specifications to four mainstream general-purpose ITPs: Agda~\cite{norell2009dependently}, Rocq/MathComp~\cite{RocqManual,mathcomp,affeldt_mathcomp-analysis_2026}%
  , Isabelle/HOL~\cite{DBLP:books/sp/NipkowPW02} and Imandra~\cite{passmore2020imandra}.
  Together these systems cover both dependently/non-dependently typed systems and academic/industrial systems, demonstrating that our approach is applicable across the diverse ecosystem of ITPs. 
  We describe the prover-specific design choices required in each case and compare the strengths and limitations of these systems for expressing neural-cyber-physical specifications.
  In particular, as an original contribution, we present a new Tensor library for MathComp~\cite{mathcomp} that is required to model neural network specifications.
  We prove infinite-horizon safety (Theorem~\ref{quote:theorem-1}) of the discrete NCPS in Figure~\ref{fig:wind-controller} in each of the four provers. 
  
  \item \textbf{Infinite time-horizon safety of a continuous NCPS in a general-purpose ITP.}
  To demonstrate the advantages of our compositional approach, we present what we
    believe to be the first case study that verifies infinite-horizon safety of
    a continuous NCPS, a medical device, in a general-purpose ITP (Rocq + MathComp Analysis\cite{affeldt_mathcomp-analysis_2026}).
    The techniques used in the case study
    can be generalised to any neural-cyber-physical system in which the property
    $\Phi$ guarantees the behavior known as \emph{exponential
      decay}~\cite{simmonsDifferentialEquationsApplications2016}. We show how
    recently released MathComp Analysis libraries (e.g. \texttt{derive.v} for multivariate
    derivatives and differentiation, \texttt{classical/filter.v} for continuity) can
    be used to prove properties of NCPS described by ordinary differential equations.
\end{enumerate}
Together, these contributions provide a concrete demonstration of how functional programming can be used to bridge the symbolic/sub-symbolic gap in NCPS verification in a fully compositional manner \correction{and obtain new infinite-horizon safety guarantees that were not possible before}.
In particular, this work opens the way for existing~\cite{MuniveFGSLH24,MuniveS22,FosterMGS21,BohrerRVVP17,GeuversKSW10,OuchaniKH20,Ricketts17} and future CPS formalisations in ITPs to integrate with neural network verification and training with minimal cost.
We hope that this will significantly increase the attractiveness of general-purpose ITPs as a mainstream tool for the verification of neural-cyber-physical systems.

\section{Recap: \vehicle{} Surface Language}
\label{sec:vehicle-surface}

We begin by providing a short Agda formalisation of the cyber-physical components of the car example shown in Figure~\ref{fig:wind-controller} and a brief recap of the \vehicle{} surface language.

\subsection{Wind Controller Example}
\label{sec:wind-controller-example}

In order to prove the safety of the NCPS introduced in
\autoref{section:introduction}, we start by defining a model in Agda of the cyber-physical components $\world(\spec(\cdot))$ of the system as described by \citet{boyer1990use}.
The \texttt{State} datatype represents the state of the world, and the noisy sensor reading of its position obtained by the car. 
The \texttt{Observation} \correction{datatype represents} updates to the world provided by an oracle, and finally the NCPS \texttt{controller} acts upon the sensor reading from the current and previous state, and instructs the car to change its velocity.

\begin{minipage}[t]{0.3\textwidth}\vspace{0pt}
  \begin{minted}[fontsize=\listingsize]{agda}
record State : Set
  where field
    windSpeed : ℝ
    position  : ℝ
    velocity  : ℝ
    sensor    : ℝ
  \end{minted}
\end{minipage}%
\begin{minipage}[t]{0.3\textwidth}\vspace{0pt}
  \begin{minted}[fontsize=\listingsize]{agda}
record Observation : Set
  where field
    windShift   : ℝ
    sensorError : ℝ
\end{minted}
\end{minipage}%
\begin{minipage}[t]{0.3\textwidth}\vspace{0pt}
  \begin{minted}[fontsize=\listingsize]{agda}
  controller : Tensor ℝ [2] → ℝ
  controller = ... -- network
  \end{minted}
\end{minipage}
\vspace{1em}

\noindent Although the dynamics of the system is better modelled continuously (e.g. using ODEs), for simplicity we will model the cyber and physical components discretely using the following functions:

\vspace{0.1em}
\begin{minted}[fontsize=\listingsize, xleftmargin=\parindent]{agda}
nextState : Observation → State → State
nextState o s = State newWindSpeed newPosition newVelocity newSensor
  where
  newWindSpeed = windSpeed s + windShift o
  newPosition  = position s + velocity s + newWindSpeed
  newSensor    = newPosition + sensorError o
  newVelocity  = velocity s + controller [ newSensor , sensor s ]

finalState : List Observation → State
finalState xs = foldr nextState initialState xs
\end{minted}
Given this model of $\world(S(\cdot))$, we can formally state the system property $\Phi$ described by~\autoref{quote:theorem-1}.

\noindent \begin{minipage}[t]{0.4\textwidth}
  \vspace{0pt}
  \begin{minted}[fontsize=\listingsize]{agda}
  OnRoad : State → Set
  OnRoad s = | s.position | ≤ 3
  \end{minted}
\end{minipage}
\hfill
\begin{minipage}[t]{0.6\textwidth}
  \vspace{0pt}
  \begin{minted}[fontsize=\listingsize]{agda}
ValidObservation : Observation → Set
ValidObservation o = | o.sensorError | ≤ 0.25 ⋀
                     | o.windShift | ≤ 1
  \end{minted}
\end{minipage}
\\
\begin{minted}[fontsize=\listingsize, xleftmargin=\parindent]{agda}
systemSafety : ∀ xs → All ValidObservation xs → OnRoad (finalState xs)
systemSafety = ... -- symbolic proof
\end{minted}
\noindent See \correction{the} supplementary material of~\cite{daggitt_et_al:LIPIcs.FSCD.2025.2} for the full inductive symbolic proof of this infinite time-horizon guarantee. For the purposes of this paper, the crucial step is that the symbolic proof requires the neural component of the system \texttt{controller} to satisfy the following property:
\begin{minted}[fontsize=\listingsize, xleftmargin=\parindent]{agda}
safe : ∀ x → | x ! 0 | ≤ 3.25 ⋀ | x ! 1 | ≤ 3.25
           → | controller x + 2 (x ! 0) - x ! 1 | < 1.25
\end{minted}
This predicate refers \emph{only} to the neural network controller, and thus is our neural specification~$\psi$.

\begin{figure}[t]
    \centering
    \resizebox{0.95\linewidth}{!}{
        \begin{tikzpicture}[
  scale=0.6,
  node distance=2.2cm and 2.2cm,
  box/.style={
    draw,
    rectangle,
    minimum width=2.2cm,
    text width=2.05cm,
    minimum height=1.2cm,
    align=center,
    fill=white
  },
  tool/.style={
    draw,
    rectangle,
    minimum width=2.4cm,
    minimum height=0.9cm,
    align=center,
    fill=white
  },
  arrow/.style={->, thick}
]

\node[box] (surface) {Surface \\ language};
\node[box, above=1.5cm of surface] (core) {Core \\ Language};
\node[box, right=3.0cm of core] (intermediate) {Intermediate \\ ITP Language};

\coordinate (branch) at ($(core) + (-3.15, 0)$);
\draw[] (core) -- (branch);

\node[box, text width=2.4cm, left=2.9cm of core] (training) {Loss Functions \\ for Python};
\draw[arrow] (core) -- node(tti) [text width=1.2cm,align=center, fill=white, draw, pos=0.65]{Type-checker} (training);


\node[box, text width=2.4cm, below=0.25cm of training] (queries) {VNN-LIB Queries \\ for Marabou};

\draw[arrow] (branch) |- node(branch2) [pos=0.76] {} ([yshift=1.8cm]queries);
\coordinate (branch3) at (branch2);
\node [text width=1.2cm, align=center, fill=white, draw, below=0.1cm of branch2] (ttq) {Type-checker};
\draw[arrow] (branch3) -- (ttq);

\node[tool, right=1.6cm of intermediate] (isabelle) {Isabelle/HOL \\ code};
\node[tool, below=0.4cm of isabelle] (rocq) {Rocq \\ code};
\node[tool, below=0.4cm of rocq] (agda) {Agda \\ code};
\node[tool, left=0.4cm of agda] (imandra) {Imandra \\ code};
\node[tool, dashed, left=0.4cm of imandra] (other) {Other \\ ITP code};

\draw[arrow] (surface) -- node(ttc) [text width=1.2cm,align=center, fill=white, draw]{Type-checker} (core);
\draw[arrow] (core) -- node(cti)[text width=1.2cm,align=center,pos=0.65,fill=white, draw]{Type-checker} (intermediate);

\draw[arrow,dashed] (intermediate) -- (other);
\draw[arrow] (intermediate) -- (agda);
\draw[arrow] (intermediate) -- (rocq);
\draw[arrow] (intermediate) -- (isabelle);
\draw[arrow] (intermediate) -- (imandra);

\draw[dashed] ($(core)!0.55!(cti) + (0,2)$) -- ($(core)!0.55!(cti) + (0,-6.9)$);

\draw[dashed] ($(core) + (-2.8,2)$) -- ($(core) + (-2.8,-6.9)$);

\begin{pgfonlayer}{background}
\node[draw=none, fit=(surface), fill=orange!20, label={[anchor=south, text width=3cm, align=center, yshift=-0.9cm]south:Section~\ref{sec:vehicle-surface} and \\ \citet{daggitt_et_al:LIPIcs.FSCD.2025.2}}] (surface_outer) {};

\node[draw=none, fit=(core)(ttc), fill=pink, label={[anchor=north, yshift=4mm]north:Section~\ref{sec:vehicle-core}}] (core_outer) {};

\node[draw=none, fit=(intermediate)(cti), fill=blue!20, label={[anchor=north, yshift=4mm]north:Section~\ref{section:intermediate-language}}] (intermediate_outer) {};

\node[draw=none, text width=5cm, fit=(isabelle) (agda), fill=green!20, label={[anchor=north, yshift=4mm]north:Section~\ref{section:itp-implementations}}] (itp_outer1) {};
\node[draw=none, fit=(agda)(other), fill=green!20] (itp_outer2) {};

\node[draw=none, fit=(queries)(training)(tti), fill=yellow!20, label={[anchor=north, yshift=4mm]north:Other work}] (itp_outer1) {};

\node[right=1.5cm of ttc, text width=1.5cm, align=center] (symbolic) {\textbf{\underline{Symbolic} \\ \underline{world}}};

\node[left=0.8cm of surface, text width=1.5cm, align=center, yshift=-0.1cm] (neural) {\textbf{\underline{Neural} \\ \underline{world}}};

\node[left=1.45cm of neural, text width=3cm, anchor=north, yshift=0.5cm] { See \cite{fischer2019dl2,slusarz2022differentiable} for loss functions theory and \cite{DaggittAKKA23,daggitt2024efficient} for query compilation.};
\end{pgfonlayer}
\end{tikzpicture}
    }
    \caption{The internal architecture of \vehicle and the structure of this paper.}
    \label{fig:vehicle-structure}
\end{figure}

\subsection{\vehicle{} Surface Language}


As shown in Figure~\ref{fig:vehicle-structure}, the core purpose of \vehicle{} is to provide an external language in which users can write the type of the neural network $\tau$ and the neural specification $\psi$, use it to train and verify a network, and then export $\tau$ and $\psi$ to the ITP of their choice.
The syntax for the \vehicle surface language is shown in Figure~\ref{fig:surface-language} and the \vehicle specification $\psi$ for the car example can be written as shown in~\autoref{listing:car-controller-spec}.

\begin{figure}[h]
    \centering
    \input{figures/surface-syntax} 
    \caption{Grammar for \correction{surface} language of the \vehicle{} DSL. \correction{This grammar is a restricted fragment of both the grammar in the full implementation and that presented in~\citet{daggitt_et_al:LIPIcs.FSCD.2025.2}, sufficient to illustrate the challenges of compilation discussed in this paper.}}
    \label{fig:surface-language}
\end{figure}

\begin{listing}[h]
\begin{minted}[fontsize=\listingsize]{vehicle}
type InputVector  = Tensor Real [2];   sensor1 = 0;    sensor2 = 1
type OutputVector = Tensor Real [1];   velocity = 0

@network controller : InputVector -> OutputVector

safeInput : InputVector -> Bool
safeInput x = -3.25 <= x ! sensor1 <= 3.25 and -3.25 <= x ! sensor2 <= 3.25

safeOutput : InputVector -> Bool
safeOutput x = -1.25 < controller x ! velocity + 2*(x ! sensor1) - (x ! sensor2) < 1.25

@property safe = forall x . safeInput x => safeOutput x
\end{minted} 
\caption{The \vehicle{} specification for the car’s neural network controller.}
\label{listing:car-controller-spec}
\end{listing}

\noindent The property \mintinline{vehicle}{safe} in the \vehicle{} specification is semantically identical to the Agda lemma \mintinline{agda}{safe} in Section~\ref{sec:wind-controller-example}. Given this specification \vehicle{} can be used to train a neural network, via PyTorch, and then verify that the resulting file ``model.onnx'' satisfies the specification, via the neural solver Marabou~\cite{wu2024marabou}. Once this has been achieved, \vehicle{} aims to generate Agda interface code for the network declaration and specification of the form:

\vspace{0.5em}
\noindent \begin{minipage}[t]{0.49\textwidth}
\begin{minted}[fontsize=\listingsize]{agda}
controller : Tensor ℝ [2] → Tensor ℝ [1]
controller = callNetwork "model.onnx"
\end{minted}
\end{minipage}%
\hspace{-0.2em}
\vrule
\hspace{0.2em}
\begin{minipage}[t]{0.5\textwidth}
\begin{minted}[fontsize=\listingsize]{agda}
safe : forall x. SafeInput x → SafeOutput x
safe = checkVehicleProperty "spec.vclp"
\end{minted}
\end{minipage}
\vspace{0.5em}

\noindent where \mintinline{agda}{callNetwork} is a macro that invokes the network and \mintinline{agda}{checkVehicleProperty} is a macro that calls back to the \vehicle compiler to check the status of the verification result. In this case, the \vehicle compiler consults a cache that contains the location and a hash of the neural network model file used during verification, which it then uses to check the integrity of the proof avoiding reperforming verification of the network and disrupting the interactivity of the ITP.

\looseness=-1
\correction{
\vehicle{} 
goes against the prevailing trend in the verification community, where the vast majority of work integrating automated solvers into verification workflows in recent years has focused on implementing functional specification languages directly in the host language. Specifications are then compiled down to queries either by native macros or by the compiler itself. Examples are SMT or ASP solvers working underneath the hood in Lean~\cite{QianCBA25}, Isabelle/HOL~\cite{LachnittFBJA0SB25}, SMT-Coq~\cite{BarbosaK0VTB23}, F*~\cite{swamy2016dependent}, Liquid Haskell~\cite{vazou2016liquid}, Dafny~\cite{leino2010dafny} etc.}
The obvious question is why not follow this path and implement \vehicle{} itself as an embedded compiler in the ITP, thereby allowing us to represent the neural network $\network$ and specification $\psi$ in a single trusted logical framework? We argue that unfortunately this is infeasible for many reasons:
\begin{enumerate}
\item Even representing realistic neural networks in an ITP appears impractical. For example, the PyTorch~\cite{Ansel_PyTorch_2_Faster_2024} format has 3500 different operators and a trained network may have millions of parameters. Prior attempts at representing and verifying a concrete trained neural network in
Rocq~\cite{bagnall2019certifying}, Isabelle/HOL~\cite{brucker2023verifying}, and Imandra~\cite{desmartin2022checkinn} have all reported issues with scalability of type checking and verification, and only support 3~operators.

\item Training the network to obey the specification $\psi$ requires interfacing with state-of-the-art 
\correction{machine learning libraries (e.g.\ PyTorch, JAX, Tensorflow)}. We are unaware of any ITP that currently has support for this.

\item If the network's input tensors are large, even compiling $\psi$ down to neural network queries may be prohibitively computationally expensive to perform in an ITP~\cite{daggitt2024efficient}.

\item In NCPS systems, the final verified neural network often needs to be deployed on custom hardware such as ASICs or FPGAs. Again, there is no support for this workflow in any ITP that we are aware of.

\item Even if the above limitations were overcome, implementing the training and verification infrastructure in a single ITP would prevent it from being reused in other ITPs.

\item We hope that machine learning practioners will be interested in using the training and verification backends of \vehicle{} even if they do not wish to export their results to an ITP. Unfortunately, we believe they are unlikely to do so if they are required to work in an ITP instead of with the Python bindings provided by \vehicle{}.
\end{enumerate} 
Therefore we argue that \vehicle{}'s approach provides the most practical architecture for NCPS verification given the above limitations of current ITP technology.
However, this approach does introduce two unavoidable trade-offs.
First, \correction{it enlarges the trusted computing base of the ITP to include the Vehicle compiler, and therefore there is a possibility of introducing} a semantic mismatch between \vehicle{} and the target ITP. While this concern is valid, the \vehicle{} language is deliberately minimal compared to a full ITP logic and, as described in Section~\ref{sec:vehicle-core}, \correction{and specifications} are monomorphised prior to export to further simplify its semantics. We therefore argue that the risk of semantic divergence in practice is low.
Second, our approach treats the neural network as an abstract component and does not permit reasoning about its internal structure within the ITP.
However, we are unaware of practical NCPS verification scenarios that require such internal reasoning.
Moreover, if a NCPS verification task depends on semantically meaningful internal structure, we hypothesise that the network itself is decomposable into explicitly specified sub-components.



\section{Implementation of \vehicle{} Core Language}
\label{sec:vehicle-core}

This section contains the first technical description of the Haskell implementation of the core \vehicle language and its type-checker.
One key challenge is that the three backends of the \vehicle compiler, shown in Figure~\ref{fig:vehicle-structure}, each require unique semantic analyses of the neural specification~$\psi$:
\begin{enumerate}
    \item to compile the specification to a loss function, the training backend must determine which parts are differentiable.
    \item \correction{to provide actionable error messages to the user}, the solver backend must determine which parts of the specification are non-linear or contain alternating quantifiers; see~\cite{DaggittAKKA23}.
    \item to compile the specification to code for dependently typed ITPs, the ITP backend must determine which Boolean expressions in the specification are decidable internally in the ITP and which are not; see Section~\ref{section:intermediate-language}.
\end{enumerate}
In \vehicle, all of these analyses are implemented as auxiliary type-checking passes over the specification.
With this in mind, the core Haskell AST datatypes in \vehicle are defined as follows:
\vspace{0.5em}

\noindent \begin{minipage}[t]{0.53\linewidth}
\begin{minted}[fontsize=\listingsize]{haskell}
data Spec builtin = Spec [Decl builtin]

data Decl builtin
 = Function  Name (Expr builtin) (Expr builtin)
 | Network   Name (Expr builtin)
 | Parameter Name (Expr builtin)
 | Property  Name (Expr builtin)
\end{minted}
\end{minipage}%
\begin{minipage}[t]{0.47\linewidth}
\begin{minted}[fontsize=\listingsize]{haskell}
data Expr builtin
 = Type
 | Pi Name (Expr builtin) (Expr builtin)
 | Lam Name (Expr builtin) (Expr builtin)
 | App (Expr builtin) (Expr builtin)
 | Var Ix
 | Builtin builtin
\end{minted}
\end{minipage}
\vspace{1.5em}

\noindent By instantiating the \mintinline{haskell}{builtin} type parameter with different datatypes, we allow the language and type-system to be extended or restricted as appropriate in each backend. Note that we do not require the full power of extensible datatypes described in ``\correction{Data types à la carte}''~\cite{swierstra2008data} as we only need to extend the AST statically in a finite set of known ways.

Whilst parsing the specification, the AST is instantiated with the \mintinline{haskell}{StandardBuiltin} datatype:

\vspace{0.7em}
\begin{minipage}{1\textwidth}
\begin{minted}[fontsize=\listingsize]{haskell}
data StandardBuiltin
\end{minted}
\vspace{0.2em}
  \begin{minipage}{0.535\textwidth}
\begin{minted}[fontsize=\listingsize]{haskell}
  = Nat         -- Type
  | NLit Int    -- Nat
  | List        -- Type -> Type
  | Nil         -- List t
  | Cons        -- t -> List t -> List t
  | Real        -- Type
\end{minted}
  \end{minipage}%
  \begin{minipage}{0.465\textwidth}
\begin{minted}[fontsize=\listingsize]{haskell}
  | Bool      -- Type
  | BLit Bool -- Bool
  | Not       -- Bool -> Bool
  | And       -- Bool -> Bool -> Bool
  | Index     -- Nat -> Type 
  | If        -- Bool -> t -> t -> t
\end{minted}
  \end{minipage}
\begin{minted}[fontsize=\listingsize]{haskell}
  | Tensor      -- Type -> List Nat -> Type
  | TLit Tensor -- Tensor Real (size t)
  | Forall      -- (Tensor Real ds -> Bool) -> Bool
  | Stack       -- Tensor Real ds -> ... -> Tensor Real ds -> Tensor Real (cons d ds)
  | Lookup      -- Tensor Real (cons d ds) -> Index Real d -> Tensor Real ds
  | Leq         -- Tensor Real ds -> Tensor Real ds -> Bool 
\end{minted}
\end{minipage}
\vspace{0.3em}

\noindent Note that real literals are implemented internally as zero-dimensional tensors. 

We define the following Haskell type-class to express that an abstract \mintinline{haskell}{builtin} type can be type-checked:
\begin{minted}[fontsize=\listingsize, xleftmargin=\parindent]{haskell}
class Typable builtin where
  convertBuiltin   :: StandardBuiltin -> Expr builtin
  typeBuiltin      :: builtin -> Expr builtin  
\end{minted}
where \mintinline{haskell}{convertBuiltin} is a function for mapping \mintinline{haskell}{StandardBuiltin}s to the current builtin and \mintinline{haskell}{typeBuiltin} returns the type for each \mintinline{haskell}{builtin}.
An instance of \mintinline{haskell}{Typable} for \mintinline{haskell}{StandardBuiltin} can be defined as follows:
\begin{minted}[fontsize=\listingsize, xleftmargin=\parindent]{haskell}
instance Typable StandardBuiltin where
  convertBuiltin :: StandardBuiltin -> Expr StandardBuiltin
  convertBuiltin b = Builtin b

  typeBuiltin :: StandardBuiltin -> Expr StandardBuiltin
  typeBuiltin b = case b of
    ...  -- See comments on StandardBuiltin datatype declaration above
\end{minted}

\noindent We then define a function for type-checking declarations:
\begin{minted}[fontsize=\listingsize, xleftmargin=\parindent]{haskell}
checkDecl :: (TCM m, Typable builtin) => 
             Decl builtin -> m (Decl builtin, [Constraint builtin]) 
checkDecl = ... -- standard bidirectional type-checker
\end{minted}
This implements the type-system shown in Figure~\ref{fig:core-types}, where \mintinline{haskell}{TCM m} is a standard type-checking monad. 
We omit describing the details of the implementation, as it is a standard modern bidirectional type-checker for dependent-types (see \citet{loh2010tutorial} for details), where unknown types are represented using meta-variables and expressions are either checked against an expected type or a new type is synthesised.
It also supports implicit and instance arguments, and the bidirectional type-checking pass generates a list of unification and instance/type-class constraints that need to be resolved by dedicated unification and instance solvers.

\begin{figure}[t]
    \centering
    \input{figures/core-types}
    \caption{The standard dependent-type system for the core \vehicle{} lambda calculus.}
    \label{fig:core-types}
\end{figure}

The full procedure for type-checking a declaration in \vehicle{} is then defined as follows:
\begin{minted}[fontsize=\listingsize, xleftmargin=\parindent, linenos]{haskell}
typeDecl :: (TCM m, TypableBuiltin builtin) => Decl StandardBuiltin -> Decl builtin
typeDecl decl = do
  convertedDecl <- traverse convertBuiltin decl
  (checkedDecl, constraints) <- checkDecl convertedDecl
  (solutions, unsolvedConstraints) <- solveConstraints constraints
  substDecl <- substitute solutions checkedDecl
  generalisedDecl <- generaliseOver unsolvedConstraints substDecl
  return generalisedDecl
\end{minted}
First, the declaration is converted to use the current set of builtins before being bidirectionally type-checked, which generates a set of constraints and meta-variables to solve.
After attempting to solve the constraints, a standard constraint generalisation
procedure is
applied to pi-abstract over
unsolved meta-variables~\cite{jones2003qualified}. We present a concrete example of these steps
in Section~\ref{section:intermediate-language}.

The full procedure for type-checking a specification is as follows:
\begin{minted}[fontsize=\listingsize, xleftmargin=\parindent, linenos]{haskell}
typeSpec :: (TCM m, TypableBuiltin builtin) => Spec StandardBuiltin -> Spec builtin
typeSpec (Spec decls) = do
  typedSpec <- traverseWhileUpdatingCtx typeDecl decls
  monomorphisedSpec <- monomorphise typedSpec
  return monomorphisedSpec
\end{minted}
First, each declaration is type-checked in turn and then subsequently the whole specification undergoes a monomorphisation pass, which (i) specialises polymorphic definitions at each concrete type at which they are used and (ii) eliminates any type-class operations.
The resulting program contains only concrete types and explicit operations, simplifying later compilation stages and facilitating translation to the training and verification backends where parametric polymorphism and type classes are unsupported.

The standard type-checking pass that every specification undergoes regardless of whether it is being used for training or verification or being exported to an ITP can then be defined simply as:
\begin{minted}[fontsize=\listingsize, xleftmargin=\parindent]{haskell}
typeCoreSpec :: (TCM m) => Spec StandardBuiltin -> Spec StandardBuiltin
typeCoreSpec = typeSpec
\end{minted}

\section{ITP Intermediate Language}
\label{section:intermediate-language}

When translating a specification to a dependently-typed ITP such as Agda, Rocq or Lean, for each Boolean subexpression we must decide whether to lift it to a proposition or whether it can remain at the Boolean level as there exists a decision procedure internal to the ITP capable of resolving it.
\correction{To illustrate this, consider the example shown in Figure~\ref{fig:decidability-example}.
In particular, notice that there can be no decision procedure for \mintinline{vehicle}{forall x. calc x <= f x} in the definition of \mintinline{vehicle}{safe} internally in the ITP, as there is no representation of \mintinline{vehicle}{f} in the ITP and the ITP does not have access to the neural solver. Therefore it needs to be compiled to a proposition, and so the surrounding application of \mintinline{vehicle}{neg} needs to be compiled to proposition-level negation. In contrast, the expression \mintinline{vehicle}{neg (x <= 0)} in the definition of \mintinline{vehicle}{calc} is in the body of an \mintinline{vehicle}{if} and therefore \mintinline{vehicle}{neg} at that location must be compiled to Boolean negation. Finally, in the unshown case where the user writes a quantifier inside an \mintinline{vehicle}{if} condition then \vehicle{} should be able to error gracefully rather than generating invalid ITP code.}

\begin{figure}[t]

\begin{subfigure}{0.49\textwidth}
  \begin{minted}[fontsize=\listingsize]{vehicle}
@network
f : Real -> Real

neg : Bool -> Bool
neg b = not b




calc : Real -> Real
calc x = if neg (x <= 0) then 0 else 1

@property
safe = neg (forall x . calc x <= f x)
\end{minted}
\caption{Original \vehicle{} specification}
\label{fig:decidability-example-vehicle}
\end{subfigure}
\hfill
\begin{subfigure}{0.49\textwidth}
    \begin{minted}[fontsize=\listingsize]{agda}
f : Real → Real
f = callNetwork "path/to/model.onnx"

negBool : Bool → Bool
negBool b = not b

negProp : Set → Set
negProp b = ¬ b

calc : Real → Real
calc x = if negBool (x ≤ᵇ 0) then 0 else 1

safe : negProp (∀ x . calc x ≤ f x)
safe = checkNetworkSpecification
    \end{minted}
    \caption{Compiled Agda specification}
    \label{fig:decidability-example-agda}
\end{subfigure}

\caption{A \vehicle{} specification where Boolean expressions must be specialised before compiling to the ITP. }
\label{fig:decidability-example}
\end{figure}

\correction{
One solution would be to make the distinction between booleans and propositions explicit in \vehicle{}'s surface language.
However we want to avoid this, as the distinction may be unfamiliar to machine learning users. 
Instead, this section describes a compiler pass capable of deciding if a given Boolean expression in a surface specification language must be resolved to a Boolean or Proposition in the extracted ITP code. In particular, we show how we can reuse the standard bidirectional type-checker and design a method for automatic generalisation to do so. To the best of our knowledge, a compiler pass implemented here is the first of its kind in the functional programming literature.
}

\subsection{Implementation}

To do so, we define a new datatype that extends \mintinline{haskell}{StandardBuiltin} with the corresponding type-level counter-parts of the supported Boolean operations.

\vspace{0.25em}
\noindent \begin{minted}[fontsize=\listingsize, xleftmargin=\parindent]{haskell}
data DecidabilityBuiltin
  = Standard StandardBuiltin
  | TrueType    -- Type
  | FalseType   -- Type
  | AndType     -- Type -> Type -> Type
  | ImpliesType -- Type -> Type -> Type
  | LeqType     -- Tensor Real ds -> Tensor Real ds -> Type
\end{minted}
\vspace{0.25em}

\noindent The type \mintinline{haskell}{Spec DecidabilityBuiltin} will become the intermediate ITP language shown in Figure~\ref{fig:vehicle-structure}.
The question is how we can define an instance of \mintinline{haskell}{Typable DecidabilityBuiltin} that allows us to reuse \mintinline{haskell}{typeSpec} to move from the \vehicle{} specification in Figure~\ref{fig:decidability-example-vehicle} to a representation closer to the desired Agda code in Figure~\ref{fig:decidability-example-agda}.

The key trick is to define a new type-class \emph{internally in \vehicle} that has fields for the operations that can exist at either the type-level or the Boolean-level. There is deliberately no surface syntax for users to write instances in \vehicle{}, so we write it Haskell-style as follows:

\newcommand{\boolImpl}{BI}
\newcommand{\typeImpl}{TI}

\begin{minted}[fontsize=\listingsize, xleftmargin=\parindent]{haskell}
class Booleans where
  BoolTC    : Type
  TrueTC    : BoolTC
  FalseTC   : BoolTC
  AndTC     : BoolTC -> BoolTC -> BoolTC
  ImpliesTC : BoolTC -> BoolTC -> BoolTC
  LeqTC     : Tensor Real ds -> Tensor Real ds -> BoolTC
\end{minted}

\noindent and we then define the two instances of this type-class as:

\vspace{0.25em}
\begin{minipage}[t]{0.35\textwidth}
  \begin{minted}[fontsize=\listingsize]{haskell}
instance Booleans where
  BoolTC    = Bool
  TrueTC    = True
  FalseTC   = False
  AndTC     = And
  ImpliesTC = Or
  LeqTC     = Leq
  \end{minted}
\end{minipage}
\begin{minipage}[t]{0.3\textwidth}
\begin{minted}[fontsize=\listingsize]{haskell}
instance Booleans where
  BoolTC    = Type
  TrueTC    = TrueType
  FalseTC   = FalseType
  AndTC     = AndType
  ImpliesTC = ImpliesType
  LeqTC     = LeqType
\end{minted}
\end{minipage}
\vspace{0.55em}

\noindent Note that these instances have no type parameters and therefore are overlapping, which is disallowed by default in languages such as Haskell and Agda (see \citet{morris2010instance} for problems that can arise). However, as users cannot define instances directly in the \vehicle{} surface language, we can support overlapping instances in the \vehicle{} compiler without degrading the user experience. 


Next, we define a conversion function from \mintinline{haskell}{StandardBuiltin} to \mintinline{haskell}{DecidabilityBuiltin} that uses the above definition to replace  Boolean operations with their overloaded type-class equivalent:
\begin{minted}[fontsize=\listingsize, xleftmargin=\parindent]{haskell}
convertToDecidabilityBuiltin :: StandardBuiltin => Expr DecidabilityBuiltin
convertToDecidabilityBuiltin b = case b of
  Bool       -> BoolTC
  BLit True  -> TrueTC
  BLit False -> FalseTC
  And        -> AndTC
  Implies    -> ImpliesTC
  Leq        -> LeqTC
  _          -> Builtin $ Standard b
\end{minted}
and define the types for \mintinline{haskell}{DecidabilityBuiltin} as follows:   
\begin{minted}[fontsize=\listingsize, xleftmargin=\parindent]{haskell}
typeDecidabilityBuiltin :: DecidabilityBuiltin -> Expr DecidabilityBuiltin
typeDecidabilityBuiltin = \case
  StandardBuiltin Forall -> ... -- (t -> Type) -> Type
  StandardBuiltin b      -> typeBuiltin b
  _                      -> ... -- See comments on declaration of `DecidabilityBuiltin'
\end{minted}
where we overwrite the types of \mintinline{haskell}{Forall} to ensure it is always compiled to the type-level.

Given these two functions, the \mintinline{haskell}{Typable} instance for \mintinline{haskell}{DecidabilityBuiltin} can be defined as:
\begin{minted}[fontsize=\listingsize, xleftmargin=\parindent]{haskell}
data Typable DecidabilityBuiltin
  convertBuiltin = convertToDecidabilityBuiltin
  typeBuiltin    = typeDecidabilityBuiltin
\end{minted}
\noindent This \mintinline{haskell}{Typable} instance allow the existing instance constraint solver in the type-checker to determine the correct instance of \mintinline{vehicle}{Booleans} to use at each location in the specification, and therefore the translation of the core language to the intermediate representation suitable for ITPs can be implemented simply by delegating to \mintinline{haskell}{typeSpec} which was defined in Section~\ref{sec:vehicle-core}:
\begin{minted}[fontsize=\listingsize, xleftmargin=\parindent]{haskell}
toIntermediateITPLang :: TCM m => Spec StandardBuiltin -> m (Spec DecidabilityBuiltin)
toIntermediateITPLang = typeSpec
\end{minted}

\subsection{Worked Example}

To illustrate this claim, we step through the operation of \mintinline{haskell}{toIntermediateITPLang} in the \vehicle{} specification in Figure~\ref{fig:decidability-example-vehicle}. Note that for readability, we use the surface \vehicle{} syntax rather than the core syntax. 
First, we consider the operation of \mintinline{haskell}{typeDecl} on the \vehicle{} function \mintinline{vehicle}{neg}.  Initially, the following code is obtained after the builtins have been converted in Line 3 of \mintinline{haskell}{typeDecl}:
\begin{minted}[fontsize=\listingsize, xleftmargin=\parindent]{vehicle}
neg : BoolTC -> BoolTC
neg b = NotTC b
\end{minted}
At the end of constraint solving in Line 6 of \mintinline{haskell}{typeDecl} there is insufficient information to work out which implementation of \mintinline{vehicle}{Booleans} to use and so we end up with the following:
\begin{minted}[fontsize=\listingsize, xleftmargin=\parindent]{vehicle}
neg : BoolTC {{?0}} -> BoolTC {{?0}}
neg {{?0}} b = NotTC {{?0}} b
\end{minted}
where \texttt{\{\{..\}\}} is the standard double braces notation for instance arguments, and meta-variable \mintinline{haskell}{?0} represents the unsolved \mintinline{haskell}{Booleans} instance. The application of constraint generalisation in Line~7 of \mintinline{haskell}{typeDecl} will cause this unsolved meta-variable to be universally quantified and prependend to the type to obtain the final generalised version of the function that works over either at the level of Booleans or at the level of types:
\begin{minted}[fontsize=\listingsize, xleftmargin=\parindent]{vehicle}
neg : {{i : Booleans}} -> BoolTC {{i}} -> BoolTC {{i}}
neg {{i}} b = NotTC {{i}} b
\end{minted}
However when processing the declaration of \mintinline{haskell}{calc}, the type-checker \emph{will} be able to resolve which implementation to use, as according to the type system the condition of an ``if'' must be of type Bool and so the neg here must use the plain Boolean instance (denoted here by BI):
\begin{minted}[fontsize=\listingsize, xleftmargin=\parindent]{vehicle}
calc : Real -> Real
calc x = if neg {{BI}} (x >= 0) then 0 else 1
\end{minted}
Similarly, the type-checker will be able to deduce that the call to \mintinline{haskell}{neg} must use the type instance (denoted here by TI) due to the type of \mintinline{haskell}{forall}: 
\begin{minted}[fontsize=\listingsize, xleftmargin=\parindent]{vehicle}
safe : Type
safe = neg {{TI}} (forall x . calc x <= f x)
\end{minted}
The monomorphisation pass in \mintinline{haskell}{checkSpec} will then specialise the newly polymoprhic \mintinline{vehicle}{neg} declaration into two separate definitions, giving us the following final code:

\begin{minipage}{\textwidth}
  \vspace{0.9em}
\begin{minipage}[t]{0.30\textwidth}
\begin{minted}[fontsize=\listingsize]{vehicle}
negBool : Bool -> Bool
negBool b = not b

negType : Type -> Type
negType b = notType b
\end{minted}
\end{minipage}
\begin{minipage}[t]{0.45\textwidth}
\begin{minted}[fontsize=\listingsize]{vehicle}
calc : Real -> Real
calc x = if negBool (x >= 0) then 0 else 1

safe : Type
safe = negType (forall x . calc x <= f x)
\end{minted}
\end{minipage}
  \vspace{0.9em}
\end{minipage}
This code is now a simple conversion of the builtins away from the final desired Agda code shown in Figure~\ref{fig:decidability-example-agda} (or indeed any other ITP). We now present a proof of soundness of this translation.
\begin{theorem}[Soundness of ITP Translation]
\label{thm:soundness}
Let $s$ be a well-typed specification, then if \mintinline{haskell}{toIntermediateITPLang} terminates on $s$ successfully yielding $s'$, then $s'$ is also well-typed and $s = s'$ everywhere except for boolean expressions, where every boolean expression in $s$ corresponds to a set of one or more semantically equivalent expressions in $s'$.
\end{theorem}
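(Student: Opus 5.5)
The argument follows the pipeline of \mintinline{haskell}{typeSpec}: first \mintinline{haskell}{traverseWhileUpdatingCtx typeDecl}, then \mintinline{haskell}{monomorphise}. The statement splits into two claims, well-typedness of $s'$ and the correspondence of expressions, and I would establish both as invariants preserved by each stage. The central tool is an \emph{erasure} map $|\cdot|$ from \mintinline{haskell}{Expr DecidabilityBuiltin} back to \mintinline{haskell}{Expr StandardBuiltin}. It sends $\mathtt{BoolTC}\{\{i\}\}$, \texttt{Type}-at-Boolean-positions and \texttt{Bool} to \texttt{Bool}; each of \texttt{TrueType}, \texttt{AndType}, \texttt{ImpliesType}, \texttt{LeqType}, and their \texttt{TC} counterparts, to the corresponding standard Boolean builtin; and it drops instance abstractions and applications. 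The correspondence claim then reads: $|s'|$ coincides with $s$ up to renaming of monomorphised copies. Every Boolean subexpression $b$ of $s$ is the erasure of the set of its specialisations in $s'$, and each specialisation is semantically equivalent to $b$. Here equivalence is taken in a model where \texttt{Type} is interpreted as propositions, each Boolean $b$ is interpreted by the proposition $b=\mathtt{true}$, and the two \mintinline{haskell}{Booleans} instances are related by this reflection.

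First, I would show that conversion is inverted by erasure: $|\mathtt{convertToDecidabilityBuiltin}(b)| = b$, by cases on the builtin. Second, for \mintinline{haskell}{checkDecl}, I would rely on soundness of the standard bidirectional checker, namely that any elaborated term together with solved constraints is well-typed in the system of Figure~\ref{fig:core-types}. I would also use that elaboration only inserts implicit and instance arguments and metas, so erasure of the output equals erasure of the input; this is a routine induction on the checking and synthesis rules. Third, for \mintinline{haskell}{solveConstraints} and \mintinline{haskell}{substitute}, substituting solutions preserves typing (standard substitution lemma), and instance metas are only ever solved by \texttt{BI} or \texttt{TI}, which erase to nothing. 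Fourth, generalisation abstracts the remaining instance metas as $\{\{i : \mathtt{Booleans}\}\}$, which preserves typing by the Pi rule and is invisible under erasure. Fifth, \mintinline{haskell}{monomorphise} replaces each generalised declaration by its instantiations at the instances actually used and rewrites call sites accordingly. Typing is preserved because each copy is a substitution instance of a well-typed polymorphic term. The ``set of one or more'' in the statement arises exactly here: a Boolean subexpression of a declaration used at both instances yields two copies. Every declaration is used at least once, or is kept at a default instance, so the set is nonempty. Termination hypotheses let me ignore failure cases such as quantifiers under \texttt{if}.

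Semantic equivalence I would prove by induction on expressions, using a logical relation $R$ between the interpretation of a Boolean and the interpretation of its \texttt{TI}-instance, with $R(b,P)$ iff $(b=\mathtt{true}) \leftrightarrow P$. The \texttt{BI} instance is trivially equivalent to the original. For \texttt{TI}, I would verify each field case: \texttt{TrueType}, \texttt{FalseType}, \texttt{AndType}, \texttt{ImpliesType} and \texttt{LeqType} each reflect their Boolean counterparts, with \texttt{LeqType} reflecting because real $\le$ in the reference semantics is classical. \texttt{Forall} was already given type $(t\to\mathtt{Type})\to\mathtt{Type}$ in $s$'s intended semantics, so it is its own reference.

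I expect the main obstacle to be the \texttt{BI} instance as written, where \texttt{ImpliesTC = Or}. This is only equivalent once the antecedent is negated, so I would check, and if necessary state as an assumption, that conversion inserts the negation; otherwise equivalence fails at that case. A second subtlety is making the erasure invariant precise across monomorphisation, since names are duplicated. I would handle this by indexing copies by their instance assignment and proving a commutation lemma: erasure of a specialised copy equals the original declaration.
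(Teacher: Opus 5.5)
Your proposal has the same skeleton as the paper's proof. The translation is split into builtin conversion, checking with constraint solving and generalisation, and monomorphisation, and typing is preserved stage by stage under the same assumption that the bidirectional checker, the constraint solvers and the monomorphiser are sound. What you add are the two ingredients the paper leaves at ``by construction''.

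First, your erasure map gives a precise meaning to ``$s = s'$ except on Boolean expressions''. It also identifies the ``one or more'' with the instance-indexed copies produced by monomorphisation. Second, your reflection relation, $R(b,P)$ iff $(b=\mathtt{true})\leftrightarrow P$, lifted to a logical relation, gives a meaning to ``semantically equivalent''. The paper never defines this notion, and it cannot be literal equality between a \texttt{Bool}-level and a \texttt{Type}-level term.

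The paper's argument is shorter but only asserts that each stage preserves semantics. Yours actually discharges that claim, and in doing so surfaces hypotheses the paper uses silently:
\begin{itemize}
\item a classical reading of $s$;
\item a correct Boolean implementation of implication (as printed, \texttt{ImpliesTC = Or} is wrong unless the antecedent is negated, as you noticed);
\item a monomorphiser that keeps at least one copy of every generalised declaration.
\end{itemize}

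Three small repairs are needed.
\begin{itemize}
\item \texttt{Forall} is not ``its own reference''. In $s$ it has type $(t \to \mathtt{Bool}) \to \mathtt{Bool}$, so you need the case $\bigl((\mathtt{forall}\;x.\;p\,x) = \mathtt{true}\bigr) \leftrightarrow \forall x.\, P\,x$, obtained from $R(p\,x, P\,x)$ for all $x$. This is precisely where the classical reading of $s$ is used, as for \texttt{Leq} on reals.
\item Your case list should include negation. The printed \texttt{Booleans} class omits it, but the worked example depends on it via \texttt{NotTC}.
\item ``\texttt{Type} at Boolean positions'' has to be defined by provenance, because \texttt{Type} also occurs in $s$ as the universe. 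One way is to erase before instance fields are unfolded.
\end{itemize}
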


\begin{proof}[Proof]
Assuming that the implementation of the bidirectional type-checker, constraint solvers and monomorphisation are sound, then we observe that the translation is the composition of three distinct semantic-preserving stages. First, the initial conversion pass overloads only the Boolean builtins via the \mintinline{vehicle}{Booleans} type-class. Next, the success of the constraint solver implies that there exists a valid partition of the declaration's boolean logic into Boolean and propositional sorts.
Lastly, monomorphisation guarantees that all polymorphic definitions (e.g.,
\mintinline{vehicle}{neg} in Figure~\ref{fig:decidability-example}) are specialised, thereby ensuring
that all references to the \mintinline{vehicle}{Booleans} type-class are eliminated and resulting
in a target term $e'$ containing only concrete, disjoint \mintinline{vehicle}{Bool} or
\mintinline{vehicle}{Type}-level operations from \mintinline{haskell}{DecidabilityBuiltin}.
Thus, by construction, and, as $s$ is already well-typed, if \mintinline{haskell}{toIntermediateITPLang} succeeds, $e'$ is a well-typed instance of $e$ where the ambiguity of boolean operations has been resolved into explicit decidable or propositional forms.
\end{proof}

\section{Integration with an ITP of Your Choice}
\label{section:itp-implementations}

Given the implementation pipeline described in Section~\ref{section:intermediate-language}, unlocking compositional verification of neural-cyber-physical systems in a new ITP \emph{X} consists of the following two steps:
\begin{enumerate}
  \item creating a \vehicle{} companion library in X that contains missing definitions in X (e.g. tensors) and, optionally, a macro that calls back to the \vehicle{} compiler when checking the proof;
  \item adding a Haskell function \mintinline{haskell}{compile : Spec DecidabilityBuiltin -> String} to the \vehicle{} compiler that translates \vehicle{} specifications to code targeting the ITP \emph{X} and its companion library.
\end{enumerate}

We have performed these steps for 4 widely-used interactive theorem provers:
Agda, Rocq, Isabelle/HOL and Imandra.
Table~\ref{tab:itp-comparison} provides a comparison of each implementation. For ITPs that already contain a formalisation of tensors (i.e. Isabelle/HOL), the size of the ITP companion library is extremely small (178 LOC). Similarly, the implementation of compilation to the ITP within \vehicle{} requires less than 1000 LOC in all systems.
In contrast, \vehicle{}'s integration with neural network training and verification ecosystems amounts to ~60,000 LOC of Haskell and ~4,000 LOC of Python.
This underlines the value of integrating ITPs via \vehicle{} as opposed to developing custom infrastructure for NCPS verification in each ITP individually.

\begin{table}[t]  
\caption{Comparison of ITP backends. LOC = line of code. No* = introduced as part of this development.}
\begin{tabular}{lccccc}
  \toprule
  Interactive Theorem Prover & Agda & Rocq & Isabelle/HOL & Imandra \\
  \midrule
  Dependent types & Yes & Yes & No & No \\
  Pre-existing tensor libraries & No* & No* & Yes & No* \\
    \correction{ITP - tensor library LOC} & \correction{192} & \correction{939} & \correction{783} & \correction{1055} \\
    ITP - other LOC & 204 & \correction{219} & 178 & 684 \\
  Haskell LOC in \vehicle{} & 772 & 720 & 981 & 862 \\
  Proof cache integration & Yes & \correction{Yes} & No & No \\
  \bottomrule
\end{tabular}
\label{tab:itp-comparison}
\vspace{-1em}
\end{table}

We will now discuss the implementation in each of the 4 interactive
theorem provers, comparing the suitability of each system and its libraries for cyber-physical
verification.

\subsection{Agda}

As Agda is dependently typed, \vehicle{} specifications can be compiled directly one-to-one to Agda code. Section~\ref{sec:vehicle-surface} has already given an overview of the Agda code that is generated by the Agda backend. Therefore we will only briefly discuss the \vehicle{}-Agda companion library.
Agda was chosen as the first ITP we added to support for primarily due to familiarity with the system, rather than a belief that it is the best ITP for performing verification of cyber-physical systems.
Indeed, it has proven severely lacking in library support, as not only does the Agda Standard Library~\cite{daggittAgdaStandardLibrary2025} not contain a formalisation of ODEs or tensors, it does not even contain a standardised representation for real numbers.
Therefore, the \vehicle-Agda companion library contains a very basic formalisation of tensors, and we are forced to compile real numbers to Agda rationals. The latter is sound as long as the network does not contain any transcendental operations (e.g. tanh).
The advantage of \vehicle{} is that rather than investing serious effort in fixing these deficiencies in the Agda libraries, we can instead move to an ITP with better library support.

\subsection{Rocq}

Rocq is the second ITP we implemented support for in \vehicle.
We chose to target the MathComp~\cite{mathcomp} and MathComp Analysis~\cite{affeldt_mathcomp-analysis_2026} libraries for this backend, as they contain an extensive formalization of mathematics that goes well beyond what both the Rocq and Agda standard libraries provide.
Targeting Rocq and MathComp Analysis allows us to leverage its formalization of real numbers and analytical arguments.
As part of this paper's contribution, we complement Rocq/MathComp with a 
novel general implementation of tensors\correction{.}\footnote{Our tensor
  implementation \correction{has been} merged into MathComp~\cite{mathcomp_tensor_pr}}

\subsubsection{MathComp Integration}

The MathComp~\cite{mathcomp} and Analysis~\cite{affeldt_mathcomp-analysis_2026} libraries contain a comprehensive collection of Rocq formalization of mathematical structures.
MathComp focuses on the constructive formalization of algebraic structures.
In particular, we exploit the hierarchy of algebraic structures (i.e., monoids, semigroups, Z-modules, semirings and rings) to construct appropriate instances for our tensor library.
As we will see later, we define tensors by wrapping matrices and appropriately translating indices, so most of these interface implementations are one-liners.

\looseness=-1
Many \vehicle constructs map directly onto MathComp definitions. For example, \vehicle's \texttt{Index n} representing the finite set $\{m \in \mathbb{N} \mid m < n\}$, corresponds to MathComp ordinals (\mintinline{coq}{'I_n}).
Other \vehicle constructs lifted to MathComp include lists with operations such as folding and mapping, ordering relations, equality, and boolean operations.
Our translation allows users to leverage mature, well-tested formalizations and benefit from MathComp's extensive lemma libraries.
We translate \vehicle{} reals to the formalization of real numbers in MathComp Analysis, which are accessed via an interface and thus compatible with the constructive definition present in the Rocq standard library, which we exploit whenever we need to compute concrete inequalities.

\subsubsection{Tensor Implementation}

Tensors are fundamental to \vehicle's design: neural network declarations require tensor types for both inputs and outputs, and, as discussed in \autoref{sec:vehicle-core}, internally all scalar values are represented as zero-dimensional tensors. However, MathComp lacks a tensor formalization, necessitating a custom implementation.

We designed the tensor library to satisfy two goals: (1) support the subset of tensor operations required for \vehicle specifications, and (2) provide a general formalization suitable for eventual inclusion in MathComp itself.
The key insight is to represent tensors as matrices from MathComp's existing \texttt{algebra/matrix.v} module~\cite{mathcomp}.
A tensor $T^{i_1,\dots,i_n}_{j_1,\dots,j_m}$ with contravariant indices $i_1 \in I_1, \dots, i_n \in I_n$ and covariant indices $j_1 \in J_1, \dots, j_m \in J_m$ is encoded as a matrix $M = (r_{ij})_{1\le i \le M, 1\le j \le N}$ where $M = |J_1| \cdot \ldots \cdot |J_m|$ and $N = |I_1| \cdot \ldots \cdot |I_n|$. 
In the general tensor notation, covariant indices represent basis vectors and transform inversely under coordinate changes, while contravariant indices represent coordinate displacements and transform directly (see e.g.,~\cite{dullemond1991introduction}).
The Rocq definition is shown in \autoref{listing:rocq-tensor-def}, where \mintinline{coq}{nat ^ n} represents a finite function from \mintinline{coq}{'I_n} to \mintinline{coq}{nat} and \mintinline{coq}{'M[K]_(n, m)} is a matrix of size \mintinline{coq}{n} times \mintinline{coq}{m}.
Note that this encoding supports both covariant and contravariant indices, though \vehicle itself uses only contravariant tensors. This generality enhances the library's potential utility beyond \vehicle.

\begin{listing}
\begin{minted}[fontsize=\listingsize]{coq}
Context {n m : nat} (u_ : nat ^ n) (d_ : nat ^ m) (K : Type) (R : pzRingType).
Variant tensor_of : Type := Tensor of 'M[K]_(\prod_(i < n) u_ i, \prod_(i < m) u_ j).
Notation "''T[' R ]_ ( u_ , d_ )" := (tensor u_ d_ R).
\end{minted}
    \caption{Rocq tensor definition: tensors as matrices.}
    \label{listing:rocq-tensor-def}
\end{listing}

The matrix-based representation allows us to inherit algebraic properties from MathComp's matrix library. For example, the Hadamard (pointwise) tensor product is defined as:
\[ (T \odot U)^{i_1,\dots,i_n}_{j_1,\dots,j_m} = T^{i_1,\dots,i_n}_{j_1,\dots,j_m} \cdot U^{i_1,\dots,i_n}_{j_1,\dots,j_m} \]
In Rocq we define this as:
\begin{minted}[fontsize=\listingsize, xleftmargin=\parindent]{coq}
Definition hmult (t u : 'T[R]_(u_, d_)) := 
  @Tensor _ _ u_ d_ R (map2_mx *%R (\val t) (\val u)).
\end{minted}
\looseness=-1
This definition uses the \mintinline{coq}{map2t} function that applies an operator (e.g., ring multiplication \mintinline{coq}{*
We then inherit all algebraic structure from the Hadamard matrix product, e.g., to instantiate the ring interface we prove distributivity:
\begin{minted}[fontsize=\listingsize, xleftmargin=\parindent]{coq}
Lemma hmultDl : left_distributive hmult +%R.
Proof. by move=> x y z; rewrite /hmult map2_mxDl. Qed.
[...]
HB.instance Definition _ := GRing.Nmodule_isPzSemiRing.Build                  
  'T[R] hmultA hmul1t hmult1 hmultDl hmultDr hmul0t hmult0.
\end{minted}

\noindent The resulting tensor library implements standard interfaces including equality, addition, subtraction, pointwise partial ordering, and the Hadamard product, all matching \vehicle's tensor semantics, as well as other definitions of general utility, including the tensor product and its bilinear properties.

\subsubsection{Neural Network Representation and Axiomatization}

Compiled \vehicle specifications represent neural networks as abstract functions with axiomatized properties. When \vehicle compiles a specification containing a \texttt{@network} declaration, the generated Rocq code includes:

\begin{enumerate}
\item An \emph{opaque function definition} for the network, hiding its internal structure. The network weights and architecture are not exposed in the generated proof script.

\item \emph{Axioms} asserting that the network satisfies the verified properties. These axioms are justified by the external verification step performed by the neural network verifier (e.g., Marabou).

\item \emph{Type signatures} specifying the network's input and output tensor dimensions.
\end{enumerate}
These components look very similar to analogous components in Agda, exemplified
in Section~\ref{sec:vehicle-surface}; further examples in Rocq are given in
Section~\ref{section:case-studies}  and Appendices A.2 and B.



\subsection{Isabelle/HOL}
\label{sec:isabelle}

The third ITP we added support for was Isabelle/HOL~\cite{DBLP:books/sp/NipkowPW02}, an important addition seeing that majority of ITP CPS formalisations were given in this prover~\cite{MuniveFGSLH24,MuniveS22,FosterMGS21,BohrerRVVP17}.
We begin this section by describing the \vehicle companion library and how we bridge \vehicle's and Isabelle/HOL's type system despite Isabelle/HOL's lack of dependent types.
Subsequently, we describe the compilation of \vehicle properties to Isabelle/HOL.

\subsubsection{\vehicle-Isabelle/HOL Companion Library}
In contrast to Rocq MathComp, Isabelle/HOL already has a formalization of Tensors available,
which is part of the AFP's \textit{Deep\_Learning} package~\cite{Deep_Learning-AFP}.
Hence, we reuse this formalization for our companion library.

\paragraph{Type System.}
The tensor type of \vehicle{} is mapped to the tensor type \texttt{\textcolor{isaTypeVarColor}{'a} tensor}.
In contrast to Agda and Rocq, Isabelle/HOL does not support dependent types.
Hence, while Isabelle/HOL's tensor type is parametric in its element type (\texttt{\textcolor{isaTypeVarColor}{'a}}) it does not support constraining tensors to a particular shape.
Consequently, Isabelle/HOL's tensor addition implementation checks shape compatibility during evaluation and returns the type's default member \mintinline{isabelle}{undefined} in case of incompatible shapes.
Fortunately, \vehicle's type checker statically analyzes shape compatibility thus ensuring that no ill-shaped tensor operations appear in the generated assumptions.

Nonetheless, we must ensure that we preserve tensor shape information for the networks represented inside Isabelle/HOL.
If a neural network's return type is only given as \texttt{\textcolor{isaTypeVarColor}{'a} tensor},
extracting a component may return \mintinline{isabelle}{undefined} unless the tensor is of suitable shape which we would not know from the type alone.
To this end, we leverage Isabelle/HOL's semantic subtyping infrastructure~\cite[Sec. 8.5.2]{DBLP:books/sp/NipkowPW02} to preserve the typing information in Isabelle/HOL despite its lack of dependent types.
This allows us to construct a type of all tensors of a particular shape. For example, we can define a real valued input vector of shape \mintinline{isabelle}{[2]} as follows:

\begin{minted}[fontsize=\listingsize, xleftmargin=\parindent]{isabelle}
typedef InputTensor = "{ a :: R tensor. (dims a) = [2] }"
  using dims_tensor_from_lookup by blast
\end{minted}


\paragraph{Type Coercion.}
In the context of larger expressions, elements of custom subtypes like \mintinline{isabelle}{InputTensor} inevitably have to interact with other tensor types.
By default, this would lead to type errors.
While this can be resolved by explicitly wrapping all variables in appropriate conversion functions, this would make expressions unwieldy.







Instead, we rely on \emph{automatic coercion} functions which allow us to instruct Isabelle/HOL to automatically convert, e.g., from \texttt{InputTensor} to \texttt{R tensor}.
Ideally, we would allow seamless bidirectional coercions between \texttt{InputTensor} and \texttt{R tensor}.
However, Isabelle/HOL does not allow for cyclic coercions.
To circumvent this, we define an (isomorphic) tensor type \texttt{\textcolor{isaTypeVarColor}{'a} FlexTensor}.
We leverage this additional type to break cycles by ensuring that all functions take \emph{inputs} of type \texttt{\textcolor{isaTypeVarColor}{'a} tensor}, and \emph{outputs} of type \texttt{\textcolor{isaTypeVarColor}{'a} FlexTensor}.
We then define automatic coercions from custom subtypes to \texttt{\textcolor{isaTypeVarColor}{'a} tensor} as well as coercions from \texttt{\textcolor{isaTypeVarColor}{'a} FlexTensor} to \texttt{\textcolor{isaTypeVarColor}{'a} tensor}.
This enables arbitrary composition of different tensor types (with matching shapes guaranteed by \vehicle).

Our companion library comprises the definition of \texttt{\textcolor{isaTypeVarColor}{'a} FlexTensor}, its automatic coercion functions, and wrapper functions for the tensor operations provided by \textit{Deep\_Learning} which satisfy the typing constraints outlined above.

\subsubsection{Compilation of \vehicle{} Specifications to Isabelle/HOL}
The high-level approach of \vehicle's Isabelle/HOL backend is similar to the strategy pursued by the Agda and Rocq backend, i.e., a given \vehicle specification file is translated into an Isabelle/HOL theory by mapping \vehicle expressions to Isabelle/HOL expressions.
Unlike Agda, \vehicle's Isabelle/HOL backend currently has no support for a proof cache. We now discuss compilation of custom tensor types, neural network properties and function definitions, in turn.

\paragraph{Compilation of Custom Tensor Types.}
Tensor types with a specific shape, like \mintinline{isabelle}{InputTensor}, are compiled to \mintinline{isabelle}{typedef} declarations as discussed above.
In addition to the definition itself, \vehicle{} also generates a proof \correction{of} the non-emptiness of the generated type which is required by Isabelle/HOL.
\vehicle{} also defines and registers the automatic coercion from \texttt{\textcolor{isaTypeVarColor}{'a} FlexTensor} to \mintinline{isabelle}{InputTensor} as well as from \mintinline{isabelle}{InputTensor} to \texttt{\textcolor{isaTypeVarColor}{'a} tensor}.
Finally, each custom type is equipped with a commonly needed simplification lemma (and its proof) enabling easy access to a tensor's shape and elements:
\begin{minted}[fontsize=\listingsize, xleftmargin=\parindent]{isabelle}
lemma InputTensor_tensor_rewrite0[simp]:
  assumes "prod_list shape = length elems"
      and "shape = (2 :: nat) # []"
    shows "(Rep_tensor (Rep_InputTensor (Abs_InputTensor (Abs_tensor (shape,elems)))))
           = (shape,elems)"
\end{minted}

\paragraph{Networks and their Properties.}
To modularize the reasoning about neural networks and their properties, we translate a given \vehicle specification file into an Isabelle/HOL \mintinline{isabelle}{locale}.
A locale combines a set of fixed parameters with assumptions about these fixed parameters which can be used in subsequent definitions, lemmas and proofs.
Locales can also be extended or combined with other locales in downstream theories to allow composition of different sets of assumptions.
In our case, we define the neural networks as fixed parameters of a locale and formalize the properties of the network provided by \vehicle as assumptions of the locale.
By relying on Isabelle/HOL's locale infrastructure, we avoid introducing the neural network properties directly as axioms in the global scope (which would be unproven from the perspective of Isabelle/HOL's core).
The resulting locale is of the following form:
\begin{minted}[fontsize=\listingsize, xleftmargin=\parindent]{isabelle}
locale WindCtrl  = 
  fixes controller :: " InputVector => OutputVector"
  assumes safe:"(\<forall> x. (safeInput controller x) --> (safeOutput controller x))"
  begin end
\end{minted}
To prove system-level properties for a given neural-cyber-physical system, Isabelle/HOL users can import the \vehicle generated file into their theory.
Subsequently, they can leverage the fixed parameters and assumptions of the locale in their proofs.
For example, we can define a scalar version of the DNN's execution within the locale's context as follows:
\begin{minted}[fontsize=\listingsize, xleftmargin=\parindent]{isabelle}
context WindCtrl
begin
    definition controller_scalar :: "real => real => real"
    where "controller_scalar p1 p2 = (
        lookup (Rep_OutputVector (controller 
            (normalise controller (Abs_InputVector (tensor_from_vec [2] [p1, p2])))))
            [WindControllerSpec.velocity])"
    lemma finalState_safe: (* ... *)  end
\end{minted}

\paragraph{Function Definitions.}
Function definitions are constructed as regular Isabelle/HOL definitions.
As seen in the example above, the definition of the neural network only exists relative to the locale \texttt{WindCtrl} while the definitions of \texttt{safeInput} and \texttt{safeOutput} happen outside the locale.
Consequently, we extend all definitions by additional parameters for the
referenced neural network (e.g. \texttt{safeInput} now has as first input the neural
network \texttt{controller}). See A.3.


\subsection{Imandra}\label{sec:imandra}

Imandra is an industrial automated reasoning system whose modelling language, IML, is based on a pure subset of OCaml~\cite{passmore2020imandra}. It provides support for infinite-precision reals~\cite{10.1007/978-3-642-38574-2_12}, which we make use of in this paper\correction{.}
Imandra is used in production for formal verification in finance by firms such as Goldman Sachs, Broadridge and Citi.
There is a history of neural network verification in Imandra: fully executable neural network representations in IML have been subjected to proofs and counterexample analysis~\cite{desmartin2022checkinn,desmartin2025certified}. In addition, verification of autonomous (cyber-physical) systems is part of Imandra's portfolio\correction{.}\footnote{\url{https://www.imandra.ai/autonomous-systems}} Thus, combining the two themes bears promise. 
Imandra's logic is classical and all propositions are expressed as executable boolean-valued predicates, so \vehicle{}'s decidability distinction maps straightforwardly.
A key feature of Imandra's core logic is a semi-decision procedure that is complete for counterexamples in a precise sense, even in the presence of recursion, nonlinearity and higher-order functions~\cite{passmore2020imandra}.

\subsubsection{Type System}
Like Isabelle, Imandra is not dependently typed.
As part of this work, we developed a tensor library for IML (1,739 LOC across 5 IML files) providing tensor operations with algebraic properties proven within Imandra.
Tensors are represented as records with well-formedness maintained by smart constructors and validity predicates:
\begin{minted}[fontsize=\listingsize, xleftmargin=\parindent]{ocaml}
type 'a tensor = { dims : int list; vec : 'a list }
let tensor_from_vec (d : int list) (v : 'a list) : 'a tensor = { dims = d; vec = v }
\end{minted}

\subsubsection{Translation Strategy}
Neural networks are declared as opaque IML functions using the \mintinline{ocaml}{[@@opaque]} attribute, 
and externally verified properties become axioms (typically expressed using tensor arithmetic) about the opaque neural network.
For example, the code below shows some of the components of the compiled wind controller specification:
\begin{minted}[fontsize=\listingsize, xleftmargin=\parindent]{ocaml}
let controller : real tensor -> real tensor = () [@@opaque]
(* ... *)
axiom safe x = safe_input x ==> safe_output x
\end{minted}

\noindent
The full compiler-generated Imandra file is shown in Appendix A.4. In future work, we envision importing \vehicle{}'s externally found proofs alongside more extensive executable representations of the 
neural network controllers expressed in Imandra's logic, so as to take advantage of Imandra's verified neural network proof checker~\cite{desmartin2025certified} 
which can import proofs from external neural network verifiers such as Marabou.

 \correction{
 To scale to more complex examples of NCPS, we will need to upgrade some of the standard Imandra libraries. 
 Imandra's real type is the full field of real numbers: input-language literals (such as 
 \mintinline{ocaml}{ 3.14 : real}) are written as rationals, but are interpreted as reals with reasoning taking place using the theory of real closed fields (and computational support for computing with these values taking place via the minimal real closed field, the real algebraic numbers). What is absent from the standard libraries is the more general apparatus of real analysis: transcendental functions such as  \mintinline{ocaml}{exp} and  \mintinline{ocaml}{ln}, limits, derivatives, and certified interval arithmetic over transcendental expressions.}

 \correction{
Because Imandra is a computational logic in which every definition is executable, the natural path is to develop analysis computationally rather than classically. E.g., \mintinline{ocaml}{exp} and  \mintinline{ocaml}{ln} would be defined as fast-converging Cauchy sequences of rationals equipped with a computable modulus of convergence; continuity and derivatives would be stated and proved in the computable-analysis style~\cite{alma991030766629705251,C09}; and range constraints over reals would be discharged by a verified interval-arithmetic tactic built on those definitions. The resulting library would be genuinely different from Rocq's MathComp Analysis, as every theorem would carry a computation that can be evaluated on concrete inputs. Though nontrivial, this is mostly an engineering programme that is very much in the spirit of Imandra's design. We expect the Imandra community to take it on over time.}


\section{Case Study: Infinite-Horizon Safety of a Continuous Neural-Cyber-Physical System}
\label{section:case-studies}



Neural networks promise
to give a viable alternative to traditional Bayesian
estimation methods in the medical domain~\cite{poweleit_artificial_2023}.
This case study concerns a medical device that administers an antibiotic Vancomycin, taking into account the patient's current drug concentration,
temperature, white blood cell count, age, and weight, using simulated data for
training.
We are assuming that the neural network was trained to
administer an optimal dose given observations based on these five parameters
and has a type
$\mathbb{R}^5\rightarrow \mathbb{R}$.
As drug dosing is a safety-critical application, we also wish to obtain
worst-case guarantees about the safety of dosing.
Using the same methodology as in Section~\ref{sec:wind-controller-example}, we will first discuss the overall system safety, and then discuss the
role of neural network verification in it.



\begin{wraptable}{r}{0.5\linewidth}
  \listingsize
  \centering
  \vspace*{-1.75em}
  \caption{\correction{Mathematical libraries used in case study. The Rocq
    proof script is approximately 1{,}200 lines and with roughly 860
    \texttt{apply}/\texttt{rewrite}/\texttt{exact} tactic invocations drawing
    on about 200 distinct library identifiers.}}
  \vspace*{-0.5em}
  \begin{tabular}{@{}ll@{}}
    \toprule
    \textbf{Library / module} & \textbf{Approx.\ uses} \\ \midrule
    \multicolumn{2}{@{}l}{MathComp}\\\midrule
    \texttt{ssralg}, \texttt{ssrnum}, \texttt{order}                                & $\sim$100 lemmas \\
    \texttt{bigop}, \texttt{seq}, \texttt{tuple}, \texttt{ssrnat}                   & $\sim$15 lemmas \\\midrule
    \multicolumn{2}{@{}l}{MathComp Analysis}\\\midrule
    Derivatives~\cite{affeldt_mathcomp-analysis_2026}                    & $\sim$25 lemmas \\
    Topology, filters, limits                                             & $\sim$20 lemmas \\
    Real exponentials and logarithms                                      & $\sim$20 lemmas \\\midrule\midrule
    \vehicle{} (tensor primitives)                                                   & $\sim$15 lemmas \\

    CoqInterval~\cite{martin-dorel_proving_2016}                                             &
    \makecell[l]{10 \texttt{interval}\\tactic invocations} \\
    \texttt{lra}                                                                              & 25 invocations \\ \bottomrule
  \end{tabular}
  \label{tab:medical-libraries}
\end{wraptable}

\looseness=-1
\correction{This case serves to discuss the situation when the larger
system verification may involve reasoning about the system dynamics explicitly:
in particular, reasoning about non-linear functions and their derivatives. Such
verification is best served by an ITP whose real-analysis libraries are mature
enough to handle the elementary calculus directly and whose numerical-bound
machinery can discharge the transcendental comparisons that arise. Our case
study draws on MathComp-Analysis~\cite{affeldt_mathcomp-analysis_2026} for
derivatives, continuity, and integration
, and on
CoqInterval~\cite{martin-dorel_proving_2016} for verified real-valued
inequalities. Beyond this experiment, the wider Rocq ecosystem already supports
richer dynamic-systems verification through LaSalle's invariance principle for
stability arguments~\cite{cohen_rouhling_itp_2017} with demonstrated case
studies (e.g., an inverted pendulum~\cite{rouhling_cpp_2018}) and recent
analytic-ODE solutions~\cite{thies_cpp_2026}. The further additions needed to
reach the state of the art in full continuous-dynamics verification are modest:
principally, a general-purpose Picard--Lindel\"of theorem and the resulting
flow theory.}


\begin{figure}[h]
  \noindent \hspace{-1.5em} \begin{subfigure}[]{0.69\textwidth}
    \begin{minted}[fontsize=\listingsize, xleftmargin=1.5em]{coq}
Definition derive (f : V -> W) a v := lim (
 (fun h => h^-1 *: ((f \o shift a) (h *: v) - f a)) @ 0^').

Notation "''D_' v f c" := (derive f c v).

Definition derive1 V (f : R -> V) (a : R) := lim (
 (fun h => h^-1 *: (f (h + a) - f a)) @ 0^').

Notation "f ^` ()" := (derive1 f).

Definition continuous_at (T U : nbhsType) (x : T) 
 (f : T -> U) := (f%function @ x --> f%function x).

Notation continuous f := (forall x, continuous_at x f).

\end{minted}
  \end{subfigure}
  \hfill
  \vrule
  \hfill
  \begin{subfigure}[]{0.3\textwidth}
    \listingsize
    \begin{align*}
      D_vf(x)=\lim_{h\rightarrow 0}\frac{f(x+hv)-f(x)}{h}\\\\\\
      f'(x)=\lim_{h\rightarrow 0}\frac{f(x+h)-f(x)}{h}\\\\\\
      \lim_{a\rightarrow x}f(a)=f(x)
    \end{align*}
  \end{subfigure}
  \vspace{-0.75em}
  \caption{MathComp Analysis definitions of multivariate derivative, differentiation and
    continuity and their mathematical counterparts.}
  \label{fig:analysis-defs}
  \Description[Introduction to MathComp notation]{
    This figure serves to show the MathComp notation and their respective
    definitions, as well as the mathematical definitions and notations. If
    covers multivariate derivatives and differentiation as defined in the
    derive.v file of MathComp analysis,  and continuity, as defined in
    classical/filter.v of MathComp analysis.}
\end{figure}

\correction{\autoref{tab:medical-libraries} summarises the Rocq libraries the medical
case study draws on, with approximate usage counts. The bulk of the proof
relies on MathComp algebra/order and the MathComp Analysis derivative layer
(including lemmas reasoning about maxima of monotone functions); bridging to
Stdlib's reals via \texttt{Rstruct}/\texttt{Rstruct\_topology} is required to
invoke CoqInterval, which in turn discharges the approximation arguments.}

\subsection{System Safety in Rocq}

\begin{wrapfigure}[15]{r}{0.5\textwidth}
  \centering
  \vspace*{-1.75em}
	\includegraphics[width=0.45\textwidth]{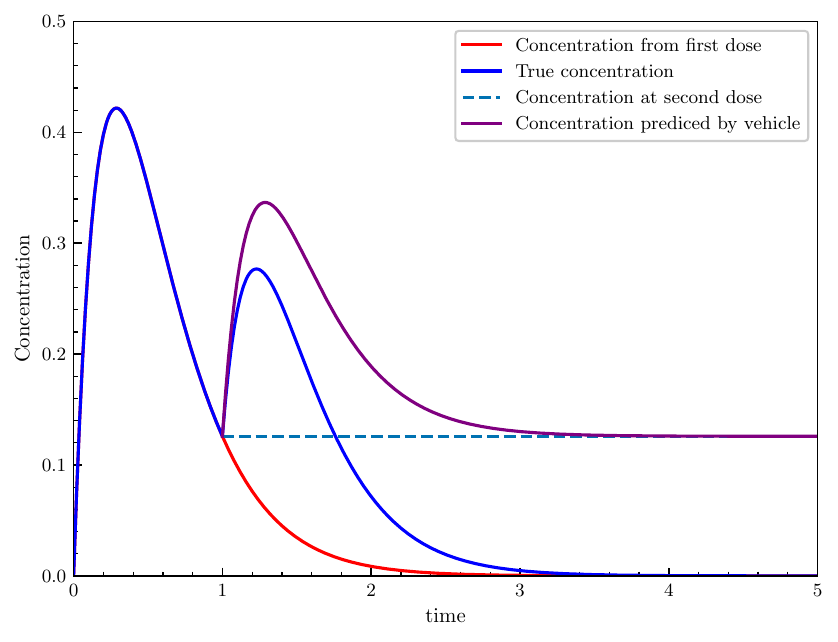}
	\vspace{-1.3em}
    \caption{A graph showing how the concentrations are interpreted by different
    parts of the verification process. The red line is equal to the blue line
    until 1, representing~\autoref{fig:single-dose} and~\autoref{fig:sum-doses}, respectively.}
  \label{fig:medical-conc}
  \Description[Graph of the different concentrations as interpreted by different
  parts of the proof]{This graph shows a blue line representing the true
    concentration of the patient, a red line which is the same as the blue line
    until the second dose is given, which it ignores and continues to decay.
    Finally, it shows a blue dashed line at the point where the second dose is
    given and a purple line that is the value of concentration predicted by \vehicle{}.}
\end{wrapfigure}
As shown in~\autoref{fig:rocq-defs}, the state consists of the patient
information. Furthermore, \texttt{n\_doses} provides
a function that returns the sequence of doses chosen by the neural network.
This is because, unlike the Wind Controller
example, the concentration at any point in time is dependent on all previous
doses, and this information cannot be obtained from \textit{only} the previous
state.
Note that \texttt{n\_doses} uses the neural network called
\texttt{network} to perform the computation. The \texttt{network} used here is in a reduced
form, where all information except the concentration has been partially applied,
making it of type $\mathbb{R}\rightarrow\mathbb{R}$. In line with the other example, it
is this occurrence of the neural network that raises concerns about the overall
system safety.

\begin{figure}[b]
  \begin{subfigure}[t]{0.53\linewidth}

\begin{minted}[fontsize=\listingsize, xleftmargin=1.5em]{coq}
Fixpoint n_doses (initial : R) (n : nat)
  : n.+1-tuple :=
  match n with
  | 0 => [:: network initial]
  | n'.+1 =>
      let Doses := n_doses initial n' in
      rcons Doses (network
      (total_conc D'.+1))%R)) end.
\end{minted}
\end{subfigure}
  \hfill
  \vrule
  \hfill
\begin{subfigure}[t]{0.46\linewidth}
    \begin{minted}[fontsize=\listingsize, xleftmargin=1.5em]{coq}
Record state := State
         { C : R
         ; T : R
         ; wbc : R
         ; age : R
         ; weight : R}.
\end{minted}
\end{subfigure}
  \Description[table functions from Rocq]{Here we define some notable functions
  from the Rocq proof, including the State record consisting of 5 reals, and
  n doses, a function that iteratively applies the network n times, and returns
  an n + 1 tuple of all the doses given}
\caption{The Rocq definition for the state and the iterative function (similar
  to \texttt{next\_state} in \autoref{sec:wind-controller-example}). Note that \texttt{network} is a
  reduced form that only takes the concentration as its
  argument.} 
\label{fig:rocq-defs}
\end{figure}




To model drug concentration after a single dose, we follow the standard
exponential decay \emph{pharmacokinetic equation}~\cite{talevi_one-compartment_2021}
shown in~\autoref{fig:single-dose}. Pharmacokinetic equations are a system of
ODEs that describe how the body processes doses of drugs. In our case, we focus
on the \emph{one-compartment} pharmacokinetic equations, which is a more simplistic
model that considers the body as a single conglomeration of blood. More intricate models
can be found in~\citet{taleviTwoCompartmentPharmacokineticModel2021}.
The
corresponding function \texttt{Concentration} takes as input the dose $D$ and time
$t$. As parameters, it has the volume of blood $V_d$ and an absorption and
elimination coefficients $k_a$ and $k_e$ that characterise, respectively, the
rates of absorption and elimination of the drug in the patient's blood \correction{as seen in Figure \ref{fig:single-dose}.}


\begin{figure}[t]
  \begin{subfigure}[]{0.53\textwidth}
    \begin{minted}[fontsize=\listingsize, xleftmargin=1.5em]{coq}
Definition Concentration (D t : R) : R :=
  ((D * Ka) / (Vd * (Ka - Ke))) *
  (expR ((-Ke) * t) - expR ((-Ka) * t)).
\end{minted}
  \end{subfigure}
  \hfill
  \vrule
  \hfill
  \begin{subfigure}[]{0.46\textwidth}
    \listingsize
    \begin{equation*}
      C(D,t)=\frac{D\cdot k_a}{V_d\cdot (k_a - k_e)}\cdot \left(e^{-k_e\cdot t}-e^{-k_a\cdot t}\right)
    \end{equation*}
  \end{subfigure}
  \caption{Pharmacokinetic equation as a concentration function relying on the absorption and elimination coefficients $k_a, k_e$, blood volume $V_d$ and dose $D$ at
    time $t$. Adapted from Equation 15 in ``One-Compartment Pharmacokinetic Model``,~\citet{talevi_one-compartment_2021}.}
  \label{fig:single-dose}
  \Description[Definition of concentration in Rocq and respective formula]{
    The mathematical definition of the one compartment pharmacokinetic equation
    and its definition in Rocq.}
\end{figure}

\subsection*{} To model drug concentration after a series of doses, we define the function \texttt{total\_conc} in~\autoref{fig:sum-doses}
that takes as a parameter \texttt{Ds}, a tuple of all doses given to the patient
at regular dosing intervals (\texttt{ttd}) and takes the sum of all concentrations from
all doses given a point in time. Note that, in the context of MathComp, an
\mintinline{coq}{n.-tuple R} is a list of length \mintinline{coq}{n} of type \mintinline{coq}{R}:

\begin{figure}[h]
  \begin{subfigure}[]{0.51\textwidth}
    \begin{minted}[fontsize=\listingsize, xleftmargin=\parindent]{coq}
Definition total_conc {n} (Ds : n.-tuple R) 
  : R -> R
  := \sum_(i < n) ((cst 0)
        \max (Concentration (tnth Ds i))
           \o (shift (-ttd * i%:R))).
\end{minted}
  \end{subfigure}
  \hfill
  \vrule
  \hfill
  \begin{subfigure}[]{0.47\textwidth}
    \vspace{-1em}
    \begin{multline*}
      C(\bar{D}, t)=\sum_{i=0}^n\text{max}\left(0, \frac{\bar{D}_i\cdot k_a}{V_d\cdot (k_a - k_e)}\right. \\
        \left. \cdot \left(e^{-k_e\cdot (t - i \cdot \mathit{ttd})}-e^{-k_a\cdot (t - i \cdot \mathit{ttd})}\right)\right)
      \end{multline*}
    \end{subfigure}
  \caption{The function for the total concentration at time $t$, where \texttt{Ds} is
    an $n$ tuple of all the doses given.}
  \label{fig:sum-doses}
  \Description[Definition of total concentration in mathematical notation and in
  Rocq]{The definition of the total concentration, defined as the sum of the max
  of 0 and the concentration in both Rocq and mathematical notation}
\end{figure}

This is an over-approximation of the actual
value of the concentration, because the exponential decay is only accounting for
the concentration from each
individual dose, and not the total concentration at any point in time. Because
our safety property is expressed as \textit{less than or equal to} a critical value, this
approximation does not impede the safety of the system.

Finally, the desired safety-critical property can be expressed in Rocq, as shown
in~\autoref{thm:doses_safe}:

\begin{theorem}[Main Safety Theorem]\label{thm:doses_safe}
For any neural network $\mathcal{N}$, the concentration of the drug in a patient's
body never reaches a critical value.
\vspace{0.5em}

\hspace{-0.5em}\begin{minipage}{0.48\textwidth}
\begin{minted}[fontsize=\listingsize]{coq}
Theorem doses_safe 
  (n : nat) (initial t : R)
  (initial_itv : 0 <= initial <= C_safe) :
    0 
    <= total_conc (n_doses initial n) t 
    <= C_safe.
\end{minted}
\end{minipage}
\hfill
  \vrule
\begin{minipage}{0.5\textwidth}
  \vspace{-1em}
  \begin{multline*}
    \forall \mathbf{x}\in\mathbb{R}^5, \forall t : \mathbb{R},
    \\
    0 \leq \sum_{i=0}^n\text{max} \left(0,\frac{\text{n\_doses}(\mathbf{x}_0, i)\cdot
        k_a}{V_d\cdot(k_a-k_e)}\cdot\right.\\
      \left.\left(e^{-k_e\cdot(t-i\cdot ttd)}-e^{-k_a\cdot(t-i\cdot ttd)}\right)\right)\leq C_{\mathit{safe}}
  \end{multline*}
\end{minipage}
  \hfill
\end{theorem}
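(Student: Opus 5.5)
The plan follows the compositional decomposition of equation~(\ref{eq:prop-impl}): we prove \texttt{doses\_safe} for an arbitrary function \texttt{network} satisfying a neural specification $\psi$ exported by \vehicle{} as an axiom. Concretely, $\psi$ says that for every concentration $c$ with $0 \le c \le C_{\mathit{safe}}$ the chosen dose satisfies
\[ 0 \le \mathcal{N}(c) \le D_{\max}(c), \]
where $D_{\max}(c)$ is a simple (e.g.\ affine) bound chosen so that adding one fresh dose on top of a residual concentration $c$ cannot push the peak over $C_{\mathit{safe}}$.

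The lower bound $0 \le$ \texttt{total\_conc} is immediate. Each summand is a $\max$ with \texttt{cst 0}, so a \texttt{sumr\_ge0}/\texttt{bigop} argument suffices.

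For the upper bound I will proceed by induction on $n$, strengthening the statement to an invariant about the dosing instants. The intended invariant is that for every $k \le n$ and every $t \in [k\cdot \mathit{ttd}, (k+1)\cdot\mathit{ttd}]$ (and $t \ge n\cdot\mathit{ttd}$ for the last interval) the total concentration is at most $C_{\mathit{safe}}$. Before the first dose, and for $t$ below a dose's start, the $\max$ clamps each summand to $0$, so those cases are trivial. Sign analysis shows that terms with $t < i\cdot\mathit{ttd}$ contribute nothing.

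The key analytic lemmas about the single-dose curve \texttt{Concentration} $D$ will be established with MathComp Analysis derivatives. First, for $D \ge 0$ and $k_a > k_e > 0$, the function is nonnegative on $t \ge 0$. Second, its derivative $\frac{D k_a}{V_d(k_a-k_e)}(-k_e e^{-k_e t} + k_a e^{-k_a t})$ vanishes exactly at $t^\ast = \ln(k_a/k_e)/(k_a-k_e)$. Third, the function is increasing before $t^\ast$ and decreasing after it, so its global maximum is $D\cdot\kappa$, where
\[ \kappa = \frac{k_a}{V_d(k_a-k_e)}\left(e^{-k_e t^\ast}-e^{-k_a t^\ast}\right). \]
Fourth, for $t \ge \mathit{ttd}$ the function is bounded by $D\cdot\rho$, where $\rho$ is the decay factor at one interval. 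Using these, the sum over earlier doses evaluated at the next dosing time is exactly the quantity fed to the network, $c = \texttt{total\_conc}(D'.+1)$. The new peak is then at most $c$ (residuals only decrease, since all earlier doses are past their peak when $\mathit{ttd} \ge t^\ast$) plus $\mathcal{N}(c)\cdot\kappa$. The specification $\psi$ then gives $c + \kappa\,\mathcal{N}(c) \le C_{\mathit{safe}}$. The induction hypothesis supplies $0 \le c \le C_{\mathit{safe}}$, which is precisely the precondition needed to invoke the axiom.

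I expect the main obstacle to be the monotonicity/maximum argument and the subsequent numeric inequalities. These require proving the sign of the derivative using \texttt{expR} monotonicity and logarithms, and applying MathComp Analysis' mean-value lemmas for monotone functions. The comparisons involving the concrete constants $k_a, k_e, V_d, \mathit{ttd}$ (e.g.\ $\mathit{ttd} \ge t^\ast$ and numeric bounds on $\kappa$) involve transcendental values. I will discharge them by bridging to Stdlib reals via \texttt{Rstruct} and calling CoqInterval's \texttt{interval} tactic, with the remaining linear glue handled by \texttt{lra}.

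A secondary difficulty is handling the $\max$ with \texttt{cst 0}, together with \texttt{rcons}/\texttt{tnth} reindexing in \texttt{n\_doses}. Splitting the sum as \texttt{big\_ord\_recr} should isolate the newest dose from the previous ones.
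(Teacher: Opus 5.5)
Your proposal is correct and follows essentially the same argument as the paper. Both proofs use induction on $n$ and split off the newest dose. The earlier residuals are bounded by their value at the new dosing instant, using $\mathit{ttd} \ge t^\ast$; that value is exactly the network's input, with $0 \le c \le C_{\mathit{safe}}$ supplied by the induction hypothesis. The new dose is bounded by its peak at $t^\ast = \ln(k_a/k_e)/(k_a-k_e)$, and $\psi$ closes the argument.

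The differences are minor. Your decay-factor lemma is never needed. You fix $k_a > k_e$, whereas the paper only assumes $k_a \neq k_e$; the other sign case is symmetric. Finally, arguing monotonicity of each clamped summand directly lets you avoid the derivability-of-$\max$ lemma that the paper had to add to MathComp Analysis.
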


\noindent This theorem represents $\Phi$
in~\autoref{eq:prop-impl}, and the remaining question is what should $\psi$ be?


\subsection{Neural Network Specification in \vehicle}

In order to derive the specification $\psi$ that \vehicle can verify, we restrict \texttt{ttd} to
be greater than or equal to the time of the peak concentration. This restriction
means that the concentration must be non-increasing at the point in time
when a dose is given\correction{.}\footnote{In real life
application, it is plausible that the application of neural networks can be
constrained in this way. This restriction allows the assumption that the
concentration at a dosing point is higher than any future point.}
Then, we find a specification of the network that is sufficient to prove~\autoref{thm:doses_safe}.
Because, when the neural network is applied, \correction{the} concentration is non-increasing, we can consider
only the concentration from all previous doses at that point.
Then, we consider
the output of the network as a new dose being given and compute its peak. If the
sum of the current concentration and calculated max concentration are less than
or equal to \texttt{C\_safe}, then it follows that at no point could the concentration
ever be greater than \texttt{C\_safe}. This constitutes our property $\psi$ defined as
\begin{equation}\label{eq:vcl-spec-maths}
  \forall \mathbf{x}\in\mathbb{R}^5, C(t)+C\left(\mathcal{N}(\mathbf{x}), \frac{\text{ln}\left(\frac{k_a}{k_e}\right)}{k_a-k_e}\right)\leq C_{\mathit{safe}},
\end{equation}
where $C(t)$ is the initial total concentration when the dose is given.
The \vehicle spec for $\psi$ is shown in~\autoref{lst:medical-vcl-spec} where property \texttt{safe}
encodes \autoref{eq:vcl-spec-maths}. As peak time is a potentially-irrational number, \texttt{Ke\_over},
\texttt{Ke\_under}, \texttt{Ka\_over} and \texttt{Ka\_under} are approximations of the
exponentials in~\autoref{fig:single-dose}. 

With these, \vehicle can be used to
verify~$\psi$\footnote{Due to
limitations in Marabou wrt. strict inequalities, adaptions to the specification
have had to be made. The
properties shown in the specification file are sufficient to prove the main
theorem.} by invoking the following command:

\begin{center}
\small{\texttt{vehicle verify -v Marabou -c cache -s pk.vcl -n
  pk:pk.onnx -p Ka:3.5 -p Ke:4.5 $\cdots$},} 
\end{center} 

\noindent where the \texttt{-p} arguments set the values for the parameters in the specification and the \texttt{-c} argument, which creates a cache for storing the parameter values and verification result for later retrevial.
\begin{listing}[t]
\vspace{-1.5em}
{
  \setlength{\columnsep}{-3.4cm}
\begin{multicols}{2}
\begin{minted}[numbers=left, fontsize=\listingsize, xleftmargin=1.5em]{vehicle}
type Input = Tensor Real [5]

conc   = 0
temp   = 1
wbc    = 2
age    = 3
weight = 4

type Output = Tensor Real [1]

@network
pk : Input -> Output

@parameter
Ka, Ke, Vd, C_safe, ttd, 
  Ka_over, Ka_under, 
  Ke_over, Ke_under : Real

@property
Ka_pos : Bool
Ka_pos = 0 < Ka

@property
Ke_pos : Bool
Ke_pos = 0 < Ke

@property
Ke_n_Ka : Bool
Ke_n_Ka = Ka != Ke

@property
Vd_pos : Bool
Vd_pos = 0 < Vd
\end{minted}
\columnbreak

\begin{minted}[numbers=left, firstnumber=34, fontsize=\listingsize, xleftmargin=1.5em]{vehicle}
@property
C_safe_pos : Bool
C_safe_pos = 0 < C_safe

@property
ttd_pos : Bool
ttd_pos = 0 < ttd

safeInput : Input -> Bool
safeInput x =
  0    <= x ! conc   <= C_safe and
  36.5 <= x ! temp   <= 40     and
  7.5  <= x ! wbc    <= 20     and
  18   <= x ! age    <= 89     and
  50   <= x ! weight <= 100

safeOutput : Input -> Bool
safeOutput x =
  let y = ((((normpk x) ! 0)*Ka) / (Vd*(Ka - Ke))) 
  in if Ka < Ke
   then (x!conc) + y*(Ke_under - Ka_over) <= C_safe
   else (x!conc) + y*(Ke_over - Ka_under) <= C_safe

@property
safe : Bool
safe = forall x . safeInput x => safeOutput x

nonNegOutput : Input -> Bool
nonNegOutput x = 0 <= (normpk x) ! 0

@property
nonNeg : Bool
nonNeg = forall x . safeInput x => nonNegOutput x
\end{minted}
\end{multicols}
}
\vspace{-0.8em}
\caption{The \vehicle specification file, expressing the neural network safety property. }
\label{lst:medical-vcl-spec}
\end{listing}

\looseness=-1
Our specification is parametric in the patient-specific
variables, such as $k_a$ and
$k_e$. 
In fact, the proof in Rocq only relies
on the parameters being positive and $k_a\neq k_e$. These restrictions are
expressed as properties in \vehicle{}\correction{.}\footnote{Note that \vehicle cannot
verify that \texttt{ttd} is greater than or equal to the time of peak dose,
because it is potentially irrational. Instead, the job of dealing with the
irrational parameters is delegated to Rocq.} This is a generalisation of the approach taken in the WindController example presented earlier, where we assumed that system parameters such as the sensor error would be the same for any car. Now our proofs are parametric on the patient's data. Crucially, the execution of the above command gives us a "Yes" answer: the neural network is proven safe by the neural solver Marabou.
%
The specification then is extracted to \rocq using 
\texttt{vehicle export -t Rocq -c
  cache -o Spec.v -r}.
\noindent  This sets the target language to \rocq, the cache path, the output file and a \rocq specific flag \texttt{-r}, that causes
\vehicle to extract to the constructive reals as defined in the \rocq standard
library, as opposed to the default MathComp \texttt{realType} interface. We obtain a
file translating the neural network safety property of
Listing~\ref{lst:medical-vcl-spec} in Rocq, see Appendix B.


\subsection{System Correctness}

Next, we prove in Rocq that
the properties verified by \vehicle are sufficient to prove the main safety \autoref{thm:doses_safe}:

\begin{theorem}
  \label{thm:doses-safe-restated}
  For any neural network $\mathcal{N}$, for any valid initial state $\mathbf{x}$ (omitted from
  the equation for brevity), $\psi$ (\autoref{eq:vcl-spec-maths}) implies an over-dose could never occur.

\hspace{-0.5em}\begin{minipage}{0.50\textwidth}
  \begin{minted}[fontsize=\listingsize]{coq}
Theorem doses_safe 
  (n : nat) (initial t : R)
  (initial_itv : 0 <= initial <= C_safe) :
  (forall C : R, 0 <= C <= C_safe -> 
    C + (Concentration (network C) dCdt_root) 
      <= C_safe)
  (0 <= total_conc (n_doses initial n) t 
    <= C_safe).
  \end{minted}
\end{minipage}
\hfill
  \vrule
  \hfill
  \begin{minipage}{0.45\textwidth}
  \begin{multline*}
    \mkern-18mu \forall \mathbf{x}\in\mathbb{R}^5, \forall t : \mathbb{R}, \\
    C(t)+C\left(\mathcal{N}(\mathbf{x}), \frac{\text{ln}\left(\frac{k_a}{k_e}\right)}{k_a-k_e}\right)\leq C_{\mathit{safe}} \implies\\
    0 \leq \sum_{i=0}^n\text{max} \left(0,\frac{\text{n\_doses}(\mathbf{x}_0, i)\cdot
        k_a}{V_d\cdot(k_a-k_e)}\cdot\right.\\
      \left.\left(e^{-k_e\cdot(t-i\cdot ttd)}-e^{-k_a\cdot(t-i\cdot ttd)}\right)\right)\leq C_{\mathit{safe}}
  \end{multline*}
\end{minipage}
\end{theorem}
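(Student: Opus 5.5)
We will prove Theorem~\ref{thm:doses-safe-restated} by induction on the number of doses $n$, maintaining an invariant strong enough to survive each new dose. The basic analytic facts about a single dose come first. Write $C(D,t)$ for \texttt{Concentration} and $t^\ast = \ln(k_a/k_e)/(k_a-k_e)$ (the term \texttt{dCdt\_root}).

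\textbf{Single-dose analysis.} Using the MathComp Analysis derivative layer, we show that for $D \ge 0$ the map $t \mapsto C(D,t)$ has the following shape:
\begin{itemize}
\item it is nonnegative for $t \ge 0$, since $e^{-k_e t} - e^{-k_a t}$ has the same sign as $k_a - k_e$;
\item its derivative $\frac{Dk_a}{V_d(k_a-k_e)}(-k_e e^{-k_e t} + k_a e^{-k_a t})$ vanishes exactly at $t^\ast$;
\item it is nondecreasing on $[0,t^\ast]$ and nonincreasing on $[t^\ast,\infty)$.
\end{itemize}
Hence $\max_{t\ge 0} C(D,t) = C(D,t^\ast)$. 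The derivative signs follow from monotonicity of $\exp$ and a case split on $k_a < k_e$ versus $k_a > k_e$. Because \texttt{total\_conc} clamps each summand with $\max(0,\cdot)$, nonnegativity of the total is immediate: it is a sum of nonnegative terms. The lower bound $0 \le$ of the theorem therefore costs nothing, and all the work is in the upper bound.

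\textbf{Invariant.} The key lemma states that, for every $n$, the total concentration produced by \texttt{n\_doses initial n} is at most $C_{\mathit{safe}}$ for all $t$. We prove a stronger, time-structured invariant:
\begin{itemize}
\item on the interval $[n\cdot ttd, \infty)$ the total concentration is nonincreasing;
\item its supremum over all $t$ is at most $C_{\mathit{safe}}$.
\end{itemize}

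\textbf{Induction step.} Adding dose $n{+}1$ at time $s = (n{+}1)\,ttd$ splits the time axis in two.
\begin{itemize}
\item For $t < s$, the new summand is clamped to $0$, so the old bound applies unchanged.
\item For $t \ge s$, the old part is at most its value $c$ at time $s$. This uses monotonicity, which holds because the assumption $ttd \ge t^\ast$ places every earlier dose past its peak.
\item The new part is at most $C(\mathcal N(c), t^\ast)$ by the single-dose maximum.
\end{itemize}
Since $0 \le c \le C_{\mathit{safe}}$ by the induction hypothesis, the network hypothesis $\psi$, instantiated at $C = c$, gives $c + C(\mathcal N(c),t^\ast) \le C_{\mathit{safe}}$. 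For monotonicity after $s$, each summand is evaluated at least $t^\ast$ past its own dose time, so it is individually nonincreasing; a sum of nonincreasing functions is nonincreasing. The clamp $\max(0,\cdot)$ preserves this, since each summand is already nonnegative there.

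\textbf{Bookkeeping and the main obstacle.} We must align \texttt{n\_doses} with \texttt{total\_conc}, because \texttt{rcons} appends the new dose as the last index. This needs a lemma splitting \texttt{\textbackslash sum\_(i < n.+2)} into the first $n{+}1$ terms plus the last, together with \texttt{tnth} on \texttt{rcons}. The network is applied to \texttt{total\_conc} evaluated at the dose time, which we match with $c$. The base case is the single initial dose, where the hypothesis at $C = 0$ is not quite what is needed. Instead we treat the base case as the step from the empty regimen, with $c = 0$; if the definition's indexing does not allow this, we feed \texttt{initial} and use $0 \le \texttt{initial}$. I expect the main obstacle to be the monotonicity and peak argument in MathComp Analysis. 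It requires deriving $t^\ast$ from the derivative root with \texttt{expR}/\texttt{ln} identities, uniformly in the sign of $k_a - k_e$, and then using the mean-value or monotone-from-derivative lemmas. I would isolate this as a standalone lemma \texttt{Concentration\_le\_peak}, proved once, with CoqInterval or \texttt{lra} closing the residual real inequalities.
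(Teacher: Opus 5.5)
Your argument is the paper's argument: induction on $n$, a case split at the new dose time $s=(n+1)\,ttd$, bounding the earlier doses by their total $c$ at time $s$ via post-peak monotonicity (from $ttd\ge t^\ast$), bounding the new dose by its peak $C(\mathcal N(c),t^\ast)$, and closing with $\psi$ at $C=c$. The one technical difference is in your favour. You obtain monotonicity from the unclamped single-dose curve and push it through $\max(0,\cdot)$, so you never differentiate the clamped sum and do not need the new \texttt{der\_max} lemma. Your fallback treatment of the first dose, $\psi$ at $C=\texttt{initial}$ together with $0\le\texttt{initial}$, is the right one. The ``$c=0$'' variant is not, since that dose is $\mathcal N(\texttt{initial})$, not $\mathcal N(0)$.

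Two points need repair.

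\textbf{Off-by-one in the invariant.} The monotonicity clause is off by one dose interval. \texttt{n\_doses initial n} places its last dose at $n\cdot ttd$, and that summand is still rising on $[n\cdot ttd,\,n\cdot ttd+t^\ast)$. So the claim can only hold on $[(n+1)\cdot ttd,\infty)$. For the same reason, after adding the dose at $s$ it is false that every summand is at least $t^\ast$ past its dose on all of $[s,\infty)$. The correct statement follows directly from the fact that every dose $i\le n$ is at least $ttd\ge t^\ast$ old on that interval. Prove it as a standalone lemma rather than carrying it through the induction.

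\textbf{Nonnegativity of doses.} You use $D\ge 0$ for every network output: your single-dose analysis assumes it, and the case $t<s$ needs the new summand to vanish. But $\psi$ only bounds the dose from above. If $\mathcal N(c)<0$, then $C(\mathcal N(c),t^\ast)<0$ and $\psi$ holds trivially. Meanwhile $\max(0,C(\mathcal N(c),t-s))$ is positive for $t<s$ and grows linearly in $|\mathcal N(c)|$, so the upper bound genuinely fails. You must invoke the separately verified \texttt{nonNeg} property. It applies at each dose because the induction hypothesis gives $0\le c\le C_{\mathit{safe}}$ and the patient data are valid. The paper relies on the same fact implicitly when it asserts that the new concentration is strictly positive after $S(n)\cdot ttd$.
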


\begin{proof}
We perform induction on $n$, the size of the tuple $Ds$ from~\autoref{fig:sum-doses}. For $n=0$, it
is trivial, as concentration is always $0$. For the inductive case,
we proceed by cases on $t$ and $S(n)\cdot \mathit{ttd}$.

\textbf{Case 1:}
    $t \leq S(n)\cdot \mathit{ttd}$. The $S(n)$\textsuperscript{th} dose has not been
    given or has been given at this instant, so it follows from the inductive
    hypothesis. This is shwon by the blue line from 0 to 1
    in~\autoref{fig:medical-conc}.

\textbf{Case 2:}  $S(n) \cdot \mathit{ttd} < t$ implies that the $S(n)$\textsuperscript{th} dose has been given and had some
time to be absorbed. We first note that the equation in  \autoref{fig:sum-doses} is
differentiable for all $t$ s.t. $S(n) \cdot \mathit{ttd} < t$, because max$(x, y)$ is differentiable
everywhere except $x=y$, and the $n$\textsuperscript{th} concentration
function is strictly
positive for all $t > S(n)\cdot \mathit{ttd}$. The sum can be split into two
components, the concentration from all previous doses (the red line
in~\autoref{fig:medical-conc}) and the newly added dose:
\begin{equation}
  \sum_{i=0}^n C_i(t)+C_{n+1}(t)\leq C_{\mathit{safe}}
\end{equation}
\textit{Note that $C_i(t)$ is equivalent to $C(D, t)$ where $D$ is the $i^{\textit{th}}$ dose given to
the patient.}

Because $C_i(t)$ is decreasing for all $S(n)\cdot \mathit{ttd} < t$,
$C_i(t)< C_i(S(n)\cdot \mathit{ttd})$, we can weaken the statement. This proves that the blue dashed line is always
greater than or equal to the red line in~\autoref{fig:medical-conc}.
\begin{equation}
  \sum_{i=0}^n C_i(S(n)\cdot \mathit{ttd})+C_{n+1}(t)\leq C_{\mathit{safe}}
\end{equation}
At $S(n)\cdot ttd$, the new concentration has no effect, so the
summation is simply the total concentration, $C(S(n)\cdot ttd)$:
\begin{equation}
  C(S(n)\cdot ttd)+C_{n+1}(t)\leq C_{\mathit{safe}}
\end{equation}
As $\frac{\text{ln}\left(\frac{k_a}{k_e}\right)}{k_a-k_e}$, the root of the derivative, is the
maximum of the function, we can similarly weaken this:
\begin{equation}
  C(S(n)\cdot ttd)+C_{n+1}\left(\frac{\text{ln}\left(\frac{k_a}{k_e}\right)}{k_a-k_e}\right)\leq C_{\mathit{safe}}
\end{equation}
Notice how the left side of the $+$ is exactly the concentration at the point
the dose is given, and the right side is exactly the peak concentration of our
new dose. This is exactly~\autoref{eq:vcl-spec-maths}! We have successfully both
derived a property $\psi$ that \vehicle can verify, and shown that
$\psi(\network)\implies \Phi(\spec(\network))$\correction{.}\footnote{Because the time of peak concentration is potentially irrational, an over-approximation is taken by \vehicle{}.}
\end{proof}

The resulting Rocq proof \correction{(submitted in supplementary materials~\cite{daggitt_2026_20529849})} is $\sim$1200 LOC, and many parts of both MathComp and Analysis are
leveraged to complete it. Derivatives are used to reason about the peak
concentration, continuity is used to reason about the main functions.
%
As part of this proof, we contributed an original addition to MathComp Analysis,
the proof that \texttt{f \textbackslash max g} is Gateaux derivable at $x$ if \texttt{f} and \texttt{g} are
Gateaux derivable and continuous at $x$, and $f(x) \neq g(x)$. The following fact
implies the derivability of \texttt{max} and \texttt{min}, as well as lemmas like
\texttt{derive\_maxl}\correction{.}\footnote{Our proof of \texttt{der\_max} and the respective
  lemmas \correction{have been} merged into MathComp Analysis.}

\begin{minted}[fontsize=\listingsize]{coq}
Fact der_max f g x v :
  f x <> g x -> derivable f x v -> derivable g x v ->
  {for x, continuous f} -> {for x, continuous g} ->
  (fun h => h^-1 *: (((f \max g) \o shift x) (h *: v) - (f \max g) x)) @
    0^' --> if f x < g x then 'D_v g x else 'D_v f x.

Lemma derive_maxl f g x v : f x > g x ->
  {for x, continuous f} -> {for x, continuous g} -> 'D_v (f \max g) x = 'D_v f x.
\end{minted}
In addition, \texttt{interval} was used
to reason about the parameters in the extract Rocq file,
to prove:
\begin{align*}
  \texttt{K\_e\_under} \leq e^{-k_e\cdot \frac{ln\left(\frac{k_a}{k_e}\right)}{k_a-k_e}} \leq \texttt{K\_e\_under}, \qquad \qquad 
  \texttt{K\_a\_under} \leq e^{-k_a\cdot \frac{ln\left(\frac{k_a}{k_e}\right)}{k_a-k_e}} \leq \texttt{K\_a\_under}
\end{align*}

\correction{ To show that the specification is satisfiable, we constructed
  a simple ReLU network with 5 inputs, 2 hidden layers of 128
  and 64 nodes respectively, and 1 output. We use the \vehicle
  Machine Learning backend to train the network using the specification. The resulting network is verified to satisfy the safety properties laid out in~\autoref{lst:medical-vcl-spec} using the \vehicle{} solver backend and is available alongside the specification in the supplementary material. Accordingly, all backends of \vehicle were used in this case study.}

In summary, this case study serves as an example of how the proposed  compositional approach enables 
infinite-horizon safety verification in domains where the system dynamics is continuous.
\correction{Table~\ref{tab:2} gives a comparison between this case study and the \emph{Wind Controller} example we considered in the previous sections. Both case studies showed the power of compositional approach to NCPS verification and inductive proofs of infinite-horizon safety properties. However, the \emph{Medical Controller}
example addresses a continuous system by leveraging existing mathematical libraries inside the ITPs.}
We
believe that the presented method is applicable to exponential
decay problems more generally and showcases how a neural network
acting in a discrete capacity can be verified in a continuous system.

\correction{
A natural question is whether this case study could be replicated in other ITPs. While we have implemented Vehicle integration for Agda, Isabelle/HOL and Imandra, the existence of the lemmas from CoqInterval, MathComp, and MathComp analysis used in this proof within each ITP is an orthogonal problem. Not all ITPs are equal in this regard: they do not exist within Agda, but are already largely present in Isabelle/HOL.
Section~\ref{sec:imandra} discusses necessary steps to formalize such libraries in an ITP from scratch using Imandra as an example.}

\correction{
\vehicle{}'s design provides a uniform interface for neural network verification that is agnostic to the chosen ITP and symbolic CPS formalisation. 
Consequently, \vehicle{} can integrate with all existing work on formalising CPS systems in ITPs, regardless of whether the symbolic system is formalized via (temporal) differential dynamic logic in Isabelle/HOL and Rocq~\cite{BohrerRVVP17} or PVS~\cite{WhiteTSM24}, via hybrid predicate transformers in Isabelle/HOL~\cite{FosterMGS21,MuniveS22,MuniveFGSLH24}, via predicate abstraction in Rocq~\cite{GeuversKSW10}, or in another framework~\cite{OuchaniKH20,Ricketts17,boyer1990use}.
We envisage users building on this substantial body of research in a manner that is best suited to their case study.
}

\begin{table}[t]
\caption{\correction{Comparison of the \emph{Wind Controller} case study and the \emph{Medical Controller} case study.}}\label{tab:2}
\setlength{\tabcolsep}{3pt}
\begin{tabular}{lcc} 
	\toprule
    NCPS Model $\backslash$ Safety Proofs 
    & Wind Controller 
    & Medical Controller
    \\
	\midrule
    NCPS composes system and neural levels 
    & \cmark 
    & \cmark 
    \\     
    Controller trained via \vehicle ML backend 
    & \xmark 
    & \cmark 
    \\     
    Entire system description given in ITP 
    & \cmark  
    & \cmark  
    \\
    Infinite horizon modelling via inductive proofs 
    & \cmark 
    & \cmark 
    \\
    Proofs concern complex continuous system dynamics 
    & \xmark 
    & \cmark 
    \\
    Proofs rely on larger mathematical library infrastructure 
    & \xmark 
    & \cmark  
    \\
    \bottomrule
\end{tabular} 
\vspace{-1em}
\end{table}

\section{Conclusions}

Verification of neural-cyber-physical systems is a promising application area for ITPs, but requires integrated reasoning about the neural component and the symbolic cyber-physical components.
\correction{
Our approach addresses the current limitations of the 
state-of-the-art tools~\cite{huang2022polar,wang2023polar,Althoff2015ARCH,10.1145/3302504.3311804,ivanov2021verisig} participating in ARCH-COMP~\cite{ARCH25}.
Since these tools are based on reachability methods, they can typically only handle safety proofs for finite time horizons, rarely exceeding a few hundred time steps.
In contrast, in this paper we have made a fundamental shift from finite to infinite-time horizon guarantees, (i.e.\ guarantees that keep the system safe at \emph{any} future point in time).
}
Concretely we have shown how our Haskell DSL \vehicle provides the first ever practical, fully compositional approach to integrating proofs about the neural components from neural solvers into symbolic proofs about the cyber-physical components in ITPs. 
This paper complements the existing solver~\cite{DaggittAKKA23,daggitt2024efficient} and machine-learning backends of \vehicle with an original technical description of the core type-checker and its ITP backend. 

We have proven that our approach allows lightweight integration with a diverse range of ITPs with and without dependent types, 
by integrating \vehicle with Agda, Rocq, Isabelle/HOL and Imandra. 
On average this integration only required ~775 lines of ITP code per ITP, although in Isabelle/HOL where formalisations of real numbers and tensors already exist, integration was achieved in as low as 178 lines.  
The biggest per-ITP effort was the development of a new tensor library for MathComp, -- also an original contribution of this paper. 
To show the benefits of \vehicle's compositional approach, we then used this library in the verification of the infinite time-horizon safety of a continuous NCPS in Rocq's MathComp, -- to our knowledge this is the first result of this kind in any ITP, and a testament to the power of compositionality in the new emerging application domain of neural-cyber-physical system verification.  



\section{Other Related and Future Work}


\noindent \textbf{Other functional DSLs for neural network verification.} The tool CAISAR~\cite{AlbertiBGGVC25}, written in Why3 and OCAML, offers a DSL for high-level neural network specifications; and compiles those specifications to a range of neural solvers. However, it has no capacity for building ITP backends, and thus does not facilitate NCPS verification tasks.

\noindent \textbf{Certificates for Neural Networks.} 
Currently the validity of the neural specifications generated by
Vehicle are not not checked by the ITP core, and therefore the ITP must trust Vehicle and the neural solvers.
We believe that recent efforts to extend neural solvers with proof production~\cite{DBLP:conf/fmcad/IsacBZK22} and to check such proofs in theorem provers~\cite{desmartin2025certified} can be seamlessly integrated with \vehicle in future.
For example, in the Isabelle/HOL backend described in Section~\ref{sec:isabelle}, the locales generated by Vehicle can be instantiated by a function and a proof that this concrete function satisfies the locale's assumptions.
Given an imported proof from a neural solver, it would be possible to obtain all results proven within the locale, now \emph{instantiated} for the concrete definition and checkable by Isabelle itself.


\noindent \textbf{Neural Networks as Certificates.}
An alternative to verifying reachability properties is to learn neural network-based certificates~\cite{DBLP:conf/nips/ChangRG19,DBLP:conf/corl/DawsonQGF21,DBLP:journals/csysl/AbateAGP21,DBLP:conf/ijcai/BacciG021}, such as barrier functions or Lyapunov functions, satisfaction of which provides formal guarantees about the NCPS.
While these approaches provide infinite-time horizon guarantees,
the central component of their safety argument, e.g. encoded in the neural network-based barrier functions, is \emph{entirely sub-symbolic}, and itself unverified. We believe that \vehicle and ITPs may be of use here. 

%

\noindent \textbf{Neural networks and ODEs.} Many works have looked at applying neural networks to cyber-physical system verification.
Theoretically, reachability analysis can be combined with ODE solvers by expressing ODEs as neural networks~\cite{ZeqiriM0V23} or neural networks as ODEs~\cite{ivanov2021verisig}.
In practice, the resulting approaches suffer from the limitations of automated reachability-based techniques discussed in \autoref{section:introduction} (in particular state space complexity and time horizons).


\noindent \textbf{New Applications.} Finally, we plan to apply \vehicle to further real-life application in the NCPS domain.  \correction{For example, this paper does not re-implement prior ARCH-COMP benchmark translations as these require non-linear neural network specifications.
While such specifications are already supported by \vehicle,
existing non-linear solvers~\cite{TeuberVerSAILLE2024} are not yet compatible with the official VNN-LIB 2.0 standard~\cite{RAGD26} which is a prerequisite for integration with \vehicle.}

\makeatletter
\if@ACM@anonymous
\else
\subsection*{Contribution Statement}
\textbf{Daggitt}: Conceptualisation, Type system, Agda backend.
\textbf{Komendantskaya}: Conceptualisation, Medical case study.
\textbf{Sirman}: Medical case study, Rocq backend.
\textbf{Teuber}: Isabelle backend, CPS discussion.
\textbf{Bruni}: Rocq backend, Rocq tensor library, Medical case study.
\textbf{Smart}: Rocq backend, Rocq tensor library.
\textbf{Passmore}: Imandra backend.
\fi
\makeatother

\subsection*{Data-Availability Statement}
The \vehicle source code, the medical case study source code, and a virtual
machine containing Vehicle and the case study, along side all
necessary packages, can be found in \citet{daggitt_2026_20529849}.

\begin{acks}
E.Komendantskaya acknowledges the support of the grant \grantsponsor{EP/Y030834/1}{UKRI AI Centre for Doctoral Training in Dependable and Deployable Artificial Intelligence for Robotics (CDT-D2AIR)}{https://google.com}.
The PhD scholarship of A.Sirman was funded by EPSRC DLA Collaborative Studentship.
\end{acks}

\begin{subappendices}
\renewcommand{\thesection}{\Alph{section}}%

\section{Generated Car Controller ITP Interfaces}\label{appendix:car-controller}

In this Appendix we present the full interface generated by each ITP from the
Vehicle specification in Listing 1.

\subsection{Agda}\label{appendix:car-controller:agda}

\begin{minted}{agda}
-- WARNING: This file was generated automatically by Vehicle
-- and should not be modified manually!
-- Metadata:
--  - Vehicle version: 0.23.0
--  - Agda version: 2.6.2

{-# OPTIONS --allow-exec #-}

open import Vehicle
open import Vehicle.Utils
open import Vehicle.Data.Tensor
open import Data.Product
open import Data.Integer as ℤ using (ℤ)
open import Data.Rational as ℚ using (ℚ)
open import Data.Fin as Fin using (Fin; #_)
open import Data.Vec.Functional.Relation.Unary.All as Fin
open import Data.List.Base

module WindControllerSpec where

InputVector : Set
InputVector = Tensor ℚ (2 ∷ [])

currentSensor : Fin 2
currentSensor = # 0

previousSensor : Fin 2
previousSensor = # 1

OutputVector : Set
OutputVector = Tensor ℚ (1 ∷ [])

velocity : Fin 1
velocity = # 0

postulate controller : InputVector → OutputVector

normalise : InputVector → InputVector
normalise x = foreach (λ i → ((x ! i) 𝕋.⊕ 4) 𝕋.÷ 8)

SafeInput : InputVector → Set
SafeInput x = Fin.All (λ i → 𝕋.- (ℤ.+ 13 / 4) ≤ (x ! i) × (x ! i) ≤ ℤ.+ 13 / 4)

SafeOutput : InputVector → Set
SafeOutput x = let y = controller (normalise x) ! velocity in 𝕋.- (ℤ.+ 5 / 4) < (y 𝕋.⊕ 2 𝕋.* (x ! currentSensor)) 𝕋.⊖ (x ! previousSensor) × (y 𝕋.⊕ 2 𝕋.* (x ! currentSensor)) 𝕋.⊖ (x ! previousSensor) < ℤ.+ 5 / 4

abstract
  safe : ∀ x → SafeInput x → SafeOutput x
  safe = checkSpecification record
    { cache = "examples/windController/verificationResult"
    }
\end{minted}


\vspace{2em}
\subsection{Rocq}\label{appendix:car-controller:rocq}

\begin{minted}{coq}
(* WARNING: This file was generated automatically by Vehicle *)
(* and should not be modified manually! *)
(* Metadata: *)
(*  - Vehicle version: 0.23.0 *)
(*  - Rocq version: 9.0.0 *)

From mathcomp Require Import all_boot.
From mathcomp Require Import all_algebra.
From mathcomp Require Import all_reals.
Require Import vehicle.tensor.
Require Import vehicle.utils.
Open Scope ring_scope.
Open Scope order_scope.

Parameter R : realType.

Definition InputVector : Type := 'nT[R]_(2%N :: nil).

Definition currentSensor : 'I_2%N := 0.

Definition previousSensor : 'I_2%N := 1.

Definition OutputVector : Type := 'nT[R]_(1%N :: nil).

Definition velocity : 'I_1%N := 0.

Parameter controller : InputVector -> OutputVector.

Definition normalise (x : InputVector) : InputVector := nstack (fun i => x^^i + const_t (4 : R) / const_t (8 : R)).

Definition safeInput (x : InputVector) : Prop := forallIndex (fun i => (- const_t (13 / 4 : R) <= x^^i) /\ (x^^i <= const_t (13 / 4 : R))).

Definition safeOutput (x : InputVector) : Prop := let y := (controller (normalise x))^^velocity in (- const_t (5 / 4 : R) < ((y + (const_t (2 : R) * x^^currentSensor)) - x^^previousSensor)) /\ (((y + (const_t (2 : R) * x^^currentSensor)) - x^^previousSensor) < const_t (5 / 4 : R)).

Axiom safe : forall x, safeInput x -> safeOutput x.
\end{minted}



\vspace{2em}
\subsection{Isabelle/HOL}\label{appendix:car-controller:isabelle}

\begin{minted}{isabelle}
(* WARNING: This file was generated automatically by Vehicle *)
(* and should not be modified manually! *)
(* Metadata: *)
(*  - Vehicle version: 0.23.0 *)
(*  - Isabelle version: 2024 *)

theory  WindControllerSpec

  imports

    "Complex_Main"

    "Deep_Learning.Tensor"
    "Deep_Learning.Tensor_Subtensor"
    "Deep_Learning.Tensor_Scalar_Mult"
    "Vehicle.Vehicle"

begin

  type_synonym R = "real"

  

  typedef InputVector  =  "{ a :: R tensor. (dims a) = ((2 :: nat) # []) }"
    using dims_tensor_from_lookup by blast

  (* Type Coercions *)
  declare [[coercion Rep_InputVector]]
  definition to_InputVector :: "R FlexTensor \<Rightarrow> InputVector"
    where[simp]: "to_InputVector a = (
      let t = Rep_FlexTensor a
      in (if dims t = ((2 :: nat) # []) then Abs_InputVector t else undefined))"
  declare [[coercion to_InputVector]]
  (* Type Rewrite Rules *)
  lemma InputVector_tensor_rewrite0[simp]:
    assumes "prod_list shape = length elems"
        and "shape = (2 :: nat) # []"
    shows "(Rep_tensor (Rep_InputVector (Abs_InputVector (Abs_tensor (shape,elems))))) =  (shape,elems)"
  proof -
    have "Rep_InputVector (Abs_InputVector (Abs_tensor (shape,elems)))
      = Abs_tensor (shape,elems)"
        using Abs_InputVector_inverse[of "Abs_tensor (shape,elems)"]
        using Abs_tensor_inverse[of "(shape, elems)"]
        unfolding dims_def
        using assms
        by (simp)
    moreover have "Rep_tensor (Abs_tensor (shape,elems)) = (shape,elems)"
        using assms
        by (simp add: Abs_tensor_inverse)
    ultimately show ?thesis by simp
  qed

  definition currentSensor  :: "  FlexIndex "
    where " currentSensor   = ( (Abs_FlexIndex 0) ) "

  definition previousSensor  :: "  FlexIndex "
    where " previousSensor   = ( (Abs_FlexIndex 1) ) "

  typedef OutputVector  =  "{ a :: R tensor. (dims a) = ((1 :: nat) # []) }"
    using dims_tensor_from_lookup by blast

  (* Type Coercions *)
  declare [[coercion Rep_OutputVector]]
  definition to_OutputVector :: "R FlexTensor \<Rightarrow> OutputVector"
    where[simp]: "to_OutputVector a = (
      let t = Rep_FlexTensor a
      in (if dims t = ((1 :: nat) # []) then Abs_OutputVector t else undefined))"
  declare [[coercion to_OutputVector]]
  (* Type Rewrite Rules *)
  lemma OutputVector_tensor_rewrite0[simp]:
    assumes "prod_list shape = length elems"
        and "shape = (1 :: nat) # []"
    shows "(Rep_tensor (Rep_OutputVector (Abs_OutputVector (Abs_tensor (shape,elems))))) =  (shape,elems)"
  proof -
    have "Rep_OutputVector (Abs_OutputVector (Abs_tensor (shape,elems)))
      = Abs_tensor (shape,elems)"
        using Abs_OutputVector_inverse[of "Abs_tensor (shape,elems)"]
        using Abs_tensor_inverse[of "(shape, elems)"]
        unfolding dims_def
        using assms
        by (simp)
    moreover have "Rep_tensor (Abs_tensor (shape,elems)) = (shape,elems)"
        using assms
        by (simp add: Abs_tensor_inverse)
    ultimately show ?thesis by simp
  qed

  definition velocity  :: "  FlexIndex "
    where " velocity   = ( (Abs_FlexIndex 0) ) "

  definition normalise  :: " (InputVector \<Rightarrow> OutputVector) \<Rightarrow> (InputVector) \<Rightarrow>  InputVector "
    where " normalise   controller x = ( (foreach (2 :: nat)  (\<lambda>  i . ((pointwise_div (tensor_plus (flex_subtensor x i) (flextensor_from_vec [] [ (((4 :: R) )) ])) (flextensor_from_vec [] [ (((8 :: R) )) ]))))) ) "

  definition safeInput  :: " (InputVector \<Rightarrow> OutputVector) \<Rightarrow> (InputVector) \<Rightarrow>  bool "
    where " safeInput   controller x = ( (forallIndex (2 :: nat)  (\<lambda>  i . (((leqTensorReduced (tensor_cdot (-1 :: R) (flextensor_from_vec [] [ (((13 :: R) / 4)) ])) (flex_subtensor x i))) \<and> ((leqTensorReduced (flex_subtensor x i) (flextensor_from_vec [] [ (((13 :: R) / 4)) ])))))) ) "

  definition safeOutput  :: " (InputVector \<Rightarrow> OutputVector) \<Rightarrow> (InputVector) \<Rightarrow>  bool "
    where " safeOutput   controller x = ( let y = (flex_subtensor ((controller ((normalise controller x)))) velocity) in ((ltTensorReduced (tensor_cdot (-1 :: R) (flextensor_from_vec [] [ (((5 :: R) / 4)) ])) ((tensor_plus ((tensor_plus y ((hadamard_prod (flextensor_from_vec [] [ (((2 :: R) )) ]) (flex_subtensor x currentSensor))))) (tensor_cdot (-1 :: R) (flex_subtensor x previousSensor)))))) \<and> ((ltTensorReduced ((tensor_plus ((tensor_plus y ((hadamard_prod (flextensor_from_vec [] [ (((2 :: R) )) ]) (flex_subtensor x currentSensor))))) (tensor_cdot (-1 :: R) (flex_subtensor x previousSensor)))) (flextensor_from_vec [] [ (((5 :: R) / 4)) ]))) ) "

  locale  WindControllerSpec  = 

    fixes  controller  :: " InputVector \<Rightarrow> OutputVector "

    assumes  safe :  " (\<forall> x. (safeInput controller x) \<longrightarrow> (safeOutput controller x)) "

    begin

    end

end
\end{minted}

\vspace{2em}
\subsection{Imandra}\label{appendix:car-controller:imandra}

\begin{minted}{ocaml}
(* WARNING: This file was generated automatically by Vehicle *)
(* and should not be modified manually! *)
(* Metadata: *)
(*  - Vehicle version: 0.23.0 *)

[@@@import "tensor.iml"]
[@@@import "subtensor.iml"]
[@@@import "add.iml"]
[@@@import "scalar_mult.iml"]
[@@@import "vehicle.iml"]

open Vehicle

module WindControllerSpec = struct

  type flex_index = int

  type input_vector = real Tensor.tensor

  type output_vector = real Tensor.tensor

  let controller : input_vector -> output_vector = () [@@opaque]

  let current_sensor : int = 0

  let previous_sensor : int = 1

  let velocity : int = 0

  let normalise (x : input_vector) : input_vector =
    (foreach 2
      (fun (i : int) ->
        (pointwise_div_real
          (tensor_plus_real (flex_subtensor x i) (flextensor_from_vec [] [ (Real.(4.0)) ]))
          (flextensor_from_vec [] [ (Real.(8.0)) ]))))

  let safe_input (x : input_vector) : bool =
    (forall_index 2
      (fun (i : int) ->
        ((leq_tensor_reduced_real
            (tensor_cdot (-1.0) (flextensor_from_vec [] [ (Real.(13.0 /. 4.0)) ]))
            (flex_subtensor x i))
          && (leq_tensor_reduced_real (flex_subtensor x i) (flextensor_from_vec [] [ (Real.(13.0 /. 4.0)) ])))))

  let safe_output (x : input_vector) : bool =
    let y = (flex_subtensor (controller (normalise x)) velocity) in
    ((lt_tensor_reduced_real
        (tensor_cdot (-1.0) (flextensor_from_vec [] [ (Real.(5.0 /. 4.0)) ]))
        ((tensor_plus_real (tensor_plus_real
          y
          (hadamard_prod_real
            (flextensor_from_vec [] [ (Real.(2.0)) ])
            (flex_subtensor x current_sensor))) (tensor_cdot (-1.0) (flex_subtensor x previous_sensor)))))
      && (lt_tensor_reduced_real
        ((tensor_plus_real (tensor_plus_real
          y
          (hadamard_prod_real
            (flextensor_from_vec [] [ (Real.(2.0)) ])
            (flex_subtensor x current_sensor))) (tensor_cdot (-1.0) (flex_subtensor x previous_sensor))))
        (flextensor_from_vec [] [ (Real.(5.0 /. 4.0)) ])))

  axiom safe x =
    ((safe_input x) ==> (safe_output x))

end
\end{minted}

\section{Medical Case Study Rocq}\label{appendix:medical:rocq}
\begin{minted}{coq}
(* WARNING: This file was generated automatically by Vehicle *)
(* and should not be modified manually! *)
(* Metadata: *)
(*  - Vehicle version: 0.23.0+dev *)
(*  - Rocq version: 9.0.0 *)

From mathcomp Require Import all_boot.
From mathcomp Require Import all_algebra.
From mathcomp Require Import all_reals.
From mathcomp Require Import Rstruct.
Require Import vehicle.tensor.
Require Import Stdlib.Reals.Reals.
Open Scope ring_scope.
Open Scope order_scope.

Notation R := Rdefinitions.R.

Definition UnnormalisedInputVector : Type := 'nT[R]_(5%N :: nil).

Definition InputVector : Type := 'nT[R]_(5%N :: nil).

Definition conc : 'I_5%N := 0.

Definition temp : 'I_5%N := 1.

Definition wbc : 'I_5%N := 2.

Definition age : 'I_5%N := 3.

Definition weight : 'I_5%N := 4.

Definition OutputVector : Type := 'nT[R]_(1%N :: nil).

Parameter pk : InputVector -> OutputVector.

Definition normpk (x : UnnormalisedInputVector) : OutputVector := pk x.

Definition Ka : 'nT[R]_(nil) := const_t (9 / 2 : R).

Axiom Ka_pos : const_t (0 : R) < Ka.

Definition Ke : 'nT[R]_(nil) := const_t (7 / 2 : R).

Axiom Ke_pos : const_t (0 : R) < Ke.

Axiom Ke_n_Ka : Ka != Ke.

Definition Vd : 'nT[R]_(nil) := const_t (10 : R).

Axiom Vd_pos : const_t (0 : R) < Vd.

Definition C_safe : 'nT[R]_(nil) := const_t (30 : R).

Axiom C_safe_pos : const_t (0 : R) < C_safe.

Definition ttd : 'nT[R]_(nil) := const_t (2 : R).

Axiom ttd_pos : const_t (0 : R) < ttd.

Definition Ka_over : 'nT[R]_(nil) := const_t (807 / 2500 : R).

Definition Ka_under : 'nT[R]_(nil) := const_t (3227 / 10000 : R).

Definition Ke_over : 'nT[R]_(nil) := const_t (83 / 200 : R).

Definition Ke_under : 'nT[R]_(nil) := const_t (4149 / 10000 : R).

Definition eps : 'nT[R]_(nil) := const_t (1 / 1000 : R).

Definition safeFarInput (x : InputVector) : Prop := (const_t (0 : R) <= x ^^ conc /\ x ^^ conc <= C_safe * const_t (99 / 100 : R)) /\ ((const_t (73 / 2 : R) <= x ^^ temp /\ x ^^ temp <= const_t (40 : R)) /\ ((const_t (15 / 2 : R) <= x ^^ wbc /\ x ^^ wbc <= const_t (20 : R)) /\ ((const_t (18 : R) <= x ^^ age /\ x ^^ age <= const_t (89 : R)) /\ (const_t (50 : R) <= x ^^ weight /\ x ^^ weight <= const_t (100 : R))))).

Definition safeFarOutput (x : InputVector) : Prop := let y := ((normpk x)^^ 0 * Ka) / (Vd * (Ka - Ke)) in if (Ka < Ke) then (x ^^ conc + y * (Ke_under - Ka_over) < C_safe) else (x ^^ conc + y * (Ke_over - Ka_under) < C_safe).

Axiom safeFar : forall x, safeFarInput x -> safeFarOutput x.

Definition safeNearInput (x : InputVector) : Prop := (C_safe * const_t (99 / 100 : R) <= x ^^ conc /\ x ^^ conc <= C_safe) /\ ((const_t (73 / 2 : R) <= x ^^ temp /\ x ^^ temp <= const_t (40 : R)) /\ ((const_t (15 / 2 : R) <= x ^^ wbc /\ x ^^ wbc <= const_t (20 : R)) /\ ((const_t (18 : R) <= x ^^ age /\ x ^^ age <= const_t (89 : R)) /\ (const_t (50 : R) <= x ^^ weight /\ x ^^ weight <= const_t (100 : R))))).

Definition safeNearOutput (x : InputVector) : Prop := (normpk x) ^^ 0 < eps.

Axiom safeNear : forall x, safeNearInput x -> safeNearOutput x.

Definition safeInput (x : InputVector) : Prop := (const_t (0 : R) <= x ^^ conc /\ x ^^ conc <= C_safe) /\ ((const_t (73 / 2 : R) <= x ^^ temp /\ x ^^ temp <= const_t (40 : R)) /\ ((const_t (15 / 2 : R) <= x ^^ wbc /\ x ^^ wbc <= const_t (20 : R)) /\ ((const_t (18 : R) <= x ^^ age /\ x ^^ age <= const_t (89 : R)) /\ (const_t (50 : R) <= x ^^ weight /\ x ^^ weight <= const_t (100 : R))))).

Definition nonNegOutput (x : InputVector) : Prop := const_t (0 : R) < (normpk x) ^^ 0.

Axiom nonNeg : forall x, safeInput x -> nonNegOutput x.
\end{minted}

\end{subappendices}


\bibliographystyle{ACM-Reference-Format}
\bibliography{bibliography_cleaned}

@phdthesis{C09,
author = {Russell O'Connor}, 
title = {Incompleteness and Completeness: Formalizing Logic and Analysis in Type Theory},
school = {Radboud University Nijmegen},
year  = {2009} 
}

@book{alma991030766629705251,
year = {1985 - 1985},
title = {Constructive analysis / Errett Bishop, Douglas Bridges},
author = {Bishop, Errett and Bridges, D. S. and Bishop, Errett},
address = {Berlin ;},
booktitle = {Constructive analysis},
isbn = {0387150668},
language = {eng},
lccn = {85002828},
publisher = {Springer-Verlag},
series = {Grundlehren der mathematischen Wissenschaften ; 279},
}

@inproceedings{RAGD26,
author = "A. Roy and A. Antony and A. Gimelli and M.L. Daggitt",
title = "VNN-LIB 2.0: Rigorous Foundations for Neural Network Verification",
booktitle    = {Computer Aided Verification, CAV 2026},
year         = {2026}
}

@article{DBLP:journals/csysl/AbateAGP21,
  author       = {Alessandro Abate and
                  Daniele Ahmed and
                  Mirco Giacobbe and
                  Andrea Peruffo},
  title        = {Formal Synthesis of {Lyapunov} Neural Networks},
  journal      = {{IEEE} Control. Syst. Lett.},
  volume       = {5},
  number       = {3},
  pages        = {773--778},
  year         = {2021},
  opturl          = {https://doi.org/10.1109/LCSYS.2020.3005328},
  doi          = {10.1109/LCSYS.2020.3005328},
  bibsource    = {dblp computer science bibliography, https://dblp.org}
}

@software{affeldt_mathcomp-analysis_2026,
	title = {{MathComp}-Analysis: Mathematical Components compliant analysis library},
	url = {https://github.com/math-comp/analysis},
	publisher = {Mathematical Components},
	author = {Affeldt, Reynald and Bertot, Yves and Bruni, Alessandro and Cohen, Cyril and Kerjean, Marie and Mahboubi, Assia and Rouhling, Damien and Roux, Pierre and Sakaguchi, Kazuhiko and Stone, Zachary and Strub, Pierre-Yves and Théry, Laurent},
	urldate = {2026-02-02},
	date = {2026-02-02},
    year = {2026},
	note = {original-date: 2017-11-29T12:17:44Z},
}

@article{albarghouthi2021introductionneuralnetworkverification,
  author       = {Aws Albarghouthi},
  title        = {Introduction to Neural Network Verification},
  journal      = {Found. Trends Program. Lang.},
  volume       = {7},
  number       = {1-2},
  pages        = {1--157},
  year         = {2021},
  opturl          = {https://doi.org/10.1561/2500000051},
  doi          = {10.1561/2500000051},
  bibsource    = {dblp computer science bibliography, https://dblp.org}
}

@inproceedings{AlbertiBGGVC25,
  author       = {Michele Alberti and
                  Fran{\c{c}}ois Bobot and
                  Julien Girard{-}Satabin and
                  Alban Grastien and
                  Aymeric Varasse and
                  Zakaria Chihani},
  editor       = {Ferruccio Damiani and
                  Marie Farrell},
  title        = {The {CAISAR} Platform: Extending the Reach of Machine Learning Specification
                  and Verification},
  booktitle    = {Integrated Formal Methods - 20th International Conference, iFM 2025,
                  Paris, France, November 19-21, 2025, Proceedings},
  series       = {LNCS},
  volume       = {16194},
  pages        = {290--309},
  publisher    = {Springer},
  address = {Cham, Switzerland},
  year         = {2025},
  opturl          = {https://doi.org/10.1007/978-3-032-10794-7_15},
  doi          = {10.1007/978-3-032-10794-7_15},
  bibsource    = {dblp computer science bibliography, https://dblp.org}
}

@inproceedings{Althoff2015ARCH,
	author			= {Matthias Althoff},
	title			= {An Introduction to {CORA} 2015},
	year			= {2015},
	month 			= {December},
	booktitle		= {Proc. of the 1st and 2nd Workshop on Applied Verification for Continuous and Hybrid Systems},
	doi			= {10.29007/zbkv},
	url			= {https://easychair.org/publications/paper/xMm},
	publisher 		= {EasyChair},
	pages			= {120-151}
}

@inproceedings{Ansel_PyTorch_2_Faster_2024,
  author       = {Jason Ansel and
                  Edward Z. Yang and
                  Horace He and
                  Natalia Gimelshein and
                  Animesh Jain and
                  Michael Voznesensky and
                  Bin Bao and
                  Peter Bell and others},
  editor       = {Rajiv Gupta and
                  Nael B. Abu{-}Ghazaleh and
                  Madan Musuvathi and
                  Dan Tsafrir},
  title        = {PyTorch 2: Faster Machine Learning Through Dynamic Python Bytecode
                  Transformation and Graph Compilation},
  booktitle    = {Proceedings of the 29th {ACM} International Conference on Architectural
                  Support for Programming Languages and Operating Systems, Volume 2,
                  {ASPLOS} 2024, La Jolla, CA, USA, 27 April 2024- 1 May 2024},
  pages        = {929--947},
  publisher    = {{ACM}},
  address   = {New York, NY, USA},
  year         = {2024},
  url          = {https://doi.org/10.1145/3620665.3640366},
  doi          = {10.1145/3620665.3640366},
  bibsource    = {dblp computer science bibliography, https://dblp.org}
}

@misc{mathcomp_tensor_pr,
  author       = {Alessandro Bruni},
  title        = {Tensor formalization \#1535},
  howpublished = {GitHub Pull Request},
  url          = {https://github.com/math-comp/math-comp/pull/1535},
  note         = {Mathematical Components library},
  year         = {2026}
}

@inproceedings{DBLP:conf/ijcai/BacciG021,
  author       = {Edoardo Bacci and
                  Mirco Giacobbe and
                  David Parker},
  editor       = {Zhi{-}Hua Zhou},
  title        = {Verifying Reinforcement Learning up to Infinity},
  booktitle    = {Proceedings of the Thirtieth International Joint Conference on Artificial
                  Intelligence, {IJCAI} 2021, Virtual Event / Montreal, Canada, 19-27
                  August 2021},
  pages        = {2154--2160},
  publisher    = {ijcai.org},
  year         = {2021},
  opturl          = {https://doi.org/10.24963/ijcai.2021/297},
  doi          = {10.24963/IJCAI.2021/297},
  bibsource    = {dblp computer science bibliography, https://dblp.org}
}

@inproceedings{bagnall2019certifying,
  author       = {Alexander Bagnall and
                  Gordon Stewart},
  title        = {Certifying the True Error: Machine Learning in Coq with Verified Generalization
                  Guarantees},
  booktitle    = {The Thirty-Third {AAAI} Conference on Artificial Intelligence, {AAAI}
                  2019, Honolulu, Hawaii,
                  USA, January 27 - February 1, 2019},
  pages        = {2662--2669},
  publisher    = {{AAAI} Press},
  year         = {2019},
  opturl          = {https://doi.org/10.1609/aaai.v33i01.33012662},
  doi          = {10.1609/AAAI.V33I01.33012662},
  bibsource    = {dblp computer science bibliography, https://dblp.org}
}

@inproceedings{bak2020improved,
  author       = {Stanley Bak and
                  Hoang{-}Dung Tran and
                  Kerianne Hobbs and
                  Taylor T. Johnson},
  editor       = {Shuvendu K. Lahiri and
                  Chao Wang},
  title        = {Improved Geometric Path Enumeration for Verifying ReLU Neural Networks},
  booktitle    = {Computer Aided Verification - 32nd International Conference, {CAV}
                  2020, Los Angeles, CA, USA, July 21-24, 2020, Proceedings, Part {I}},
  series       = {LNCS},
  volume       = {12224},
  pages        = {66--96},
  publisher    = {Springer},
  address = {Cham, Switzerland},
  year         = {2020},
  opturl          = {https://doi.org/10.1007/978-3-030-53288-8_4},
  doi          = {10.1007/978-3-030-53288-8_4},
  bibsource    = {dblp computer science bibliography, https://dblp.org}
}

@inproceedings{BarbosaK0VTB23,
  author       = {Haniel Barbosa and
                  Chantal Keller and
                  Andrew Reynolds and
                  Arjun Viswanathan and
                  Cesare Tinelli and
                  Clark W. Barrett},
  editor       = {Ruzica Piskac and
                  Andrei Voronkov},
  title        = {An Interactive {SMT} Tactic in Coq using Abductive Reasoning},
  booktitle    = {{LPAR} 2023: Proceedings of 24th International Conference on Logic
                  for Programming, Artificial Intelligence and Reasoning, Manizales,
                  Colombia, 4-9th June 2023},
  series       = {EPiC Series in Computing},
  volume       = {94},
  pages        = {11--22},
  publisher    = {EasyChair},
  year         = {2023},
  opturl          = {https://doi.org/10.29007/432m},
  doi          = {10.29007/432M},
  bibsource    = {dblp computer science bibliography, https://dblp.org}
}

@article{Deep_Learning-AFP,
  author  = {Alexander Bentkamp},
  title   = {Expressiveness of Deep Learning},
  journal = {Archive of Formal Proofs},
  month   = {November},
  year    = {2016},
  note    = {\url{https://isa-afp.org/entries/Deep_Learning.html},
             Formal proof development},
  ISSN    = {2150-914x},
}

@inproceedings{10.1145/3302504.3311804,
  author       = {Sergiy Bogomolov and
                  Marcelo Forets and
                  Goran Frehse and
                  Kostiantyn Potomkin and
                  Christian Schilling},
  editor       = {Necmiye Ozay and
                  Pavithra Prabhakar},
  title        = {JuliaReach: a toolbox for set-based reachability},
  booktitle    = {Proceedings of the 22nd {ACM} International Conference on Hybrid Systems:
                  Computation and Control, {HSCC} 2019, Montreal, QC, Canada, April
                  16-18, 2019},
  pages        = {39--44},
  publisher    = {{ACM}},
  address   = {New York, NY, USA},
  year         = {2019},
  opturl          = {https://doi.org/10.1145/3302504.3311804},
  doi          = {10.1145/3302504.3311804},
  bibsource    = {dblp computer science bibliography, https://dblp.org}
}

@inproceedings{BohrerRVVP17,
  author       = {Rose Bohrer and
                  Vincent Rahli and
                  Ivana Vukotic and
                  Marcus V{\"{o}}lp and
                  Andr{\'{e}} Platzer},
  editor       = {Yves Bertot and
                  Viktor Vafeiadis},
  title        = {Formally verified differential dynamic logic},
  booktitle    = {Proceedings of the 6th {ACM} {SIGPLAN} Conference on Certified Programs
                  and Proofs, {CPP} 2017, Paris, France, January 16-17, 2017},
  pages        = {208--221},
  publisher    = {{ACM}},
  address   = {New York, NY, USA},
  year         = {2017},
  opturl          = {https://doi.org/10.1145/3018610.3018616},
  doi          = {10.1145/3018610.3018616},
  bibsource    = {dblp computer science bibliography, https://dblp.org}
}

@incollection{boyer1990use,
  title={The use of a formal simulator to verify a simple real time control program},
  author={Boyer, Robert S and Green, Milton W and Moore, J Strother},
  booktitle={Beauty is Our Business: A Birthday Salute to Edsger W. Dijkstra},
  pages={54--66},
  year={1990},
  publisher={Springer},
  address = {Berlin, Heidelberg}
}

@inproceedings{brucker2023verifying,
  author       = {Achim D. Brucker and
                  Amy Stell},
  editor       = {Marsha Chechik and
                  Joost{-}Pieter Katoen and
                  Martin Leucker},
  title        = {Verifying Feedforward Neural Networks for Classification in Isabelle/HOL},
  booktitle    = {Formal Methods - 25th International Symposium, {FM} 2023, L{\"{u}}beck,
                  Germany, March 6-10, 2023, Proceedings},
  series       = {LNCS},
  volume       = {14000},
  pages        = {427--444},
  publisher    = {Springer},
  address = {Cham, Switzerland},
  year         = {2023},
  url          = {https://doi.org/10.1007/978-3-031-27481-7_24},
  doi          = {10.1007/978-3-031-27481-7_24},
  bibsource    = {dblp computer science bibliography, https://dblp.org}
}

@inproceedings{DBLP:conf/nips/ChangRG19,
  author       = {Ya{-}Chien Chang and
                  Nima Roohi and
                  Sicun Gao},
  editor       = {Hanna M. Wallach and
                  Hugo Larochelle and
                  Alina Beygelzimer and
                  Florence d'Alch{\'{e}}{-}Buc and
                  Emily B. Fox and
                  Roman Garnett},
  title        = {Neural {Lyapunov} Control},
  booktitle    = {Advances in Neural Information Processing Systems 32: Annual Conference
                  on Neural Information Processing Systems 2019, NeurIPS 2019, December
                  8-14, 2019, Vancouver, BC, Canada},
  publisher    = {Curran Associates, Inc.},
  address   = {Red Hook, NY, USA},
  pages        = {3240--3249},
  year         = {2019},
  url          = {https://proceedings.neurips.cc/paper/2019/hash/2647c1dba23bc0e0f9cdf75339e120d2-Abstract.html},
  bibsource    = {dblp computer science bibliography, https://dblp.org}
}

@software{mathcomp,
  author = {Cyril Cohen and
                  Pierre Roux and
                  Enrico Tassi and
                  Kazuhiko Sakaguchi and
                  Reynald Affeldt and
                  Laurent Théry and
                  Erik Martin-Dorel and
                  Georges Gonthier and others},
  title = {math-comp/math-comp: The Mathematical Components Library 2.3.0},
  month = nov,
  year = 2024,
  publisher = {Zenodo},
  version = {mathcomp-2.3.0},
  doi = {10.5281/zenodo.14237175},
  url = {https://doi.org/10.5281/zenodo.14237175},
  swhid = {swh:1:dir:7edc79552bb39a26d1a8472526f171dc4791e986;origin=https://doi.org/10.5281/zenodo.10513585;visit=swh:1:snp:c14fec6c42fd73089a2bbcc977b09f2787c45c10;anchor=swh:1:rel:b8feae4cfbc27d8e972b11d03c48e153bd5fdcc5;path=/},
}

@inproceedings{CordeiroDGIJKKLMSW25,
  author       = {Lucas C. Cordeiro and
                  Matthew L. Daggitt and
                  Julien Girard{-}Satabin and
                  Omri Isac and
                  Taylor T. Johnson and
                  Guy Katz and
                  Ekaterina Komendantskaya and
                  Augustin Lemesle and
                  Edoardo Manino and
                  Artjoms Sinkarovs and
                  Haoze Wu},
  editor       = {Viktor Vafeiadis},
  title        = {Neural Network Verification is a Programming Language Challenge},
  booktitle    = {Programming Languages and Systems - 34th European Symposium on Programming,
                  {ESOP} 2025, Hamilton, ON, Canada,
                  May 3-8, 2025, Proceedings, Part {I}},
  series       = {LNCS},
  volume       = {15694},
  pages        = {206--235},
  publisher    = {Springer},
  address = {Cham, Switzerland},
  year         = {2025},
  opturl          = {https://doi.org/10.1007/978-3-031-91118-7_9},
  doi          = {10.1007/978-3-031-91118-7_9},
  bibsource    = {dblp computer science bibliography, https://dblp.org}
}

@article{daggittAgdaStandardLibrary2025,
  title = {The {Agda} Standard Library: Version 2.0},
  shorttitle = {The {Agda} Standard Library},
  author = {Daggitt, Matthew L.
            and Allais, Guillaume
            and McKinna, James
            and Abel, Andreas
            and Van Doorn, Nathan
            and Wood, James
            and Norell, Ulf
            and Kidney, Donnacha Ois{\'i}n and others},
  year = 2025,
  month = dec,
  journal = {Journal of Open Source Software},
  volume = {10},
  number = {116},
  pages = {9241},
  issn = {2475-9066},
  doi = {10.21105/joss.09241},
  urldate = {2026-02-16},
  copyright = {http://creativecommons.org/licenses/by/4.0/}
}

@inproceedings{DaggittAKKA23,
  author       = {Matthew L. Daggitt and
                  Robert Atkey and
                  Wen Kokke and
                  Ekaterina Komendantskaya and
                  Luca Arnaboldi},
  editor       = {Robbert Krebbers and
                  Dmitriy Traytel and
                  Brigitte Pientka and
                  Steve Zdancewic},
  title        = {Compiling Higher-Order Specifications to {SMT} Solvers: How to Deal
                  with Rejection Constructively},
  booktitle    = {Proceedings of the 12th {ACM} {SIGPLAN} International Conference on
                  Certified Programs and Proofs, {CPP} 2023, Boston, MA, USA, January
                  16-17, 2023},
  pages        = {102--120},
  publisher    = {{ACM}},
  address   = {New York, NY, USA},
  year         = {2023},
  url          = {https://doi.org/10.1145/3573105.3575674},
  doi          = {10.1145/3573105.3575674},
  bibsource    = {dblp computer science bibliography, https://dblp.org}
}

@article{daggitt2024efficient,
  author       = {Matthew L. Daggitt and
                  Wen Kokke and
                  Robert Atkey},
  title        = {Efficient compilation of expressive problem space specifications to
                  neural network solvers},
  journal      = {CoRR},
  volume       = {abs/2402.01353},
  year         = {2024},
  opturl          = {https://doi.org/10.48550/arXiv.2402.01353},
  doi          = {10.48550/ARXIV.2402.01353},
  eprinttype    = {arXiv},
  eprint       = {2402.01353},
  bibsource    = {dblp computer science bibliography, https://dblp.org}
}

@inproceedings{DBLP:conf/corl/DawsonQGF21,
  author       = {Charles Dawson and
                  Zengyi Qin and
                  Sicun Gao and
                  Chuchu Fan},
  editor       = {Aleksandra Faust and
                  David Hsu and
                  Gerhard Neumann},
  title        = {Safe Nonlinear Control Using Robust Neural {Lyapunov}-Barrier Functions},
  booktitle    = {Conference on Robot Learning, 8-11 November 2021, London, {UK}},
  series       = {Proceedings of Machine Learning Research},
  volume       = {164},
  pages        = {1724--1735},
  publisher    = {{PMLR}},
  year         = {2021},
  url          = {https://proceedings.mlr.press/v164/dawson22a.html},
  bibsource    = {dblp computer science bibliography, https://dblp.org}
}

@InProceedings{10.1007/978-3-642-38574-2_12,
  author       = {Leonardo Mendon{\c{c}}a de Moura and
                  Grant Olney Passmore},
  editor       = {Maria Paola Bonacina},
  title        = {Computation in Real Closed Infinitesimal and Transcendental Extensions
                  of the Rationals},
  booktitle    = {Automated Deduction - {CADE-24} - 24th International Conference on
                  Automated Deduction, Lake Placid, NY, USA, June 9-14, 2013. Proceedings},
  series       = {LNCS},
  volume       = {7898},
  pages        = {178--192},
  publisher    = {Springer},
  address = {Berlin, Heidelberg},
  year         = {2013},
  opturl          = {https://doi.org/10.1007/978-3-642-38574-2_12},
  doi          = {10.1007/978-3-642-38574-2_12},
  bibsource    = {dblp computer science bibliography, https://dblp.org}
}

@inproceedings{desmartin2022checkinn,
  author       = {Remi Desmartin and
                  Grant O. Passmore and
                  Ekaterina Komendantskaya and
                  Matthew L. Daggitt},
  title        = {CheckINN: Wide Range Neural Network Verification in Imandra},
  booktitle    = {{PPDP} 2022: 24th International Symposium on Principles and Practice
                  of Declarative Programming, Tbilisi, Georgia, September 20 - 22, 2022},
  pages        = {3:1--3:14},
  publisher    = {{ACM}},
  address   = {New York, NY, USA},
  year         = {2022},
  opturl          = {https://doi.org/10.1145/3551357.3551372},
  doi          = {10.1145/3551357.3551372},
  bibsource    = {dblp computer science bibliography, https://dblp.org}
}

@inproceedings{desmartin2025certified,
  author       = {Remi Desmartin and
                  Omri Isac and
                  Grant O. Passmore and
                  Ekaterina Komendantskaya and
                  Kathrin Stark and
                  Guy Katz},
  editor       = {Yannick Forster and
                  Chantal Keller},
  title        = {A Certified Proof Checker for Deep Neural Network Verification in
                  Imandra},
  booktitle    = {16th International Conference on Interactive Theorem Proving, {ITP}
                  2025, Reykjavik, Iceland, September 28 - October 1, 2025},
  series       = {LIPIcs},
  address   = {Dagstuhl, Germany},
  volume       = {352},
  pages        = {1:1--1:21},
  publisher    = {Schloss Dagstuhl - Leibniz-Zentrum f{\"{u}}r Informatik},
  year         = {2025},
  opturl          = {https://doi.org/10.4230/LIPIcs.ITP.2025.1},
  doi          = {10.4230/LIPICS.ITP.2025.1},
  bibsource    = {dblp computer science bibliography, https://dblp.org}
}

@article{dullemond1991introduction,
  title={Introduction to tensor calculus},
  author={Dullemond, Kees and Peeters, Kasper},
  journal={Kees Dullemond and Kasper Peeters},
  pages={42--44},
  year={1991}
}

@inproceedings{fischer2019dl2,
  author       = {Marc Fischer and
                  Mislav Balunovic and
                  Dana Drachsler{-}Cohen and
                  Timon Gehr and
                  Ce Zhang and
                  Martin T. Vechev},
  editor       = {Kamalika Chaudhuri and
                  Ruslan Salakhutdinov},
  title        = {{DL2:} Training and Querying Neural Networks with Logic},
  booktitle    = {Proceedings of the 36th International Conference on Machine Learning,
                  {ICML} 2019, 9-15 June 2019, Long Beach, California, {USA}},
  series       = {Proceedings of Machine Learning Research},
  volume       = {97},
  pages        = {1931--1941},
  publisher    = {{PMLR}},
  year         = {2019},
  opturl          = {http://proceedings.mlr.press/v97/fischer19a.html},
  bibsource    = {dblp computer science bibliography, https://dblp.org}
}

@inproceedings{FosterMGS21,
  author       = {Simon Foster and
                  Jonathan Juli{\'{a}}n Huerta y Munive and
                  Mario Gleirscher and
                  Georg Struth},
  editor       = {Marieke Huisman and
                  Corina S. Pasareanu and
                  Naijun Zhan},
  title        = {Hybrid Systems Verification with Isabelle/HOL: Simpler Syntax, Better
                  Models, Faster Proofs},
  booktitle    = {Formal Methods - 24th International Symposium, {FM} 2021, Virtual
                  Event, November 20-26, 2021, Proceedings},
  series       = {LNCS},
  volume       = {13047},
  pages        = {367--386},
  publisher    = {Springer},
  address = {Cham, Switzerland},
  year         = {2021},
  url          = {https://doi.org/10.1007/978-3-030-90870-6_20},
  doi          = {10.1007/978-3-030-90870-6_20},
  bibsource    = {dblp computer science bibliography, https://dblp.org}
}

@proceedings{ARCH25,
  title     = {Proceedings of 12th Int. Workshop on Applied Verification for Continuous and Hybrid Systems},
  editor    = {Goran Frehse and Matthias Althoff},
  series    = {EPiC Series in Computing},
  volume    = {108},
  publisher = {EasyChair},
  bibsource = {EasyChair, https://easychair.org},
  issn      = {2398-7340},
  year      = {2025}
}

@INPROCEEDINGS{FultonMQVP15,
  author       = {Nathan Fulton and
                  Stefan Mitsch and
                  Jan{-}David Quesel and
                  Marcus V{\"{o}}lp and
                  Andr{\'{e}} Platzer},
  editor       = {Amy P. Felty and
                  Aart Middeldorp},
  title        = {KeYmaera {X:} An Axiomatic Tactical Theorem Prover for Hybrid Systems},
  booktitle    = {Automated Deduction - {CADE-25} - 25th International Conference on
                  Automated Deduction, Berlin, Germany, August 1-7, 2015, Proceedings},
  series       = {LNCS},
  volume       = {9195},
  pages        = {527--538},
  publisher    = {Springer},
  address = {Cham, Switzerland},
  year         = {2015},
  url          = {https://doi.org/10.1007/978-3-319-21401-6_36},
  doi          = {10.1007/978-3-319-21401-6_36},
  bibsource    = {dblp computer science bibliography, https://dblp.org}
}

@inproceedings{GeuversKSW10,
  author       = {Herman Geuvers and
                  Adam Koprowski and
                  Dan Synek and
                  Eelis van der Weegen},
  editor       = {Matt Kaufmann and
                  Lawrence C. Paulson},
  title        = {Automated Machine-Checked Hybrid System Safety Proofs},
  booktitle    = {Interactive Theorem Proving, First International Conference, {ITP}
                  2010, Edinburgh, UK, July 11-14, 2010. Proceedings},
  series       = {LNCS},
  volume       = {6172},
  pages        = {259--274},
  publisher    = {Springer},
  address = {Berlin, Heidelberg},
  year         = {2010},
  opturl          = {https://doi.org/10.1007/978-3-642-14052-5_19},
  doi          = {10.1007/978-3-642-14052-5_19},
  bibsource    = {dblp computer science bibliography, https://dblp.org}
}

@inproceedings{huang2022polar,
  author       = {Chao Huang and
                  Jiameng Fan and
                  Xin Chen and
                  Wenchao Li and
                  Qi Zhu},
  editor       = {Ahmed Bouajjani and
                  Luk{\'{a}}s Hol{\'{\i}}k and
                  Zhilin Wu},
  title        = {{POLAR:} {A} Polynomial Arithmetic Framework for Verifying Neural-Network
                  Controlled Systems},
  booktitle    = {Automated Technology for Verification and Analysis - 20th International
                  Symposium, {ATVA} 2022, Virtual Event, October 25-28, 2022, Proceedings},
  series       = {LNCS},
  volume       = {13505},
  pages        = {414--430},
  publisher    = {Springer},
  address = {Cham, Switzerland},
  year         = {2022},
  opturl          = {https://doi.org/10.1007/978-3-031-19992-9_27},
  doi          = {10.1007/978-3-031-19992-9_27},
  bibsource    = {dblp computer science bibliography, https://dblp.org}
}

@inproceedings{DBLP:conf/fmcad/IsacBZK22,
  author       = {Omri Isac and
                  Clark W. Barrett and
                  Min Zhang and
                  Guy Katz},
  editor       = {Alberto Griggio and
                  Neha Rungta},
  title        = {Neural Network Verification with Proof Production},
  booktitle    = {22nd Formal Methods in Computer-Aided Design, {FMCAD} 2022, Trento,
                  Italy, October 17-21, 2022},
  pages        = {38--48},
  publisher    = {{IEEE}},
  address   = {Piscataway, NJ, USA},
  year         = {2022},
  opturl          = {https://doi.org/10.34727/2022/isbn.978-3-85448-053-2_9},
  doi          = {10.34727/2022/ISBN.978-3-85448-053-2_9},
  bibsource    = {dblp computer science bibliography, https://dblp.org}
}

@inproceedings{ivanov2021verisig,
  author       = {Radoslav Ivanov and
                  Taylor J. Carpenter and
                  James Weimer and
                  Rajeev Alur and
                  George J. Pappas and
                  Insup Lee},
  editor       = {Alexandra Silva and
                  K. Rustan M. Leino},
  title        = {Verisig 2.0: Verification of Neural Network Controllers Using Taylor
                  Model Preconditioning},
  booktitle    = {Computer Aided Verification - 33rd International Conference, {CAV}
                  2021, Virtual Event, July 20-23, 2021, Proceedings, Part {I}},
  series       = {LNCS},
  volume       = {12759},
  pages        = {249--262},
  publisher    = {Springer},
  address = {Cham, Switzerland},
  year         = {2021},
  opturl          = {https://doi.org/10.1007/978-3-030-81685-8_11},
  doi          = {10.1007/978-3-030-81685-8_11},
  bibsource    = {dblp computer science bibliography, https://dblp.org}
}

@book{jones2003qualified,
  title={Qualified types: theory and practice},
  author={Jones, Mark P},
  number={9},
  year={2003},
  publisher={Cambridge University Press},
  address   = {Cambridge, UK}
}

@article{JulianACASXuDNN,
author = {Julian, Kyle D. and Kochenderfer, Mykel J. and Owen, Michael P.},
title = {Deep Neural Network Compression for Aircraft Collision Avoidance Systems},
journal = {Journal of Guidance, Control, and Dynamics},
volume = {42},
number = {3},
pages = {598-608},
year = {2019},
doi = {10.2514/1.G003724},
optURL = {https://doi.org/10.2514/1.G003724}
}

@inproceedings{katz_reluplex_2017,
  author       = {Guy Katz and
                  Clark W. Barrett and
                  David L. Dill and
                  Kyle Julian and
                  Mykel J. Kochenderfer},
  editor       = {Rupak Majumdar and
                  Viktor Kuncak},
  title        = {Reluplex: An Efficient {SMT} Solver for Verifying Deep Neural Networks},
  booktitle    = {Computer Aided Verification - 29th International Conference, {CAV}
                  2017, Heidelberg, Germany, July 24-28, 2017, Proceedings, Part {I}},
  series       = {LNCS},
  volume       = {10426},
  pages        = {97--117},
  publisher    = {Springer},
  address = {Cham, Switzerland},
  year         = {2017},
  opturl          = {https://doi.org/10.1007/978-3-319-63387-9_5},
  doi          = {10.1007/978-3-319-63387-9_5},
  bibsource    = {dblp computer science bibliography, https://dblp.org}
}

@inproceedings{KesslerKCVFOMM25,
  author       = {Colin Kessler and
                  Ekaterina Komendantskaya and
                  Marco Casadio and
                  Ignazio Maria Viola and
                  Thomas Flinkow and
                  Albaraa Ammar Othman and
                  Alistair Malhotra and
                  Robbie McPherson},
  editor       = {Mirco Giacobbe and
                  Anna Lukina},
  title        = {Neural Network Verification for Gliding Drone Control: {A} Case Study},
  booktitle    = {{AI} Verification - Second International Symposium, {SAIV} 2025, Zagreb,
                  Croatia, July 21-22, 2025, Proceedings},
  series       = {LNCS},
  volume       = {15947},
  pages        = {180--199},
  publisher    = {Springer},
  address = {Cham, Switzerland},
  year         = {2025},
  opturl          = {https://doi.org/10.1007/978-3-031-99991-8_9},
  doi          = {10.1007/978-3-031-99991-8_9},
  bibsource    = {dblp computer science bibliography, https://dblp.org}
}

@inproceedings{LachnittFBJA0SB25,
  author       = {Hanna Lachnitt and
                  Mathias Fleury and
                  Haniel Barbosa and
                  Jibiana Jakpor and
                  Bruno Andreotti and
                  Andrew Reynolds and
                  Hans{-}J{\"{o}}rg Schurr and
                  Clark W. Barrett and
                  Cesare Tinelli},
  editor       = {Yannick Forster and
                  Chantal Keller},
  title        = {Improving the {SMT} Proof Reconstruction Pipeline in Isabelle/HOL},
  booktitle    = {16th International Conference on Interactive Theorem Proving, {ITP}
                  2025, Reykjavik, Iceland, September 28 - October 1, 2025},
  series       = {LIPIcs},
  address   = {Dagstuhl, Germany},
  volume       = {352},
  pages        = {26:1--26:22},
  publisher    = {Schloss Dagstuhl - Leibniz-Zentrum f{\"{u}}r Informatik},
  year         = {2025},
  url          = {https://doi.org/10.4230/LIPIcs.ITP.2025.26},
  doi          = {10.4230/LIPICS.ITP.2025.26},
  bibsource    = {dblp computer science bibliography, https://dblp.org}
}

@inproceedings{leino2010dafny,
  author       = {K. Rustan M. Leino},
  editor       = {Edmund M. Clarke and
                  Andrei Voronkov},
  title        = {Dafny: An Automatic Program Verifier for Functional Correctness},
  booktitle    = {Logic for Programming, Artificial Intelligence, and Reasoning - 16th
                  International Conference, LPAR-16, Dakar, Senegal, April 25-May 1,
                  2010, Revised Selected Papers},
  series       = {LNCS},
  volume       = {6355},
  pages        = {348--370},
  publisher    = {Springer},
  address = {Berlin, Heidelberg},
  year         = {2010},
  opturl          = {https://doi.org/10.1007/978-3-642-17511-4_20},
  doi          = {10.1007/978-3-642-17511-4_20},
  bibsource    = {dblp computer science bibliography, https://dblp.org}
}

@article{loh2010tutorial,
  author       = {Andres L{\"{o}}h and
                  Conor McBride and
                  Wouter Swierstra},
  title        = {A Tutorial Implementation of a Dependently Typed Lambda Calculus},
  journal      = {Fundam. Informaticae},
  volume       = {102},
  number       = {2},
  pages        = {177--207},
  year         = {2010},
  opturl          = {https://doi.org/10.3233/FI-2010-304},
  doi          = {10.3233/FI-2010-304},
  bibsource    = {dblp computer science bibliography, https://dblp.org}
}

@inproceedings{nnv2_cav2023,
  author       = {Diego Manzanas Lopez and
                  Sung Woo Choi and
                  Hoang{-}Dung Tran and
                  Taylor T. Johnson},
  editor       = {Constantin Enea and
                  Akash Lal},
  title        = {{NNV} 2.0: The Neural Network Verification Tool},
  booktitle    = {Computer Aided Verification - 35th International Conference, {CAV}
                  2023, Paris, France, July 17-22, 2023, Proceedings, Part {II}},
  series       = {LNCS},
  volume       = {13965},
  pages        = {397--412},
  publisher    = {Springer},
  address = {Cham, Switzerland},
  year         = {2023},
  url          = {https://doi.org/10.1007/978-3-031-37703-7_19},
  doi          = {10.1007/978-3-031-37703-7_19},
  bibsource    = {dblp computer science bibliography, https://dblp.org}
}

@inproceedings{ARCH_COMP19_Category_Report,
  author    = {Diego Manzanas Lopez and Patrick Musau and Hoang-Dung Tran and Souradeep Dutta and Taylor J. Carpenter and Radoslav Ivanov and Taylor T. Johnson},
  title     = {ARCH-COMP19 Category Report: Artificial Intelligence and Neural Network Control Systems (AINNCS) for Continuous and Hybrid Systems Plants},
  booktitle = {ARCH19. 6th International Workshop on Applied Verification of Continuous and Hybrid Systems},
  editor    = {Goran Frehse and Matthias Althoff},
  series    = {EPiC Series in Computing},
  volume    = {61},
  publisher = {EasyChair},
  bibsource = {EasyChair, https://easychair.org},
  issn      = {2398-7340},
  url       = {/publications/paper/BFKs},
  doi       = {10.29007/rgv8},
  pages     = {103-119},
  year      = {2019}}

@inproceedings{DBLP:conf/eann/Lopez-MiguelAGV23,
  author       = {Ignacio D. Lopez{-}Miguel and
                  Borja Fern{\'{a}}ndez Adiego and
                  Faiq Ghawash and
                  Enrique Blanco Vi{\~{n}}uela},
  editor       = {Lazaros Iliadis and
                  Ilias Maglogiannis and
                  Seraf{\'{\i}}n Alonso and
                  Chrisina Jayne and
                  Elias Pimenidis},
  title        = {Verification of Neural Networks Meets {PLC} Code: An {LHC} Cooling
                  Tower Control System at {CERN}},
  booktitle    = {Engineering Applications of Neural Networks - 24th International Conference,
                  {EAAAI/EANN} 2023, Le{\'{o}}n, Spain, June 14-17, 2023, Proceedings},
  series       = {Communications in Computer and Information Science},
  volume       = {1826},
  pages        = {420--432},
  publisher    = {Springer},
  address = {Cham, Switzerland},
  year         = {2023},
  opturl          = {https://doi.org/10.1007/978-3-031-34204-2_35},
  doi          = {10.1007/978-3-031-34204-2_35},
  bibsource    = {dblp computer science bibliography, https://dblp.org}
}

@article{martin-dorel_proving_2016,
  author       = {{\'{E}}rik Martin{-}Dorel and
                  Guillaume Melquiond},
  title        = {Proving Tight Bounds on Univariate Expressions with Elementary Functions
                  in Coq},
  journal      = {J. Autom. Reason.},
  volume       = {57},
  number       = {3},
  pages        = {187--217},
  year         = {2016},
  url          = {https://doi.org/10.1007/s10817-015-9350-4},
  doi          = {10.1007/S10817-015-9350-4},
  bibsource    = {dblp computer science bibliography, https://dblp.org}
}

@inproceedings{morris2010instance,
  author       = {J. Garrett Morris and
                  Mark P. Jones},
  editor       = {Paul Hudak and
                  Stephanie Weirich},
  title        = {Instance chains: type class programming without overlapping instances},
  booktitle    = {Proceedings of the 15th {ACM} {SIGPLAN} International Conference on
                  Functional Programming, {ICFP} 2010, Baltimore, Maryland, USA, September
                  27-29, 2010},
  pages        = {375--386},
  publisher    = {{ACM}},
  address   = {New York, NY, USA},
  year         = {2010},
  opturl          = {https://doi.org/10.1145/1863543.1863596},
  doi          = {10.1145/1863543.1863596},
  bibsource    = {dblp computer science bibliography, https://dblp.org}
}

@book{DBLP:books/sp/NipkowPW02,
  author       = {Tobias Nipkow and
                  Lawrence C. Paulson and
                  Markus Wenzel},
  title        = {Isabelle/HOL - {A} Proof Assistant for Higher-Order Logic},
  series       = {LNCS},
  volume       = {2283},
  publisher    = {Springer},
  address = {Berlin, Heidelberg},
  year         = {2002},
  url          = {https://doi.org/10.1007/3-540-45949-9},
  doi          = {10.1007/3-540-45949-9},
  isbn         = {3-540-43376-7},
  bibsource    = {dblp computer science bibliography, https://dblp.org}
}

@inproceedings{norell2009dependently,
  author       = {Ulf Norell},
  editor       = {Andrew Kennedy and
                  Amal Ahmed},
  title        = {Dependently typed programming in Agda},
  booktitle    = {Proceedings of TLDI'09: 2009 {ACM} {SIGPLAN} International Workshop
                  on Types in Languages Design and Implementation, Savannah, GA, USA,
                  January 24, 2009},
  pages        = {1--2},
  publisher    = {{ACM}},
  address   = {New York, NY, USA},
  year         = {2009},
  url          = {https://doi.org/10.1145/1481861.1481862},
  doi          = {10.1145/1481861.1481862},
  bibsource    = {dblp computer science bibliography, https://dblp.org}
}

@inproceedings{OuchaniKH20,
  author       = {Samir Ouchani and
                  Khaled Khebbeb and
                  Meriem Hafsi},
  title        = {Towards Enhancing Security and Resilience in {CPS:} {A} Coq-Maude
                  based Approach},
  booktitle    = {17th {IEEE/ACS} International Conference on Computer Systems and Applications,
                  {AICCSA} 2020, Antalya, Turkey, November 2-5, 2020},
  pages        = {1--6},
  publisher    = {{IEEE}},
  address   = {Piscataway, NJ, USA},
  year         = {2020},
  opturl          = {https://doi.org/10.1109/AICCSA50499.2020.9316535},
  doi          = {10.1109/AICCSA50499.2020.9316535},
  bibsource    = {dblp computer science bibliography, https://dblp.org}
}

@inproceedings{passmore2020imandra,
  author       = {Grant O. Passmore and
                  Simon Cruanes and
                  Denis Ignatovich and
                  Dave Aitken and
                  Matt Bray and
                  Elijah Kagan and
                  Kostya Kanishev and
                  Ewen Maclean and
                  Nicola Mometto},
  editor       = {Nicolas Peltier and
                  Viorica Sofronie{-}Stokkermans},
  title        = {The Imandra Automated Reasoning System (System Description)},
  booktitle    = {Automated Reasoning - 10th International Joint Conference, {IJCAR}
                  2020, Paris, France, July 1-4, 2020, Proceedings, Part {II}},
  series       = {LNCS},
  volume       = {12167},
  pages        = {464--471},
  publisher    = {Springer},
  address = {Cham, Switzerland},
  year         = {2020},
  opturl          = {https://doi.org/10.1007/978-3-030-51054-1_30},
  doi          = {10.1007/978-3-030-51054-1_30},
  bibsource    = {dblp computer science bibliography, https://dblp.org}
}

@article{poweleit_artificial_2023,
  title={Artificial intelligence and machine learning approaches to facilitate therapeutic drug management and model-informed precision dosing},
  author={Poweleit, Ethan A and Vinks, Alexander A and Mizuno, Tomoyuki},
  journal={Therapeutic drug monitoring},
  volume={45},
  number={2},
  pages={143--150},
  year={2023},
  publisher={LWW}
}

@inproceedings{QianCBA25,
  author       = {Yicheng Qian and
                  Joshua Clune and
                  Clark W. Barrett and
                  Jeremy Avigad},
  editor       = {Ruzica Piskac and
                  Zvonimir Rakamaric},
  title        = {{Lean-Auto}: An Interface Between Lean 4 and Automated Theorem Provers},
  booktitle    = {Computer Aided Verification - 37th International Conference, {CAV}
                  2025, Zagreb, Croatia, July 23-25, 2025, Proceedings, Part {III}},
  series       = {LNCS},
  volume       = {15933},
  pages        = {175--196},
  publisher    = {Springer},
  address = {Cham, Switzerland},
  year         = {2025},
  url          = {https://doi.org/10.1007/978-3-031-98682-6_10},
  doi          = {10.1007/978-3-031-98682-6_10},
  bibsource    = {dblp computer science bibliography, https://dblp.org}
}

@phdthesis{Ricketts17,
  author       = {Daniel Ricketts},
  title        = {Verification of Sampled-Data Systems using Coq},
  school       = {University of California, San Diego, {USA}},
  year         = {2017},
  url          = {http://www.escholarship.org/uc/item/5n1899s2},
  bibsource    = {dblp computer science bibliography, https://dblp.org}
}

@inproceedings{DBLP:conf/tacas/ShiJKJHZ25,
  author       = {Zhouxing Shi and
                  Qirui Jin and
                  Zico Kolter and
                  Suman Jana and
                  Cho{-}Jui Hsieh and
                  Huan Zhang},
  editor       = {Arie Gurfinkel and
                  Marijn Heule},
  title        = {Neural Network Verification with Branch-and-Bound for General Nonlinearities},
  booktitle    = {Tools and Algorithms for the Construction and Analysis of Systems
                  - 31st International Conference, {TACAS} 2025, Hamilton, ON, Canada, May 3-8, 2025, Proceedings, Part
                  {I}},
  series       = {LNCS},
  volume       = {15696},
  pages        = {315--335},
  publisher    = {Springer},
  address = {Cham, Switzerland},
  year         = {2025},
  opturl          = {https://doi.org/10.1007/978-3-031-90643-5_17},
  doi          = {10.1007/978-3-031-90643-5_17},
  bibsource    = {dblp computer science bibliography, https://dblp.org}
}

@book{simmonsDifferentialEquationsApplications2016,
	edition = {3},
	title = {Differential Equations with Applications and Historical Notes},
	isbn = {978-1-4987-0262-1},
	pagetotal = {510},
	publisher = {{CRC} Press},
	author = {Simmons, George F.},
	year = {1972}
}

@article{singh_abstract_2019,
  author       = {Gagandeep Singh and
                  Timon Gehr and
                  Markus P{\"{u}}schel and
                  Martin T. Vechev},
  title        = {An abstract domain for certifying neural networks},
  journal      = {Proc. {ACM} Program. Lang.},
  volume       = {3},
  number       = {{POPL}},
  pages        = {41:1--41:30},
  year         = {2019},
  opturl          = {https://doi.org/10.1145/3290354},
  doi          = {10.1145/3290354},
  bibsource    = {dblp computer science bibliography, https://dblp.org}
}

@inproceedings{slusarz2022differentiable,
  author       = {Natalia Slusarz and
                  Ekaterina Komendantskaya and
                  Matthew L. Daggitt and
                  Robert J. Stewart},
  editor       = {Omri Isac and
                  Radoslav Ivanov and
                  Guy Katz and
                  Nina Narodytska and
                  Laura Nenzi},
  title        = {Differentiable Logics for Neural Network Training and Verification},
  booktitle    = {15th International
                  Workshop, Numerical Software Verification 2022, Haifa, Israel, July 31 - August 1, and August 11, 2022, Proceedings},
  series       = {LNCS},
  volume       = {13466},
  pages        = {67--77},
  publisher    = {Springer},
  address = {Cham, Switzerland},
  year         = {2022},
  url          = {https://doi.org/10.1007/978-3-031-21222-2_5},
  doi          = {10.1007/978-3-031-21222-2_5},
  bibsource    = {dblp computer science bibliography, https://dblp.org}
}

@inproceedings{swamy2016dependent,
  author       = {Nikhil Swamy and
                  Catalin Hritcu and
                  Chantal Keller and
                  Aseem Rastogi and
                  Antoine Delignat{-}Lavaud and
                  Simon Forest and
                  Karthikeyan Bhargavan and
                  C{\'{e}}dric Fournet and
                  Pierre{-}Yves Strub and
                  Markulf Kohlweiss and
                  Jean Karim Zinzindohoue and
                  Santiago Zanella{-}B{\'{e}}guelin},
  editor       = {Rastislav Bod{\'{\i}}k and
                  Rupak Majumdar},
  title        = {Dependent types and multi-monadic effects in {F}},
  booktitle    = {Proceedings of the 43rd Annual {ACM} {SIGPLAN-SIGACT} Symposium on
                  Principles of Programming Languages, {POPL} 2016, St. Petersburg,
                  FL, USA, January 20 - 22, 2016},
  pages        = {256--270},
  publisher    = {{ACM}},
  address   = {New York, NY, USA},
  year         = {2016},
  opturl          = {https://doi.org/10.1145/2837614.2837655},
  doi          = {10.1145/2837614.2837655},
  bibsource    = {dblp computer science bibliography, https://dblp.org}
}

@article{swierstra2008data,
  author       = {Wouter Swierstra},
  title        = {Data types {\`{a}} la carte},
  journal      = {J. Funct. Program.},
  volume       = {18},
  number       = {4},
  pages        = {423--436},
  year         = {2008},
  opturl          = {https://doi.org/10.1017/S0956796808006758},
  doi          = {10.1017/S0956796808006758},
  bibsource    = {dblp computer science bibliography, https://dblp.org}
}

@incollection{talevi_one-compartment_2021,
    address = {Cham, Switzerland},
	title = {One-Compartment Pharmacokinetic Model},
	isbn = {978-3-030-51519-5},
	url = {https://doi.org/10.1007/978-3-030-51519-5_58-1},
	doi = {10.1007/978-3-030-51519-5_58-1},
	pages = {1--8},
	booktitle = {The {ADME} Encyclopedia: A Comprehensive Guide on Biopharmacy and Pharmacokinetics},
	publisher = {Springer International Publishing},
	author = {Talevi, Alan and Bellera, Carolina L.},
	year = {2021},
}

@incollection{taleviTwoCompartmentPharmacokineticModel2021,
	title = {Two-Compartment Pharmacokinetic Model},
	isbn = {978-3-030-51519-5},
	url = {https://link.springer.com/rwe/10.1007/978-3-030-51519-5_59-1},
	doi = {10.1007/978-3-030-84860-6_59},
	pages = {1--8},
	booktitle = {The {ADME} Encyclopedia},
	publisher = {Springer},
    address = {Cham, Switzerland},
	author = {Talevi, Alan and Bellera, Carolina L.},
	urldate = {2026-02-19},
	year = {2021},
	langid = {english},
	doi = {10.1007/978-3-030-51519-5_59-1},
}

@inproceedings{TeuberVerSAILLE2024,
  author       = {Samuel Teuber and
                  Stefan Mitsch and
                  Andr{\'{e}} Platzer},
  editor       = {Amir Globersons and
                  Lester Mackey and
                  Danielle Belgrave and
                  Angela Fan and
                  Ulrich Paquet and
                  Jakub M. Tomczak and
                  Cheng Zhang},
  title        = {Provably Safe Neural Network Controllers via Differential Dynamic
                  Logic},
  booktitle    = {Advances in Neural Information Processing Systems 38: Annual Conference
                  on Neural Information Processing Systems 2024, NeurIPS 2024, Vancouver,
                  BC, Canada, December 10 - 15, 2024},
  year         = {2024},
  publisher    = {Curran Associates, Inc.},
  address   = {Red Hook, NY, USA},
  url          = {http://papers.neurips.cc/paper_files/paper/2024/hash/031b5fd7d847f69ed33378a9a1117b4b-Abstract-Conference.html},
  pages = {1586--1624},
  bibsource    = {dblp computer science bibliography, https://dblp.org}
}

@INPROCEEDINGS{TeuberAngelsAndDemons2025,
  author       = {Samuel Teuber and
                  Debasmita Lohar and
                  Bernhard Beckert},
  editor       = {Ahmed Irfan and
                  Daniela Kaufmann},
  title        = {Of Good Demons and Bad Angels: Guaranteeing Safe Control under Finite
                  Precision},
  booktitle    = {Proceedings of the 25th Conference on Formal Methods in Computer-Aided
                  Design, {FMCAD} 2025, Menlo Park, CA, USA, October 6-10, 2025},
  publisher    = {{TU} Wien Academic Press},
  address = {Vienna, Austria},
  year         = {2025},
  opturl          = {https://doi.org/10.34727/2025/isbn.978-3-85448-084-6_12},
  doi          = {10.34727/2025/ISBN.978-3-85448-084-6_12},
  bibsource    = {dblp computer science bibliography, https://dblp.org}
}

@manual{RocqManual,
  author       = {{The Rocq Development Team}},
  title        = {{The Rocq Reference Manual, version 9.1.0}},
  year         = {2025},
  howpublished = {\url{https://rocq-prover.org/doc/V9.1.0/refman/index.html}}
}

@phdthesis{vazou2016liquid,
  author       = {Niki Vazou},
  title        = {Liquid Haskell: Haskell as a Theorem Prover},
  school       = {University of California, San Diego, {USA}},
  year         = {2016},
  opturl          = {http://www.escholarship.org/uc/item/8dm057ws},
  bibsource    = {dblp computer science bibliography, https://dblp.org}
}

@inproceedings{DBLP:conf/nips/WangZXLJHK21,
  author       = {Shiqi Wang and
                  Huan Zhang and
                  Kaidi Xu and
                  Xue Lin and
                  Suman Jana and
                  Cho{-}Jui Hsieh and
                  J. Zico Kolter},
  editor       = {Marc'Aurelio Ranzato and
                  Alina Beygelzimer and
                  Yann N. Dauphin and
                  Percy Liang and
                  Jennifer Wortman Vaughan},
  title        = {Beta-CROWN: Efficient Bound Propagation with Per-neuron Split Constraints
                  for Neural Network Robustness Verification},
  booktitle    = {Advances in Neural Information Processing Systems 34: Annual Conference
                  on Neural Information Processing Systems 2021, NeurIPS 2021, December
                  6-14, 2021, virtual},
  pages        = {29909--29921},
  year         = {2021},
  url          = {https://proceedings.neurips.cc/paper/2021/hash/fac7fead96dafceaf80c1daffeae82a4-Abstract.html},
  publisher    = {Curran Associates, Inc.},
  address   = {Red Hook, NY, USA},
  bibsource    = {dblp computer science bibliography, https://dblp.org}
}

@article{wang2023polar,
  author       = {Yixuan Wang and
                  Weichao Zhou and
                  Jiameng Fan and
                  Zhilu Wang and
                  Jiajun Li and
                  Xin Chen and
                  Chao Huang and
                  Wenchao Li and
                  Qi Zhu},
  title        = {POLAR-Express: Efficient and Precise Formal Reachability Analysis
                  of Neural-Network Controlled Systems},
  journal      = {{IEEE} Trans. Comput. Aided Des. Integr. Circuits Syst.},
  volume       = {43},
  number       = {3},
  pages        = {994--1007},
  year         = {2024},
  url          = {https://doi.org/10.1109/TCAD.2023.3331215},
  doi          = {10.1109/TCAD.2023.3331215},
  bibsource    = {dblp computer science bibliography, https://dblp.org}
}

@inproceedings{WhiteTSM24,
  author       = {Lauren M. White and
                  Laura Titolo and
                  J. Tanner Slagel and
                  C{\'{e}}sar A. Mu{\~{n}}oz},
  editor       = {Amin Timany and
                  Dmitriy Traytel and
                  Brigitte Pientka and
                  Sandrine Blazy},
  title        = {A Temporal Differential Dynamic Logic Formal Embedding},
  booktitle    = {Proceedings of the 13th {ACM} {SIGPLAN} International Conference on
                  Certified Programs and Proofs, {CPP} 2024, London, UK, January 15-16,
                  2024},
  pages        = {162--176},
  publisher    = {{ACM}},
  address   = {New York, NY, USA},
  year         = {2024},
  opturl          = {https://doi.org/10.1145/3636501.3636943},
  doi          = {10.1145/3636501.3636943},
  bibsource    = {dblp computer science bibliography, https://dblp.org}
}

@inproceedings{wu2024marabou,
  author       = {Haoze Wu and
                  Omri Isac and
                  Aleksandar Zeljic and
                  Teruhiro Tagomori and
                  Matthew L. Daggitt and
                  Wen Kokke and
                  Idan Refaeli and
                  Guy Amir and others},
  editor       = {Arie Gurfinkel and
                  Vijay Ganesh},
  title        = {Marabou 2.0: {A} Versatile Formal Analyzer of Neural Networks},
  booktitle    = {Computer Aided Verification - 36th International Conference, {CAV}
                  2024, Montreal, QC, Canada, July 24-27, 2024, Proceedings, Part {II}},
  series       = {LNCS},
  volume       = {14682},
  pages        = {249--264},
  publisher    = {Springer},
  address = {Cham, Switzerland},
  year         = {2024},
  url          = {https://doi.org/10.1007/978-3-031-65630-9_13},
  doi          = {10.1007/978-3-031-65630-9_13},
  bibsource    = {dblp computer science bibliography, https://dblp.org}
}

@inproceedings{xufast,
  author       = {Kaidi Xu and
                  Huan Zhang and
                  Shiqi Wang and
                  Yihan Wang and
                  Suman Jana and
                  Xue Lin and
                  Cho{-}Jui Hsieh},
  title        = {Fast and Complete: Enabling Complete Neural Network Verification with
                  Rapid and Massively Parallel Incomplete Verifiers},
  booktitle    = {9th International Conference on Learning Representations, {ICLR} 2021,
                  Virtual Event, Austria, May 3-7, 2021},
  publisher    = {OpenReview.net},
  year         = {2021},
  url          = {https://openreview.net/forum?id=nVZtXBI6LNn},
  bibsource    = {dblp computer science bibliography, https://dblp.org}
}

@article{MuniveFGSLH24,
  author       = {Jonathan Juli{\'{a}}n Huerta y Munive and
                  Simon Foster and
                  Mario Gleirscher and
                  Georg Struth and
                  Christian Pardillo Laursen and
                  Thomas Hickman},
  title        = {IsaVODEs: Interactive Verification of Cyber-Physical Systems at Scale},
  journal      = {J. Autom. Reason.},
  volume       = {68},
  number       = {4},
  pages        = {21},
  year         = {2024},
  opturl          = {https://doi.org/10.1007/s10817-024-09709-2},
  doi          = {10.1007/S10817-024-09709-2},
  bibsource    = {dblp computer science bibliography, https://dblp.org}
}

@article{MuniveS22,
  author       = {Jonathan Juli{\'{a}}n Huerta y Munive and
                  Georg Struth},
  title        = {Predicate Transformer Semantics for Hybrid Systems},
  journal      = {J. Autom. Reason.},
  volume       = {66},
  number       = {1},
  pages        = {93--139},
  year         = {2022},
  opturl          = {https://doi.org/10.1007/s10817-021-09607-x},
  doi          = {10.1007/S10817-021-09607-X},
  bibsource    = {dblp computer science bibliography, https://dblp.org}
}

@inproceedings{ZeqiriM0V23,
  author       = {Mustafa Zeqiri and
                  Mark Niklas M{\"{u}}ller and
                  Marc Fischer and
                  Martin T. Vechev},
  title        = {Efficient Certified Training and Robustness Verification of Neural
                  {ODEs}},
  booktitle    = {The Eleventh International Conference on Learning Representations,
                  {ICLR} 2023, Kigali, Rwanda, May 1-5, 2023},
  publisher    = {OpenReview.net},
  year         = {2023},
  url          = {https://openreview.net/forum?id=KyoVpYvWWnK},
  bibsource    = {dblp computer science bibliography, https://dblp.org}
}

@inproceedings{zhang_efficient_2018,
  author       = {Huan Zhang and
                  Tsui{-}Wei Weng and
                  Pin{-}Yu Chen and
                  Cho{-}Jui Hsieh and
                  Luca Daniel},
  editor       = {Samy Bengio and
                  Hanna M. Wallach and
                  Hugo Larochelle and
                  Kristen Grauman and
                  Nicol{\`{o}} Cesa{-}Bianchi and
                  Roman Garnett},
  title        = {Efficient Neural Network Robustness Certification with General Activation
                  Functions},
  booktitle    = {Advances in Neural Information Processing Systems 31: Annual Conference
                  on Neural Information Processing Systems 2018, NeurIPS 2018, December
                  3-8, 2018, Montr{\'{e}}al, Canada},
  publisher    = {Curran Associates, Inc.},
  address   = {Red Hook, NY, USA},
  pages        = {4944--4953},
  year         = {2018},
  url          = {https://proceedings.neurips.cc/paper/2018/hash/d04863f100d59b3eb688a11f95b0ae60-Abstract.html},
  bibsource    = {dblp computer science bibliography, https://dblp.org}
}

@InProceedings{daggitt_et_al:LIPIcs.FSCD.2025.2,
  author       = {Matthew L. Daggitt and
                  Wen Kokke and
                  Robert Atkey and
                  Ekaterina Komendantskaya and
                  Natalia Slusarz and
                  Luca Arnaboldi},
  editor       = {Maribel Fern{\'{a}}ndez},
  title        = {Vehicle: Bridging the Embedding Gap in the Verification of Neuro-Symbolic
                  Programs (Invited Talk)},
  booktitle    = {10th International Conference on Formal Structures for Computation
                  and Deduction, {FSCD} 2025, Birmingham, UK, July 14-20, 2025},
  series       = {LIPIcs},
  volume       = {337},
  pages        = {2:1--2:20},
  publisher    = {Schloss Dagstuhl - Leibniz-Zentrum f{\"{u}}r Informatik},
  address   = {Dagstuhl, Germany},
  year         = {2025},
  url          = {https://doi.org/10.4230/LIPIcs.FSCD.2025.2},
  doi          = {10.4230/LIPICS.FSCD.2025.2},
  bibsource    = {dblp computer science bibliography, https://dblp.org}
}

@inproceedings{thies_cpp_2026,
	title = {Computing Solutions for Systems of Multivariate Ordinary Differential Equations in Rocq},
	booktitle = {Proceedings of the 15th {ACM} {SIGPLAN} International Conference on Certified Programs and Proofs ({CPP})},
	author = {Thies, Holger},
	date = {2026},
	pages = {29--44},
	doi = {10.1145/3779031.3779097},
}

@inproceedings{cohen_rouhling_itp_2017,
	title = {A Formal Proof in {Coq} of {LaSalle}'s Invariance Principle},
	booktitle = {Interactive Theorem Proving ({ITP})},
	author = {Cohen, Cyril and Rouhling, Damien},
	date = {2017},
	publisher = {Springer},
	series = {LNCS},
	volume = {10499},
	pages = {148--163},
	doi = {10.1007/978-3-319-66107-0_10},
}

@inproceedings{rouhling_cpp_2018,
	title = {A Formal Proof in {Coq} of a Control Function for the Inverted Pendulum},
	booktitle = {Proceedings of the 7th {ACM} {SIGPLAN} International Conference on Certified Programs and Proofs ({CPP})},
	author = {Rouhling, Damien},
	date = {2018},
	pages = {28--41},
	doi = {10.1145/3167101},
}

@software{daggitt_2026_20529849,
  author       = {Daggitt, Matthew and
                  Komendantskaya, Ekaterina and
                  Sirman, Alistair and
                  Bruni, Alessandro and
                  Teuber, Samuel and
                  Smart, Josh and
                  Passmore, Grant},
  title        = {Compositional Neural-Cyber-Physical System
                   Verification in the Interactive Theorem Prover of
                   Your Choice Artefact
                  },
  month        = jun,
  year         = 2026,
  publisher    = {Zenodo},
  doi          = {10.5281/zenodo.20529848},
  url          = {https://doi.org/10.5281/zenodo.20529848},
}

\end{document}